\definecolor{ltblue}{rgb}{0,0.4,0.4}
\definecolor{dkblue}{rgb}{0,0.1,0.6}
\definecolor{dkgreen}{rgb}{0,0.35,0}
\definecolor{dkviolet}{rgb}{0.3,0,0.5}
\definecolor{dkred}{rgb}{0.5,0,0}
\newcommand\ocamlnewstyle{\lstset{%
backgroundcolor = \color{white},%
language=caml,%
flexiblecolumns=false,%
commentstyle=\color{red},%
basicstyle=\ttfamily\tiny,%
escapeinside={/*@}{@*/},%numbers=left,%
numberstyle=\tiny,%
stepnumber=1,%
numbersep=15pt,%
extendedchars=true%,
breaklines=true,
showstringspaces=false,
mathescape=true,
numbers=left,
morekeywords={begin,end,match,try,assert,raise},
moredelim=**[is][\color{dkblue}]{£}{£},
% Comments delimiters, we do turn this off for the manual
%morecomment=[s]{(*}{*)},
%
% Size of tabulations
tabsize=3,
%
% Enables ASCII chars 128 to 255
%extendedchars=false,
%
% Case sensitivity
sensitive=true,
%
% Automatic breaking of long lines
breaklines=false,
%
% Default style fors listings
 basicstyle=\scriptsize,%
 %
% Position of captions is bottom
captionpos=b,
%
% flexible columns
basewidth={2em, 0.5em},
columns=[l]flexible,
%
% Style for (listings') identifiers
identifierstyle={\ttfamily\color{black}},
% Style for declaration keywords
keywordstyle=[1]{\ttfamily\bfseries\color{dkgreen}},
% Style for strings
stringstyle=\ttfamily,
% Style for comments
commentstyle={\ttfamily\itshape\color{dkblue}},
%
%moredelim=**[is][\ttfamily\color{red}]{/&}{&/},
literate=
    {^}{${}^\wedge $}1
    {[]}{$[\,]$}1
    {->}{{$\rightarrow\;\;$}}1
    {++}{{\code{++}}}1
    {"}{{$"$}}1
}
%[keywords,comments,strings]
}
\newcommand\ocamlfile[2][]{{
	\ocamlnewstyle
	\lstinputlisting[#1]{#2}}
}
\newcommand{\tr}{\triangleright}
\newcommand{\multiset}[1]{\{\hspace{-0.1em}| #1 |\hspace{-0.1em}\}}
\newcommand{\eqdef}{\;\stackrel{\text{\scriptsize def}}{=}\;}
\newcommand{\MPST}{{MPST}\xspace} 
\newcommand{\keyword}[1]{\textsf{\upshape\small #1}\xspace}
\newcommand{\partition}[1]{\textsf{\upshape\small minPartition\hspace{-.1em}}_{#1}}
\newcommand{\eNd}{\textsf{\upshape\scriptsize End\xspace}}
\newcommand\pequiv{\mathrel{\rlap{\raisebox{\fontdimen22\textfont2}{$=$}}
  \raisebox{-0.5\fontdimen22\textfont2}{$ = $}}}
\newcommand\wf{\textrm{WF}}
\newcommand\wb{\keyword{wb}}
\newcommand{\less}[1]{\backslash_{#1}}
\newcommand{\dom}{\operatorname{dom}}
\newcommand{\dual}{\overline}
\newcommand{\typeconst}[1]{\keyword{#1}}
\newcommand{\nat}{\typeconst{nat}}
\newcommand{\bool}{\typeconst{bool}}
\newcommand{\intk}{\keyword{int}}
\newcommand{\stringk}{\keyword{str}}
\newcommand{\unit}{\keyword{unit}}
\newcommand{\End}{\typeconst{end}}
\newcommand{\INACT}{\mathbf 0}
\newcommand{\PAR}{\hspace{.125em}\parallel\hspace{.125em}}
\newcommand{\PARI}{\hspace{.125em}\parallel_{i\in I}\hspace{.125em}}
\newcommand{\PIN}[3]{#1?#2(#3).}
\newcommand{\POUT}[3]{#1!#2\langle#3\rangle.}
\newcommand{\AIN}[3]{#1?#2(#3)}
\newcommand{\AOUT}[3]{#1!#2\langle#3\rangle}
\newcommand{\ASYN}[3]{#1@#2\bowtie#3}
\newcommand{\sync}[3]{#1@\pp #2\hspace{-.3em}\bowtie\hspace{-.3em}\pp #3} 
\newcommand{\ifk}{\keyword{if}}
\newcommand{\thenk}{\keyword{then}}
\newcommand{\elsek}{\keyword{else}}
\newcommand{\truek}{\keyword{true}}
\newcommand{\falsek}{\keyword{false}}
\newcommand{\lts}[1]
{ \setbox0=\hbox{$\ {\scriptstyle#1}\ $}
  \setbox1=\hbox{$\rightarrow$}
  \loop\setbox1=\hbox{$-$\kern-0.3em\unhbox1}\ifdim\wd1<\wd0\repeat
  \hbox{$\ \mathop{\box1}\limits^{#1}\ $}
}
\newcommand{\alts}[1]
{ \setbox0=\hbox{$\ {\scriptstyle#1}\ $}
  \setbox1=\hbox{$\rightarrow_a$}
  \loop\setbox1=\hbox{$-$\kern-0.3em\unhbox1}\ifdim\wd1<\wd0\repeat
  \hbox{$\ \mathop{\box1}\limits^{#1}\ $}
}
\newcommand{\dlts}[1]
{ \setbox0=\hbox{$\ {\scriptstyle#1}\ $}
  \setbox1=\hbox{$\rightarrow_d$}
  \loop\setbox1=\hbox{$-$\kern-0.3em\unhbox1}\ifdim\wd1<\wd0\repeat
  \hbox{$\ \mathop{\box1}\limits^{#1}\ $}
}
\newcommand{\nlts}[1]
{ \setbox0=\hbox{$\ {\scriptstyle#1}\ $}
  \setbox1=\hbox{$\nrightarrow$}
  \loop\setbox1=\hbox{$-$\kern-0.3em\unhbox1}\ifdim\wd1<\wd0\repeat
  \hbox{$\ \mathop{\box1}\limits^{#1}\ $}
}
\newcommand{\Lts}{\Longrightarrow}
\newcommand\pp[1]{{\color{blue}\tt #1}}
\newcommand{\ppp}{\pp p}
\newcommand{\ppq}{\pp q}
\newcommand{\ppr}{\pp r}
\newcommand{\pps}{\pp s}
\newcommand{\ppc}{\pp c}
\newcommand{\ppa}{\pp a}
\newcommand{\login}{\keyword{login}}
\newcommand{\pwd}{\keyword{pwd}}
\newcommand{\ssh}{\keyword{ssh}}
\newcommand{\cancel}{\keyword{cancel}}
\newcommand{\auth}{\keyword{auth}}
\newcommand{\quit}{\keyword{quit}}
\newcommand{\fail}{\keyword{fail}}
\newcommand{\alogin}{{\tt login}}
\newcommand{\assh}{{\tt ssh}}
\newcommand{\acancel}{{\tt cancel}}
\newcommand{\myparagraph}[1]{\medskip\noindent\textbf{\emph{#1}.} }
\begin{document}  

\title{Iso-Recursive Multiparty Sessions and their Automated Verification}
\subtitle{Technical Report}
\author{Marco Giunti\orcidID{0000-0002-7582-0308} \and Nobuko Yoshida\orcidID{0000-0002-3925-8557}}
\institute{University of Oxford, UK}
\authorrunning{M. Giunti \and N. Yoshida}

\maketitle 

\begin{abstract}
Most works on session types take an equi-recursive approach
and do not distinguish among a recursive type and its unfolding.
This becomes more important in recent type systems  which do not require global types, also known as
\emph{generalised multiparty session types} (GMST).
In GMST, in order to establish properties as deadlock-freedom, 
the environments which type processes are assumed to satisfy extensional 
properties holding in all infinite sequences.
This is a problem because:
(1) the mechanisation of GMST and equi-recursion in proof assistants 
is utterly complex and eventually requires
co-induction; and 
(2) the implementation of GMST in type checkers relies on model checkers 
for environment verification, and thus the program analysis is not self-contained.

In this paper, 
we overcome these limitations
by providing an \emph{iso-recursive typing system} 
that computes the \emph{behavioural properties of environments}.
The type system relies on a terminating function named \emph{compliance}
that computes all  final redexes of an environment, and determines when these redexes do not contain 
mismatches or deadlocks: \emph{compliant environments cannot go wrong}.
The function is defined theoretically by introducing the novel notions of 
deterministic LTS of environments and of environment closure, 
and can be implemented in mainstream programming languages and compilers.
 We showcase an implementation in OCaml by using exception handling to tackle the inherent non-determinism 
 of synchronisation of branching and selection types. 
We assess that the implementation provides the desired properties,
namely absence of mismatches and of deadlocks in environments,
by resorting to automated 
deductive verification  performed in tools of the OCaml ecosystem relying on Why3.
\end{abstract}

\tableofcontents
\listoffigures
\newpage

\section{Introduction}
\label{sec:intro}
\emph{Session types}~\cite{Honda93,TakeuchiHK94,HondaVK98} are an 
effective method to control the behaviour of software components that run in message-passing 
distributed systems.
\emph{Multiparty session types} (\MPST)~\cite{HondaYC08,HondaYC16} enhance session types by 
providing support for sessions involving multiple participants, thus representing more 
expressive scenarios.
Various theories of \MPST have been deployed in 
programming languages~\cite{YoshidaPL24} allowing
verification of industrial code at compile or run-time~\cite{betty17}.

In most works on session types, recursive types follow an \emph{equi-recursive} 
view~\cite{Pierce02} 
and represent infinite trees that are manipulated {co-inductively}. 
This representation does not have a direct 
counterpart in non-lazy programming languages, which typically resort to 
\emph{iso-recursive} types~\cite{theoryOfObjects96,Pierce02}
 that are 
manipulated {inductively}. 
Moreover, lazy evaluation of predicates on equi-recursive 
trees might not terminate, and is thus not effective for static program analysis.
In practice, \MPST are embedded in non-lazy languages by encoding equi-recursive types;
for instance,~\cite{ImaiNYY20} defines infinite sequence of types as polymorphic lenses~\cite{FosterGMPS07}
 by using OCaml generalised algebraic data types. 

Our proposal to overcome this problem consists in introducing a theory of 
\emph{iso-recursive multiparty session types}
relying on a type system that \emph{computes the deadlock-freedom} of type environments.

Lately, there have been several advances in \MPST 
that can establish deadlock-freedom 
without using global types,
e.g.~\cite{LangeNTY18,ScalasYB19,ScalasY19,BarwellSY022,ampst23,PetersY24,00020DG21,Brun2023}: 
this bottom-up approach is known as \emph{generalised multiparty
session types (GMST)}. 
However, the price to pay in GMST is that environments must satisfy extensional 
predicates requiring that  
a certain property holds for all infinite sequences.
This is a non-integrated feature in GMST, which resort to external tools as model checkers
to assess these predicates.
Moreover, mechanising equi-recursive GMST in proof assistants  is quite complex,
and eventually relies on co-induction~\cite{EkiciY24}.
Specifically, formulations based on GMST
are difficult to implement in programming languages because of the
interplay among 
equi-recursive types and the verification of the semantic properties of environments.
Another possibility is to proceed top-down by using global types 
and ensure deadlock freedom without verifying environments, 
while the analysis' expressiveness is affected by  
projectability~\cite{UdomsrirungruangY25}. 

In this paper, we propose a formal system to compute the deadlock-freedom of type environments
\emph{compositionally} at the typing of parallel processes, and we provide an implementation that is 
automatically verified  by using 
automated deductive tools of the OCaml ecosystem~\cite{PereiraR20,Pereira24,ChargueraudFLP19}
relying on Why3~\cite{FilliatreP13}.

\subsection{Equi-recursive vs Iso-recursive Types: SSH/OAuth2 Example}\label{ex:auth}
We illustrate our methodology using a recursive variant of the  OAuth
2.0 protocol (cf.~\cite{ScalasY19}) which provides support for 
\emph{ssh}~\cite{devlog}.
Let us indicate \emph{send to} and \emph{receive from} participant \pp p 
as the \emph{types} $\pp p!l(S).T$ and $\pp p?l(S).T$, respectively, 
where $l$ is a \emph{label} indicating the nature of the communication, $S$ 
is the \emph{sort} of the payload, and $T$ is the type of the continuation.
Selection among (branching on) different output (input) types is done by
means of the binary operator $+$.
Recursion is provided by the construct $\mu X.T$, which binds the type variable $X$ in~ $T$.
Termination is represented by type $\End$.
\emph{Sorts} describe the types of string, boolean, and unit values.
The session types of the service (\pps), of the client (\ppc), and of the authorisation server 
(\ppa) are: 
\begin{align*}
T_\pps &\eqdef \mu X.(\ppc!\login(\unit).\ppa?\auth(\bool).X +
        \ppc!\cancel(\unit).\End)   
\\ 
T_\ppc &\eqdef \mu X.(\pps?\login(\unit).
(\ppa!\pwd(\stringk).X + \ppa!\ssh(\unit).X)\, +
\\
&\hspace{4.3em}        
\pps?\cancel(\unit).\ppa!\quit(\unit).\End) 
\\ 
T_\ppa &\eqdef \mu X.R_\ppa
\\
R_\ppa &\eqdef\ppc?\pwd(\stringk).\pps!\auth(\bool).X +
\ppc?\ssh(\unit).\pps!\auth(\bool).X +
\ppc?\quit(\unit).\End  
\end{align*}    
The protocol says that the service (\pps) sends to the client (\ppc)
either a request to \verb|login|, or \verb|cancel|; in
the first case, \ppc\ continues by sending the password (\verb|pwd|, carrying a string),
or by sending \verb|ssh|, to \ppa,  who
in turn sends authentication to \pps\ (\verb|auth|, with a boolean, telling whether the client is authorised), 
and the session
restarts; in the second case, \ppc\ sends \verb|quit| to \ppa, and the session ends.

\textbf{A  problem} of equi-recursive GMST, 
e.g.~\cite{ScalasY19},
is that types are defined co-inductively (cf.~\cite{EkiciY24}).
Recursive types can be infinitely folded and unfolded:
for instance, we have the following \emph{equi-recursive equations}:
\begin{align*}
  T_\ppa &= T^*_\ppa = T^{**}_\ppa = \cdots
  \\
  T^*_\ppa &\eqdef 
\ppc?\pwd(\stringk).\pps!\auth(\bool).T_\ppa +
\ppc?\ssh(\unit).\pps!\auth(\bool).T_\ppa + 
\ppc?\quit(\unit).\End
        \\
T^{**}_\ppa &\eqdef 
\ppc?\pwd(\stringk).\pps!\auth(\bool).T^*_\ppa +
\ppc?\ssh(\unit).\pps!\auth(\bool).T^*_\ppa + 
        \ppc?\quit(\unit).\End 
\end{align*}

This is particularly relevant when establishing the properties of the typing system,
e.g. safety~\cite[Definition 4.1]{ScalasY19},
which are based on a notion of \emph{transition of session environments}.
To illustrate, the idea is to interpret \emph{types as processes}, cf.~\cite{Milner89}, 
and consider transitions of  \emph{session environments} mapping participants~\pp p to types $T$.
The environment $\Gamma_1\eqdef\pps \colon \ppc!\login(\unit).\ppa?\auth(\bool).T_\pps +
\ppc!\cancel(\unit).\End$ can fire an output action $\pp \ppc!\login(\unit)$ and reach 
$\pps \colon \ppa?\auth(\bool).$ $T_\pps$, or can fire an output action $\pp \ppc!\cancel(\unit)$
and reach $\pps \colon\End$.
The environment 
$\Gamma_2\eqdef\ppc \colon \pps?\login(\unit).T'_\ppc + \pps?\cancel(\unit).\ppa!\quit(\unit).\End$ 
can fire an input action $\pps?\login(\unit)$ and reach  
$\ppc \colon T'_\ppc$, where $T'_\ppc\eqdef \ppa!\pwd(\stringk).T_\ppc + \ppa!\ssh(\unit).T_\ppc$,
or can fire an input action $\pps?\cancel(\unit)$ and reach $\ppc\colon \ppa!\quit(\unit).\End$.
The environment $\Gamma_1,\Gamma_2$ can fire two synchronisation actions:
(1) $\sync \login s c$, which indicates a synchronisation on the label \login by
the input participant \pps\ and the output participant \ppc, and reach the environment 
$\pps \colon \ppa?\auth(\bool).T_\pps, \ppc \colon T'_\ppc$;
and (2) $\sync\cancel s c $, and reach  the environment
$\pps \colon\End, \ppc\colon \ppa!\quit(\unit).\End$.

In particular, the rule for recursive types in~\cite[Definition 2.8]{ScalasY19} 
states that a recursive type $\mu X.T$ inherits the transitions from its unfolding, 
that is the type $T\{\mu X.T/X\}$. 
For instance, the rule can be instantiated with type $T_\ppa$ as\\[1mm]
\centerline{
$
\inference[\textsc{[$\Gamma$-$\mu$]}]{
\Gamma,\ppa \colon T^*_\ppa\lts\alpha \Gamma' 
}{
\Gamma,\ppa \colon T_\ppa\lts\alpha \Gamma'
}
$
}\\[1mm]
and allows for inferring the following transitions:\\[1mm]
\centerline{
$\begin{array}{l}
\Gamma,\ppa \colon T_\ppa
\lts{\ppc?\assh({\tt unit})}
\Gamma,\ppa \colon \pps!\auth(\bool).T_\ppa\\
\Gamma,\ppa \colon T_\ppa
\lts{\ppc?\assh({\tt unit})}
\Gamma,\ppa \colon \pps!\auth(\bool).T^*_\ppa \ \cdots
\end{array}
$}\\[1mm]
We note that this elegant approach is appropriate for the theory, 
but less suited for mechanising GMST in theorem provers, and for automated verification.

More specifically, the approach introduced in~\cite{ScalasY19} and followed in many subsequent 
papers on GMST, e.g.~\cite{LangeNTY18,ScalasY19,BarwellSY022,ampst23,PetersY24,00020DG21,Brun2023}, 
requires to type check sessions with environments having certain \emph{extensional properties}.
Crucially, such properties  must be established \emph{before typing} by analysing all possible infinite transitions
of session environments.
To illustrate, the paper~\cite{ScalasY19}
provides a companion artefact by using~mCRL2~\cite{GrooteM2014} featuring
$\mu$-calculus formulae that represent the safety and deadlock freedom properties of environments~\cite[Figure 5]{ScalasY19},
which are defined by least and greatest fixed points. 

\myparagraph{Our solution}{In this paper}, we %follow a different approach and 
\emph{compute the semantic properties} of 
the session environment in the rule for \emph{type checking the session composition},
hence achieving \emph{decidable type checking}.
This is possible because our types are iso-recursive and have a finite structure.

In our setting, the types $T_\ppa$, $T^*_\ppa$, and $T^{**}_\ppa$ are all different, 
but isomorphic.
A recursive type $\mu X.T$ can only be used to type-check a recursive process, or a type variable.
To type check an input or output process, we need to unfold $\mu X.T$
by applying the substitution $\mu X.T /X$ to type $T$, denoted as 
$T\{\mu X.T /X\}$.
That is, we have the following \emph{iso-recursive equations}:\\[1mm]
\centerline{
$
\begin{array}{lllll}
T^*_\ppa &= R_\ppa\{T_\ppa / X\} 
&\quad
T^{**}_\ppa &= R_\ppa\{T^*_\ppa / X\}=
R_\ppa\{(R_\ppa\{T_\ppa / X\}) / X\} 
& \cdots
\end{array}
$}\\[1mm]
For instance, consider the authorisation process $Q_\ppa$ below,
whose syntax mirrors the one of its type $T^*_\ppa$, but that 
the payload of input and output are (bound) variables and expressions, respectively,
and that there is a process variable~$\chi$:
\begin{align*}
  Q_\ppa&\eqdef  \PIN \ppc\pwd x \POUT \pps \auth\falsek \chi +
  \PIN \ppc\ssh x \POUT \pps\auth\truek \chi +
  \PIN \ppc\quit x \INACT
\end{align*}
In order to type check the recursive process $\mu \chi.Q_\ppa$ we use the typing judgement:
\begin{mathpar}
\inference[\textsc{T-Rec}]{
\chi\colon T_\ppa\vdash Q_\ppa\colon T^*_\ppa
}{
\emptyset\vdash  \mu \chi.Q_\ppa\colon T_\ppa
}
\end{mathpar}  
Now consider the process obtained by substituting $\chi$ in $Q_\ppa$ with $\mu \chi.Q_\ppa$, 
that is process $P^*_\ppa\eqdef Q_\ppa \{\mu \chi.Q_\ppa/ \chi\}$, 
and the parallel execution of $\ppa\lhd P^*_\ppa$ 
with $\pps\lhd P_\pps$ and $\ppc\lhd P_\ppc$,
where  $P_\pps, P_\ppc$ are recursive processes implementing 
the service $\pps$ typed by $T_\pps$,
and the client $\pp c$ typed by $T_\ppc$, respectively.
This    session should be accepted by the type system, since at runtime it behaves correctly,
independently of the fact that the authorisation service $P^*_\ppa$  
has been \mbox{``unrolled'' once.}

That is, we want to infer the following judgement by using the
 rule for session composition of the typing system for sessions, denoted $\Vdash$:
\begin{mathpar}
\inference[\textsc{T-Ses}]{
\emptyset\vdash P_\pps\colon T_\pps\quad
\emptyset\vdash P_\ppc\colon T_\ppc\quad 
\emptyset\vdash P^*_\ppa\colon T^*_\ppa\\
\Delta = \pps \colon T_\pps, \pp c\colon T_\ppc, \ppa\colon T^*_\ppa \quad 
\keyword{comp}(\Delta)
}
{
\emptyset\Vdash 
\pps\lhd P_\pps \PAR 
\ppc\lhd P_\ppc \PAR
\ppa\lhd P^*_\ppa
\rhd
\Delta
}
\end{mathpar}
The predicate $\keyword{comp}(\Delta)$ establishes \emph{compliance}
by using the  \emph{computable \mbox{function}} $\keyword{comp}$.
The goal is to calculate all possible \emph{final environments} that
are reachable 
from~$\Delta$, and verify that they are not errors.
Intuitively, an environment is final when is stuck, or when it has 
already been encountered, reaching a fixed point.

Since we are interested in mechanising compliance, the calculation should be
achieved by relying on the novel notion of \emph{deterministic transition}, denoted $\dlts{}$, such that
$\Delta \dlts{\alpha_1} \Delta_1$ and 
$\Delta \dlts{\alpha_2} \Delta_2$ imply $\alpha_1=\alpha_2$ and $\Delta_1=\Delta_2$.
The key point is that a deterministic transition system can be encoded as a computable function
that  can be  deployed in type checkers and compilers.
Moreover, the properties of the function can be verified with  automated deductive verification tools 
as Why3~\cite{FilliatreP13}.
In particular, we propose  the idea of \emph{closure of an environment}~$\Delta$:   
the function receives $\Delta$ in input  and returns in output a finite set of final environments 
reachable from $\Delta$ by multiple applications of~$\dlts{}$.

The compliance function decides when in all final environments reached by  transitions starting from $\Delta$,
there is not a \emph{communication mismatch} or a \emph{deadlock}.
A communication mismatch arises when a participant 
\pp p has a single I/O type receiving from/sending to participant \pp q,
and \pp q has a single I/O type receiving from/sending to  participant \pp p,
and one of the following cases arise: 
(i) both~\pp p and \pp q are sending or receiving;
(ii) the intersection among the labels used by \pp p and \pp q is empty;
(iii) \pp p and \pp q agree on a label but disagree on the label's sort.
A deadlock arises when the environment $\Delta$ cannot fire any transition and there is at least a participant
\pp p s.t. $\Delta(\pp p)\ne\End$.

\medskip
To see an example of environment rejected by the compliance function \keyword{comp}, consider
$\Delta''$ below.
An authorisation server  typed by $T''_\ppa$ only allows two subsequent attempts for \verb|ssh| authentication:
after that, it ends.
Conversely, a client typed by $T_\ppc$ performs an infinite number of requests of \verb|ssh| authentication:
for this very reason, a system typed by $\Delta''$ can deadlock and must be rejected.
\begin{align*}
T'_\ppa &\eqdef \mu X. (\ppc?\pwd(\stringk).\pps!\auth(\bool).X +
\ppc?\ssh(\unit).\pps!\auth(\bool).\End +
\\
&\hspace{4.2em}
\ppc?\quit(\unit).\End ) 
\\
T''_\ppa &\eqdef \ppc?\pwd(\stringk).\pps!\auth(\bool).T_\ppa +
\ppc?\ssh(\unit).\pps!\auth(\bool).T'_\ppa +
\ppc?\quit(\unit).\End
\\
\Delta''&\eqdef   
\pps\colon T_\pps, \pp c\colon T_\ppc, \ppa\colon T''_\ppa         
\end{align*}
The closure of $\Delta''$ does return a set of environments  containing the
deadlocked environment $\Delta_{{\tt lock}}$, which depicts the scenario discussed above.
Since  $\Delta_{{\tt lock}}\in\keyword{closure}(\Delta'')$  and 
$\Delta_{{\tt lock}}$ is a deadlock, we have
$\neg\,\keyword{comp}(\Delta'')$:\\[1mm]
\centerline
{$\Delta_{{\tt lock}}\eqdef
 \pps\colon \ppa?\auth(\bool).T_\pps,
 \pp c\colon \ppa!\pwd(\stringk).T_\ppc + \ppa!\ssh(\unit).T_\ppc,
 \ppa\colon \End
 $
}

\myparagraph{Outline}
\textbf{\S~\ref{sec:syntax}} introduces the syntax and semantics of multiparty
sessions.
\textbf{\S~\ref{sec:alg-lts}} presents the non-deterministic labelled transition semantics of
session environments (cf.~\textbf{\S~\ref{sec:lts}}),
and its deterministic counterpart (cf.~\textbf{\S~\ref{sec:dlts}}):
the former is used to define deadlocks and to prove subject reduction;
the latter is used in~\textbf{\S~\ref{sec:closure}} 
to define closure and in turn to mechanise compliance.
\textbf{\S~\ref{sec:typing}} introduces  
the typing system. 
We first analyse the typing rules for processes.
Second, we analyse the rule for typing sessions, which relies on a computable function
calculating compliance that is defined  
in~\textbf{\S~\ref{sec:compliance}}.
Last, in~\textbf{\S~\ref{sec:sr-sketch}} we provide the proof of subject reduction and we state a progress result.
\textbf{\S~\ref{sec:implementation}} is devoted to the automated deductive 
verification of compliance.
We start in \textbf{\S~\ref{sec:code}} 
by  outlining few details of the implementation of the 
closure of deterministic transitions.
\textbf{\S~\ref{sec:verification}} verifies the behavioural specification of the 
implementation in automated deductive verification tools of the OCaml ecosystem
relying on Why3.
\textbf{\S~\ref{sec:discussion}} concludes by presenting related work
and next directions.
The full proofs and omitted definitions can be found
in appendix
and the accompanying artefact can be found at \url{https://doi.org/10.5281/zenodo.14621028}. 

\section{Multiparty Sessions}
\label{sec:syntax}
The syntax of types and processes is in Definition~\ref{fig:syntax}.
We consider \emph{iso-recursive} types of the form $\mu X.T$ where $\mu X.T$ 
and its unfolding are not equal, but isomorphic. 
We stress that types have a \emph{finite representation} rather than abstract infinite trees
(cf.~equi-recursive types).

\begin{definition}[Syntax of types and processes]\label{fig:syntax}
  \begin{align*}
    S &:= \nat \mid \intk \mid \stringk \mid \bool \mid \unit&&\emph{Sorts}
    \\
    T &:= 
    \ppr!l(S).T \mid
    \ppr?l(S).T \mid 
    T + T \mid
    \End\mid
    \mu X.T \mid
    X 
    &&\emph{Types}
    \\
    P &:= 
    \POUT \ppr l e P \mid 
    \PIN \ppr l x P\mid
    P + P\mid
    \mu \chi .P\mid
    \chi
    \mid
    \ifk\; e \; \thenk\; P\; \elsek\; Q\mid \INACT
    &&\emph{Processes}
    \\
    {\cal M} &:= \ppp\lhd P \mid\hspace{.2em} 
    \PARI{\ppp_i\lhd P_i}
    &&\emph{Sessions}
  \end{align*}
  \end{definition}
  
  We require all terms to be contractive, i.e. 
$\mu X_1.\mu X_2. \dots \mu X_n.X_1$ is not allowed as a sub-term for any 
$n\geq 1$~\cite[p.~300]{Pierce02}, which can be alternatively stated as
type variables occur guarded 
(by input or output prefixes)~\cite{DemangeonH11}.\footnote{
Formally, contractiveness is mechanised in Coq~\cite{coq} (cf.~App.~\ref{sec:coq}) 
by relying on 
the reflexive-transitive closure of the transition system of types 
introduced \mbox{in~\S~\ref{sec:alg-lts}.}
}

We use $\ppp, \ppq, \ppr$ to range over  
\emph{participants},
$l$ to range over \emph{labels}, and $i,j$ to range over indexes (natural numbers).
$X, Y$ range over \emph{type variables}, 
$e, e'$ range over \emph{expressions}, $v, w$ range over \emph{values},
$x,y$ range over \emph{variables}, and $\chi$ range over \emph{process variables}.
\emph{Sessions} $\cal M$ belong to the set $\mathbb M$.
A single session or \emph{thread} is a process $P$ indexed by a
participant, denoted $\ppp\rhd P$. A \emph{multiparty session} is a composition of all threads, 
denoted $\PARI\ppp_i\lhd P_i$ or
$\ppp_1\lhd P_1 \PAR \cdots \PAR \ppp_n\lhd P_n$.

The constructor $\mu$ is a \emph{binder} in types and processes, respectively: 
we let $X$ be bound in $\mu X.T$ 
and \emph{free} in $T$; similarly, $\chi$ is bound in $\mu \chi.P$ and free in $P$.
The remaining binder for processes is input: variable $x$ is bound in 
$\PIN \ppr l x P$ and free in~$P$.
\emph{Closed} terms are those without free variables.

We assume the \emph{substitution} of free occurrences of a type variable $X$ in a type $T_1$ with a closed 
type  $T_2$, written $T_1\{T_2/X\}$.
We assume the substitution of free occurrences of a process variable $\chi$ in process $P_1$ 
with a closed process $P_2$, written $P_1\{P_2/\chi\}$, and the substitution of free occurrences of 
variable $x$ in process $P$ with a value $v$, written $P\{v/x\}$.
A type $R$ is $\mu$-\emph{guarded} (\emph{guarded}, for short)  if it is a sub-term of $T$ 
in the definition $\mu X.T$. 

The symbol $=$ is reserved for Leibniz equality.
\begin{definition}[Session notation]\label{def:notation}\\
$\begin{array}{rll}
\oplus_{i\in I} \ppr!l_i(S_i).T_i  &\eqdef 
    \ppr!l_1(S_1).T_1 + \cdots + \ppr!l_n(S_n).T_n &I = (1, \dots,n), n \geq 1
    \\
    \&_{i\in I} \ppr?l_i(S_i).T_i &\eqdef 
    \ppr?l_1(S_1).T_1 + \cdots + \ppr?l_n(S_n).T_n &I = (1, \dots,n), n \geq 1
  \end{array}$
  \end{definition}  
The next step towards the definition of the typing system
is to identify \textit{well-formed} types that correctly abstract multiparty sessions.
The definition is in ~App.~\ref{sec:wf}.
We collect the labels of types in multi-sets, and the polarities and the participants of types in sets.
Intuitively, a sum type $T_1 + T_2$ is \emph{well-behaved} when it has not duplicated labels,
$T_1$ and $T_2$ have the same unique polarity, and the same unique participant. 
These assumptions eliminate ill-types of the form e.g. $\ppp!l(S_1).T_1 
+ \ppp!l(S_2).T_2$ or of the form e.g. 
$\ppp_1?l_1(S_1).T_1 + \ppp_2?l_2(S_2).T_2$ with $\ppp_1\ne \pp \ppp_2$, 
as well as mixed choice types, e.g.~$\ppp!l_1(S_1).T_1 + \ppp?l_2(S_2).T_2$.
A type $T$ is well-formed, denoted $\wf(T)$,  when it is 
well-behaved, contractive, and closed.

\myparagraph{Operational semantics of multiparty sessions}\label{sec:lts-process}
We assume an \emph{evaluation} function $\downarrow$ transforming  expressions $e$ into boolean, 
integer and unit values 
$v$, written $e\downarrow v$. 
The operational semantics of multiparty sessions are defined modulo a \emph{structural congruence} relation over 
sessions $\cal M$, denoted $\pequiv\subseteq \mathbb M \times\mathbb M$.
We let $\pequiv$ be the least reflexive relation that satisfies the axiom\\[1mm]
\centerline{
$
\parallel_{i\in I}\ppp_i\lhd P_i   \pequiv \hspace{.2em}
\parallel_{j \in J} \ppp_j\lhd P_j 
\qquad(\keyword{permutation}(I, J)) 
$}\\[1mm]
The \emph{labelled transition rules} are defined in Figure~\ref{fig:session-semantics};
we just present the left rules.
A \emph{computation} is a sequence of $\alpha$-transitions, $\alpha\in\{\tau,\sync l p q\}$,
 or \emph{reductions}  
${\cal M}_1\lts\alpha{\cal M}_2\lts\alpha\cdots$.
We are mainly interested in analysing computations of well-typed sessions
(cf.~\S~\ref{sec:typing}).

\begin{figure}[t]
  \begin{mathpar}
  \inference[\textsc{R-Inp}]
  {
  }{
  \ppp\lhd \PIN \ppq l x P
  \lts{\AIN \ppq l v } 
  \ppp\lhd P\{v/x\}
  }
  \and
  \inference[\textsc{R-Out}]
  {
  e\downarrow v
  }{
  \ppp\lhd \POUT \ppq l e P 
  \lts{\AOUT \ppq l v} 
  \ppp\lhd P
  }
  \and
  \inference[\textsc{R-Sum-L}]{
  \ppr\lhd P\lts \alpha   \ppr\lhd P'
  }{
  \ppr\lhd P + Q \lts \alpha  \ppr\lhd P'
  }
  \and
  \inference[\textsc{R-Com}]{
    \ppp\lhd P\lts{\AIN \ppq l v} \ppp\lhd P'\qquad
    \ppq\lhd Q\lts{\AOUT \ppp l v} \ppq\lhd Q'
  }{
    \ppp\lhd P\PAR \ppq\lhd Q\PARI \ppr_i\lhd R_i
    \lts{\ASYN l \ppp \ppq}
    \ppp\lhd P'\PAR \ppq\lhd Q'\PARI \ppr_i\lhd R_i 
  }
  \and
  \inference[\textsc{R-Rec}]{
  }{
  \ppr\lhd \mu\chi. P\PARI \ppr_i\lhd R_i
  \lts\tau
  \ppr\lhd P\{\mu\chi.P/\chi\}\PARI \ppr_i\lhd R_i
  }
  \and
  \inference[\textsc{R-IfT}]
  {
    e\downarrow \truek 
  }{
  \ppr\lhd \ifk\; e \; \thenk\; P\; \elsek\; Q\PARI \ppr_i\lhd R_i
  \lts\tau
  \ppr\lhd P\PARI \ppr_i\lhd R_i
  }
 \and
 \inference[\textsc{R-IfF}]
 {
   e\downarrow \falsek
 }{
 \ppr\lhd \ifk\; e \; \thenk\; P\; \elsek\; Q\PARI \ppr_i\lhd R_i
 \lts\tau
 \ppr\lhd Q\PARI \ppr_i\lhd R_i
 }
  \and
  \inference[\textsc{R-Str}]
  {
   {\cal M}'_1\pequiv {\cal M}_1 \qquad
   {\cal M}_1 \lts\alpha {\cal M}_2\qquad
   {\cal M}_2 \pequiv {\cal M}'_2
  }{
  {\cal M}'_1
  \lts\alpha
  {\cal M}'_2
  }
  \end{mathpar}  
  \caption{Labelled transition rules for multiparty sessions}
  \label{fig:session-semantics}
  \end{figure}

Rule \textsc{R-Inp} says that a participant \ppp\ waiting for a value from \ppq\ on
the label~$l$ can do a transition labelled by $\AIN \ppq l v$ and instantiate
the formal parameter~$x$ with the value~$v$ in the continuation $P$, noted as $P\{v/x\}$.
Rule \textsc{R-Out} allows a participant \ppp\ sending to \ppq\  
on label~$l$ an expression~$e$ that can be evaluated as $v$  
to do a transition labelled by $\AOUT \ppq l v$ and continue as $P$.
Non-deterministic reductions are allowed by means of rule \textsc{R-Sum-L},
which says that a participant~\ppr\ non-deterministically choosing among process 
$P$ and $Q$, denoted $P + Q$, can do a transition labelled by $\alpha$ and reach 
$\ppr \lhd P'$ whenever $\ppr \lhd P$ can fire the same transition and reach the same redex.

Communication among two participants \ppp\ and  \ppq \
is performed by means of rule \textsc{R-Com}.
Whenever $\ppp\lhd P$ can do a transition labelled by the input action $\AIN \ppq l v$ and reach the
 redex
$\ppp\lhd P'$, and $\ppq\lhd Q$ can do a transition labelled by the output action 
$\AOUT \ppp l v$ and reach the redex $\ppq\lhd Q'$, we can infer a transition labelled with 
$\sync l p q$ 
from
the composition of $\ppp\lhd P$ and $\ppq\lhd Q$ and a session 
$\PARI\ppr_i\lhd R_i$  to  
the composition 
of $\ppp\lhd P'$ and $\ppq\lhd Q'$ and $\PARI\ppr_i\lhd R_i$.
Rule \textsc{R-Rec} allows a participant \ppr\ recursively defined as $\mu \chi.P$  
and running in parallel with a session $\PARI\ppr_i\lhd R_i$, 
to do an internal transition $\tau$ and unfold the body $P$ while instantiating 
the occurrences of $\chi$ in $P$ with $\mu \chi.P$, thus reaching the redex
$\ppr\lhd P\{\mu \chi.P/\chi\}\PARI \ppr_i\lhd R_i$.
Rule \textsc{R-IfT} (\textsc{R-IfF}) says that a participant 
\ppr\ with the body $\ifk\; e \; \thenk\; P\; \elsek\; Q$ and 
running in parallel with a session $\PARI\ppr_i\lhd R_i$, 
can do a $\tau$-transition and reach the redex 
$\ppr\lhd P\PARI \ppr_i\lhd R_i$
($\ppr\lhd Q\PARI \ppr_i\lhd R_i$) whenever the expression $e$ evaluates to \truek (\falsek).
Rule \textsc{R-Str} 
rearranges processes with structural congruence.

\begin{example}\label{ex:auth-lts}
Consider the authorisation protocol in~\S~\ref{ex:auth} and\\[1mm]
$\begin{array}{rl}
  Q_\ppa&\eqdef  \PIN \ppc\pwd x \POUT \pps \auth\falsek \chi +
 \PIN \ppc \ssh x \POUT \pps\auth\truek \chi +
 \PIN \ppc\quit x \INACT
 \\
 P_\pps &\eqdef \mu \chi. (\POUT \ppc\login{} \PIN\ppa\auth x \chi +
        \POUT \ppc\cancel{} \INACT)  
\\ 
P_\ppc &\eqdef \mu \chi.
(\PIN \pps\login x
(\POUT \ppa\pwd {\text{``fido''}} \chi + 
\POUT \ppa\ssh{} \chi)\, +
\PIN \pps\cancel x \POUT \ppa\quit {} \INACT )
\\
{\cal M}&\eqdef \pps\lhd P_\pps\PAR \pp c\lhd P_\ppc\PAR
\ppa\lhd P^*_\ppa
\end{array}$\\[1mm] 
where process $P^*_\ppa \eqdef Q_\ppa \{\mu \chi.Q_\ppa/ \chi\}$ implements the (unfolding of the) 
authorisation server \ppa, 
and processes  $P_\pps$ and $P_\ppc$ implement   the service \pps\ 
and the client~\ppc, respectively.
We analyse transitions of the session introduced in~\S~\ref{ex:auth} and composing the
service  \pps, the client \pp c, and the server \pp a,  
here referred as~$\cal M$.

We want to analyse a communication of the server \pps\ with the client \ppc\ depicting a 
\verb|login| transaction.
A first application of rule \textsc{R-Rec} unfolds the service~\pps:\\[1mm]
\centerline{
$\begin{array}{rl}
{\cal M}\lts\tau
\pps\lhd P^*_\pps 
 \PAR \pp c\lhd P_\ppc\PAR
        \ppa\lhd P^*_\ppa\eqdef {\cal M}_1
\end{array}
$}
where
$P^*_\pps\eqdef 
\POUT \ppc\login{}
\PIN \ppa\auth x P_\pps +
\POUT \ppc\cancel{} \INACT
$.

The next step consists in unfolding the client \pp c.
Since the client thread does not occur in the left, we need to 
first apply \textsc{R-Rec} and then apply structural congruence in rule 
\textsc{R-Str}:
\[
\begin{array}{rl}
\inference[\textsc{R-Str}]{
  \inference[\textsc{R-Rec}]{}{
   \pp c \lhd P_\ppc\PAR \pps\lhd P^*_\pps \PAR
         \ppa\lhd P^*_\ppa\lts\tau
         \pp c\lhd P^*_\ppc\PAR \pps\lhd P^*_\pps \PAR
         \ppa\lhd P^*_\ppa
  }  
}{
{\cal M}_1\lts\tau   
\pps\lhd P^*_\pps\PAR
\ppc\lhd P^*_\ppc\PAR
\ppa\lhd  P^*_\ppa\eqdef {\cal M}_2
}
\end{array}
\]
where 
$P^*_\ppc \eqdef 
\PIN\pps\login x
(\POUT \ppa\pwd{\text{``fido''}} P_\ppc + 
\POUT \ppa\ssh{} P_\ppc)\,  +
\PIN \pps\cancel x 
\POUT\ppa\quit{} \INACT
$. 
Now we apply rule \textsc{R-Com} to infer a communication among the service \pps\ 
and the client \pp c on the label \verb|login|, followed
by \textsc{R-Str}:

\begin{mathpar}
\inference[\textsc{R-Str}]{
  \inference[\textsc{R-Com}]{
    (A) \qquad (B)
  }
  {
    \hspace{-.3em}\ppc\lhd P^*_\ppc\hspace{-.3em}\PAR\hspace{-.3em}  \pps\lhd P^*_\pps
    \hspace{-.3em}\PAR\hspace{-.3em}
    \ppa\lhd  P^*_\ppa
    \hspace{-.3em}
    \lts{\alogin@\ppc \bowtie \pps}
    \hspace{-.3em}
    \pp c\lhd P'_\ppc\hspace{-.3em}\PAR\hspace{-.3em}
    \pps\lhd P'_\pps\hspace{-.3em}\PAR\hspace{-.3em}
    \ppa\lhd  P^*_\ppa\hspace{-.3em}
  }
  }{
  {\cal M}_2\lts{\alogin@\pp c \bowtie \pps}   {\cal M}_3
  }
  \end{mathpar}
where ${\cal M}_3\eqdef \pps\lhd \PIN \ppa\auth x  P_\pps\PAR
\pp c\lhd P'_\ppc\PAR
\ppa\lhd  P^*_\ppa$,
$P'_\pps\eqdef \PIN \ppa\auth x P_\pps$,
$P'_\ppc\eqdef \POUT \ppa\pwd{\text{``fido''}} P_\ppc + 
\POUT \ppa\ssh{} P_\ppc$,
and
\begin{mathpar}
  \inference[(A) \textsc{R-Sum-L}]{
    \inference[\textsc{R-Inp}]{      
    }{\pps?\login(x).P'_\ppc
    \lts{\AIN \pps\alogin {}}
    P'_\ppc
    }
  }{
    \pp c \lhd P^*_\ppc
    \lts{\AIN \pps\alogin{}} 
    \pp c\lhd P'_\ppc
  }
  \and
  \inference[(B) \textsc{R-Sum-L}]{
    \inference[\textsc{R-Out}]{      
      }{\pps \lhd  \POUT \ppc\login{} P'_\pps 
      \lts{\AOUT \ppc\alogin{}}
      \pps\lhd P'_\pps
      }
    }{
      \pps \lhd P^*_\pps
      \lts{\AOUT \ppc\alogin{}} 
      \pps\lhd P'_\pps
    }  
\end{mathpar} 
As you can see, in session ${\cal M}_3$ the client \ppc\ 
is ready to communicate the \emph{password}, or to send a \verb|ssh| request,  
to the authorisation server \ppa.
\qed
\end{example}

\section{Session Environment Reduction, Algorithmically}
\label{sec:alg-lts}
A central notion of multiparty session types is the interaction among parties.
We model this abstraction by depicting the behaviour of 
\emph{session environments}~$\Delta$ assigning types $T$ 
to participants $\ppp$.

Our aim is to define a \emph{function} that decides at \emph{compile-time} when it is safe
to type-check a group of participants running in parallel and 
willing to communicate with each other.
This is reminiscent of the notion of type duality 
in binary session  types (e.g.~\cite{GayH05,GiuntiV16}), 
but encompasses multiple participants. 
We will use the function in the typing system introduced in~\S~\ref{sec:typing}.

\begin{definition}[Labelled transition system]\label{def:lts}
A labelled transition system (LTS) is a tuple 
$(\tilde A, {\cal S}_1,{\cal A}, {\cal S}_2,\rightarrow)$, noted as 
$\tilde A\triangleright\sigma_1\lts\alpha  \sigma_2$, whenever 
$\sigma_1\in{\cal S}_1$ and $\sigma_2\in{\cal S}_2$ and $\alpha\in\cal A$,
where $\tilde A$ is a (possibly empty) tuple of parameters,
${\cal S}_i$ are set of states, $i=1,2$, 
$\cal A$ is a set of actions,
and $\rightarrow$ is a transition relation s.t. 
$\rightarrow\subseteq \tilde A\times {\cal S}_1\times{\cal A}\times{\cal S}_2$.
A~transition relation $\rightarrow$ is a partial function whenever 
$\tilde A\triangleright\sigma_1\lts{\alpha'}  \sigma'_2$ and
$\tilde A\triangleright\sigma_1\lts{\alpha''}  \sigma''_2$ imply
$\alpha'=\alpha''$ and $\sigma'_2=\sigma''_2$.
A LTS is deterministic whenever its transition relation is a partial function.
\end{definition}  

\subsection{Non-deterministic Transition System}\label{sec:lts}
We first define a non-deterministic LTS of session environments,
and then in~\S~\ref{sec:dlts} we outline its transformation to a deterministic LTS. 
Non-deterministic transitions are used in the notion of deadlock 
(cf.~Definition~\ref{def:deadlock}), and in the proof of subject reduction 
(cf.~\S~\ref{sec:sr-sketch}).
In the non-deterministic setting, the parameters $\widetilde A$ are empty and 
${\cal S}_1={\cal S}_2$.

We start by defining a non-deterministic LTS of 
types. 
Since we will also use the transition system to match the actions of processes, it is practical
to use the same labels of the LTS of Figure~\ref{fig:session-semantics}. 
The left rules for types are in Figure~\ref{fig:lts-types}.
The rules are designed for well-formed types (cf.~\S~\ref{sec:syntax}),
as we discuss below (cf.~rules~\textsc{E-Sel-L}, fits\textsc{E-Bra-L}).
Rule \textsc{E-Out} says that a type doing an output to the participant 
\ppr\ on label~$l$ with payload $S$ and continuing as $T$
can fire the action $\AOUT \ppr l v$ and reach  the redex~$T$
whenever $v$ is a value of sort $S$.
Dually, rule \textsc{E-In} allows an input type from 
\ppr\ on label $l$ with payload $S$ and continuing as $T$ 
to do  an action $\AIN \ppr l v$ and reach  the redex $T$, 
if $v$ has sort $S$.
Rule \textsc{E-Sel-L} allows a sum type $T_1 + T_2$ to do an output action
$\AOUT \ppr l v$ and reach the redex $T'$ whenever $T_1$ can fire this action and reach $T'$.
Dually, rule \textsc{E-Bra-L} allows a sum type $T_1 + T_2$ to 
do an input action
$\AIN \ppr l v$ and reach the redex $T'$ if 
$T_1$ can fire this action and reach $T'$.
Note that input and output are the only actions that a sum type can fire.
This is because types as e.g. $T_1 + \mu X.T$ or $T_1 + (\mu X.T + T_2)$  are not well-formed.
\begin{figure}[t]
  \emph{Transition rules for types: \framebox{$T\lts\alpha  T$}}  
\begin{mathpar}
{
\inference[\textsc{E-Out}]{
\emptyset\vdash v\colon S  
}{
\ppr!l(S).T \lts{\AOUT \ppr l v} 
T
}
\
\inference[\textsc{E-In}]{
  \emptyset\vdash v\colon S
}{
  \ppr?l(S).T  \lts{\AIN \ppr l v} T
}
\
\inference[\textsc{E-Sel-L}]{
T_1\lts{\AOUT \ppr l v} T'
}{
T_1 + T_2\lts{\AOUT \ppr l v}
T'
}
}
\and
\inference[\textsc{E-Bra-L}]{
  T_1\lts{\AIN \ppr l v} T'
}{
  T_1 + T_2\lts{\AIN \ppr l v} T'
}
\and
\inference[\textsc{E-Rec}]{
}{
\mu X.T \lts{\tau} T\{\mu X.T/X\}
}
\end{mathpar}
\emph{Transition rules for session environments: \framebox{$\Delta \lts\alpha\Delta$}} 
\begin{mathpar}
\inference[\textsc{Se-Rec}]{
  T\lts\tau T'  
}{
\Delta,\ppp:T\lts\tau
\Delta,\ppp:T'
}
\quad 
\inference[\textsc{Se-Com}]{
  T_\ppp\lts{\AIN \ppq l v} T'_\ppp
  \and
  T_\ppq\lts{\AOUT \ppp l v}  T'_\ppq
}{
\Delta, \ppp: T_\ppp,  \ppq: T_\ppq \lts{\ASYN l \ppp \ppq} 
\Delta, \ppp: T'_\ppp,  \ppq: T'_\ppq
}
\end{mathpar}
\emph{Transition rule for configurations: 
\framebox{$D\diamond \Delta \lts\alpha D\diamond\Delta$}} 
\begin{mathpar}
\inference[\textsc{Se-Top}]{
  \Delta\in D\and 
  \Delta\lts\alpha \Delta'  
}{
D\diamond\Delta\lts\alpha
D\less\Delta\diamond\Delta'
}
\end{mathpar}  
\caption{Labelled transition system of session environments}
\label{fig:lts-types}
\end{figure}  

The non-deterministic transition rules for session environments follow in Figure~\ref{fig:lts-types}, 
and are the counterpart of the non-deterministic rules of the form 
$\Gamma\lts\alpha\Gamma$ 
used in GMST (cf.~\cite{ScalasY19}) to analyse the 
\emph{safety} and \emph{deadlock freedom} of multiparty protocols. 
We consider a top-level rule of the form
$
D\diamond \Delta \lts\alpha D\diamond \Delta
$,
where we refer to $D\diamond \Delta$ as a \emph{configuration}, 
and use $C$ to range over it.
$D$ is a set of type environments representing a \emph{decreasing set}
which is a \emph{subset of a fixed point}: a step can be taken only if $\Delta$ is in 
the decreasing set $D$.
The idea is the following:
since we are interested in computing all possible redexes of session environments,
 we avoid to further analyse the same environment 
twice by removing the visited environments from the (possibly infinite) set of all possible environments.  

Rule \textsc{Se-Top} applies to configurations and checks that an environment $\Delta$  is in the decreasing set $D$,
and $\Delta$ can move to $\Delta'$ with label $\alpha$:
in such case the configuration $D\diamond \Delta$ moves
to the configuration $D\less\Delta\diamond \Delta'$,
where $D\less\Delta$ notes the decreasing set $D$ less the environment~$\Delta$.

Rule \textsc{Se-Rec} applies to session environments and says that  
$\Delta, \ppp   \colon\mu X.T$ can do an internal 
action $\tau$ and reach the environment 
$\Delta, \ppp \colon T\{\mu X.T/X\}$, thus unfolding the type
of the participant \ppp. 
Rule  \textsc{Se-Com} applies to session environments and depicts a communication: 
when a participant \ppp\ has 
a type $T_\ppp$ that can fire an input action $\AIN \ppq l v$ and move to $T'_\ppp$, 
and a participant 
\ppq\ has a type $T_\ppq$ that can fire an output action $\AOUT \ppp l v$ and move to $T'_\ppq$,
then $\Delta,\ppp\colon T_\ppp,\ppq\colon T_\ppq$ can fire a synchronisation action 
$\sync l p q$
and move to   $\Delta,\ppp\colon T'_\ppp,\ppq\colon T'_\ppq$. 
\begin{example}\label{ex:auth-ered}
 Consider the protocol introduced in~\S~\ref{ex:auth} and take
  $\Delta \eqdef \pps \colon T_\pps, \pp c\colon T_\ppc$, 
  $\ppa\colon T^*_\ppa$. 
Consider a fixed point $D$ such that $\Delta\in D$.
A first application of \textsc{E-Rec}, \textsc{Se-Rec}, \textsc{Se-Top} allows for unfolding the type of 
the service \pps, where we let 
  $T^*_\pps\eqdef \ppc!\login(\unit).\ppa?\auth(\bool).T_\pps 
+ \ppc!\cancel(\unit).\End$:\\[1mm]
\centerline{
$
D\diamond \Delta \lts\tau D\less\Delta\diamond 
\Delta, \pps\colon T^*_\pps, \pp c\colon T_\ppc, \ppa\colon T^*_\ppa\eqdef\Delta_1
$
}\\[1mm]
To continue and unfold the type of the client \pp c, we need to verify that $\Delta_1\in D\less\Delta$:
this follows indeed from the property of a fixed point, that is to be closed under transition, and from the 
fact $\Delta_1\ne\Delta$, which holds because types are iso-recursive, and in turn
$T^*_\pps\ne T_\pps$.
We proceed as above and infer the following transition, where
$T^*_\ppc\eqdef
\pps?\login(\unit).
(\ppa!\pwd(\stringk).T_\ppc + \ppa!\ssh(\unit).T_\ppc) +
\pps?\cancel(\unit).\ppa!\quit(\unit).\End$:\\[1mm]
\centerline{
$D\less\Delta\diamond \Delta_1 \lts\tau D\less{\Delta,\Delta_1}\diamond 
 \Delta, \pps\colon T^*_\pps, \pp c\colon T^*_\ppc, \ppa\colon T^*_\ppa\eqdef\Delta_2 
$}\\[1mm]
  Two non-deterministic transitions are available from $\Delta_2$, and involve the 
  synchronisation of \pps\ and \ppc: 
  one over the label \login and the other over the label \cancel.
  The interaction below corresponds to the label \login and is obtained by applying 
  \textsc{E-Out}, \textsc{E-Sel-L},
  \textsc{E-In}, \textsc{E-Bra-L},
  \textsc{Se-Com}, \textsc{Se-Top},
  where
  $T'_\ppc\eqdef
\ppa!\pwd(\stringk).T_\ppc + \ppa!\ssh(\unit).T_\ppc $ and
$D_2\eqdef D\less{\Delta,\Delta_1}$
and $D_3\eqdef  D_2\less{\Delta_2}$:\\[1mm]
\centerline{
$
D_2\diamond \Delta_2 \lts{\alogin@\ppc\bowtie\pps} 
     D_3\diamond \Delta, \pps\colon \ppa?\auth(\bool).T_\pps, \pp c\colon T'_\ppc, 
     \ppa\colon T^*_\ppa\eqdef\Delta_3
$}\\[1mm]
The interaction over the label \cancel is obtained by applying
\textsc{E-Out}, \textsc{E-Sel-R},
  \textsc{E-In}, \textsc{E-Bra-R},
  \textsc{Se-Com}, \textsc{Se-Top},
  where 
  \textsc{E-Sel-R} and \textsc{E-Bra-R} are the right rules of 
  \textsc{E-Sel-L} and \textsc{E-Bra-L}, respectively:\\[1mm]
\centerline{
$
    D_2\diamond \Delta_2 \lts{\acancel@\ppc\bowtie\pps} 
    D_3\diamond    \Delta, \pps\colon \End, \pp c\colon \ppa!\quit(\unit).\End, 
    \ppa\colon T^*_\ppa\eqdef\Delta'_3
    $
    }\\[1mm]
We conclude by noting that the transition system 
$D\diamond \Delta \lts\alpha D\diamond \Delta$
is indeed non-deterministic (Definition~\ref{def:lts})
by 
$\sync \alogin \ppc \pps\ne  \sync \acancel \ppc \pps$
and 
$\Delta_3\ne\Delta'_3$.
\qed
\end{example}

\subsection{Deterministic Session Environment Transitions}\label{sec:dlts}
In this section, we define a deterministic LTS for environments that is the basis for the definition 
of \emph{closure} in~\S~\ref{sec:closure}, and in turn for 
the mechanisation of \emph{compliance} (cf.~\S~\ref{sec:compliance}) in deductive tools of the OCaml ecosystem (cf.~\S~\ref{sec:implementation}).

The transition system  $D\diamond \Delta \lts\alpha D'\diamond \Delta'$ is non-deterministic,
for two reasons: 
(1) threads can reduce or interact in any order; 
(2) label synchronisation among two participants can occur on multiple labels and in any order.

To make the LTS deterministic (cf.~Definition~\ref{def:lts}), we need four ingredients:

\noindent
(i) To partition the environment into minimal environments, and 
invoke the LTS on each minimal environment;

\noindent(ii) To collect information about discarded branches and selections in synchronisations;

\noindent(iii) To pass an oracle $\Omega$ that given an 
environment~$\Delta$ returns the next two engaging participants, 
or the next participant firing a $\tau$ action, 
or nothing;

\noindent(iv) To define a scheduling policy for labels of communicating participants.

We discuss (i) and (iii), and provide the signature of the deterministic LTS.
Feature (i) relies on following definition;
see App.~\ref{sec:maximal} for all details.

Let $\keyword{parties}(\ppp?l(S).T)\eqdef\keyword{parties}(T)\cup$ 
$\{\ppp\}=\keyword{parties}(\ppp!l(S).T)$,
$\keyword{parties}(\mu X.T)\eqdef\keyword{parties}(T)$,
$\keyword{parties}(T_1+T_2)\eqdef\keyword{parties}(T_1)$ $\cup\keyword{parties}(T_2)$,
and $\keyword{parties}(T)\eqdef\emptyset$ otherwise.
Define $\keyword{parties}(\emptyset)\eqdef\emptyset$,
 $\keyword{parties}(\Delta,\ppp\colon T)$ 
 $\eqdef\{\ppp\}\cup
 \keyword{parties}(T)\cup 
 \keyword{parties}(\Delta)$.
Let $\Delta\less\eNd$ project all non-ended participants of $\Delta$.
 \begin{definition}[Minimal partition and environments]\label{def:partition}
  A set $\{\Delta_1,\dots,\Delta_n\}\ne\emptyset$ is a partition of $\Delta_1\cup \cdots\cup \Delta_n$
  whenever $\Delta_i\ne\emptyset$
  %$\dom(\Delta_i)\cap \dom(\Delta_j)=\emptyset$, 
  and 
  $\keyword{parties}(\Delta_i)\cap \keyword{parties}(\Delta_j)=\emptyset$ 
  for all $\{i,j\}\subseteq\{1,\dots, n\}$, $i\ne j$.
  Let ${\cal P}_{\cal R}(\Delta)$ be the set of all partitions of~$\Delta$.
  We say that $\Delta$ is minimal if  
  there not exists ${\cal P}_{\cal R}(\Delta\less\eNd)\ni S\ne\{\Delta\less\eNd\}$ s.t. 
  $\Delta\less\eNd=\bigcup_{\Delta'\in S} \Delta'$.
  A~partition $\{\Delta_1,\dots,\Delta_n\}$ of $\Delta$ is minimal,
  denoted as $\partition\Delta(\Delta_1,\dots, \Delta_n)$, whenever 
  $\Delta_i$ is minimal, for all $i\in\{1,\dots,n\}$. 
 \end{definition}  
The aim of invoking the LTS on minimal environments is to avoid the non-determinism
coming from sub-systems executing unrelated behaviours.
The fixed point mechanism based on decreasing sets assumes that once we re-encounter
the same environment twice, we can stop since we already explored all possible computations.
This is no longer sound if the system contain unrelated sub-systems.
For instance, if an environment contains 
two participants \ppp\ and \ppq\ communicating with each other and reaching a fixed point after few steps, 
and  \emph{also} two participants \ppr\ and~\pps\ communicating with each other,
then, depending on the oracle (see (iii)), it might be the case that
the computation finishes without analysing~\ppr\ and~\pps\ (cf.~\cite{GlabbeekHH21}).
On contrast, if we consider a minimal environment, all parties are properly parsed, 
because the oracle is forced to analyse all sub-processes of the interacting participants. 
As we shall see in~\S~\ref{sec:typing}, the minimality assumption does not pose any limitation
because 
we perform the compliance analysis on all environments of a minimal partition.

Feature (iii) is implemented by adding a \emph{fair} oracle
returning participants 
willing to reduce or communicate when this option is available.
The top level participant of a well-formed type~$T$, denoted ${\tt top}(T)$, 
is a partial function indicating the unguarded participant of a
branching or of a selection:\\
\centerline{
$
{\tt top}(\ppp?(S).T)\eqdef\ppp
\qquad
{\tt top}(\ppp!(S).T)\eqdef\ppp
\qquad 
{\tt top}(T_1 + T_2)\eqdef{\tt top}(T_1)
$
}

\begin{definition}[Oracle fairness]\label{def:fair}
A oracle $\Omega$ is fair whenever:
\begin{enumerate}
  \item $\Omega(\Delta)=(\ppp, \ppq)$ implies ${\tt top}(\Delta(\ppp))=\ppq$ and 
   ${\tt top}(\Delta(\ppq))=\ppp$
   \item $\Omega(\Delta)=\ppp$ implies $\Delta(\ppp)=\mu X.T$
   \item $\Omega(\Delta)$ undefined implies 
   \begin{enumerate}
   \item forall $\ppp\in\dom(\Delta)$ we have $\Delta(\ppp)\ne\mu X.T$
   \item there not exists $\{\ppp, \ppq\}\subseteq\dom(\Delta)$ s.t.
     ${\tt top}(\Delta(\ppp))=\ppq$ and ${\tt top}(\Delta(\ppq))=\ppp$
    \end{enumerate}
\end{enumerate}
\end{definition}

Deterministic transitions of session environments have the following form:
$$
\Omega\triangleright D\diamond\Delta\dlts\alpha D\diamond\Delta\blacktriangleright\Delta
$$
where 
$\Delta$ is minimal (i),
$\Omega$ is a fair oracle (iii),
we assume a label scheduling policy (iv),
$\alpha$ is a synchronisation label $\sync l p q$  or a $\tau$ action decorated 
with the originating participant, denoted $\tau_\ppp$, and
$\Delta$ after the symbol $\blacktriangleright$ is called the \emph{sum continuation} 
and is a type environment or an environment placeholder, denoted~$\nabla^\circ$~(ii).
We note that, w.r.t. to Definition~\ref{def:lts}, we have that 
$\widetilde A=\Omega$, the set of states ${\cal S}_1$ contains 
$ D\diamond\Delta$, and the set of states ${\cal S}_2$ contains 
$D\diamond\Delta\blacktriangleright\Delta$.
Moreover, $\dlts{}$ is a partial function:
$\Omega\triangleright D\diamond\Delta\dlts{\alpha'} D'\diamond\Delta'_1\blacktriangleright\Delta'_2$
and 
$\Omega\triangleright D\diamond\Delta\dlts{\alpha''} D''\diamond\Delta''_1\blacktriangleright\Delta''_2$
imply $\alpha'=\alpha''$, 
and $D'= D''$, $\Delta'_i=\Delta''_i$, $i=1,2$.

\begin{example}\label{ex:auth-eredD}
Consider $D\diamond\Delta$ defined in Example~\ref{ex:auth-ered}.
We note that $\Delta$ is minimal.
Take a fair  oracle $\Omega$, and assume that  the scheduling of labels 
follows the \emph{lexicographic order}.
First, we note that $\Omega(\Delta)$ undefined gives rise to a contradiction, because e.g.
$\Delta(\pps)=\mu X.T$.
Depending on the oracle $\Omega$, we may have
$\Omega(\Delta)=\pps$ or $\Omega(\Delta)=\ppc$,
because any other combination would contradict  Definition~\ref{def:fair}.  

Assume $\Omega(\Delta)=\pps$. 
A first step let us infer the reduction of the service,
where $\Delta_1\eqdef \Delta, \pps\colon T^*_\pps, \ppc\colon
T_\ppc, \ppa\colon T^*_\ppa$, and $\keyword{minimal}(\Delta_1)$.\\[1mm]
\centerline{
$\Omega\triangleright D\diamond \Delta \dlts{\tau_\pps} D\less\Delta\diamond 
   \Delta_1\blacktriangleright\nabla^\circ
$}\\[1mm]  
Next, we assume that $\Omega(\Delta_1)=\ppc$,
where $\Delta_2\eqdef \Delta, \pps\colon T^*_\pps, \ppc\colon
T^*_\ppc, \ppa\colon T^*_\ppa$.\\[1mm]
\centerline{
$
\Omega\triangleright D\less\Delta\diamond \Delta_1 \dlts{\tau_\ppc} D\less{\Delta,\Delta_1}
  \diamond \Delta_2\blacktriangleright\nabla^\circ
  $}\\[1mm]  
  In the next round we have $\keyword{minimal}(\Delta_2)$ and  $\Omega(\Delta_2)=(\ppc,\pps)$,
  and the algorithm picks the first label in the 
  intersection of the labels of \ppc\ and \pps, 
  that is \cancel:\\[1mm]
\centerline{
$
\Omega\triangleright D_2\diamond \Delta_2
  \dlts{\acancel@\ppc\bowtie\pps}
  D_3 \diamond
  \Delta''\blacktriangleright\Delta'
$}\\[1mm]    
  where  $D_2, D_3$ are defined in Example~\ref{ex:auth-ered} and%
\\[1mm]
\centerline{
$\begin{array}{rll}
   \Delta''&\eqdef
  \pps\colon \End, \ppc\colon \ppa!\quit(\unit).\End,
  \ppa\colon T^*_\ppa
  \\
  \Delta'&\eqdef
  \pps\colon \ppc!\login(\unit).\ppa?\auth(\bool).T_\pps,
  \ppc\colon  
\pps?\login(\unit).T'_\ppc, \ppa\colon T^*_\ppa
\\
T'_\ppc &\eqdef \ppa!\pwd(\stringk).T_\ppc + \ppa!\ssh(\unit).T_\ppc 
\end{array}
$
}\\[1mm]
After this sequence of transitions, we have two 
minimal environments~$\Delta''$ and~$\Delta'$ corresponding to the redex of the interaction 
of the service \pps \ and the client \ppc\ over the label \cancel,
and to the environment prompt to let  \pps \ and  \ppc\
interact over the label \login, respectively.
The idea is to deterministically visit all the binary trees spawned by further transitions 
starting from $D_3\diamond \Delta''$ and 
from $D_3\diamond \Delta'$, respectively, as we discuss in the next section.
  \qed
  \end{example}  

\subsection{Closure}\label{sec:closure}
The aim of the deterministic LTS presented in~\S~\ref{sec:dlts} is to be used
by the function that computes the \emph{compliance} of session environments in the typing system 
(cf.~\S~\ref{sec:typing}).
Compliance analyses all final environments computed by the closure of the deterministic 
transitions originating from a type environment. 

More specifically, we consider the \emph{semireflexive-transitive closure} of the deterministic 
lts $\dlts{}$, denoted $\Lts{}$.
Semireflexivity means that a configuration is related with itself
only if is stuck, that is it cannot fire any transition.

We are interested in applying closure to environments preserving minimality.

\begin{definition}[Stuck environment]\label{def:stuck}
A minimal environment $\Delta$ is stuck w.r.t. an oracle $\Omega$ and a decreasing set 
$D$, denoted ${\tt stuck}_{\Omega, D}(\Delta)$, if there not exists $\alpha, \Delta_1$,
$\Delta_2$ such that
$\Omega\tr D\diamond \Delta\dlts\alpha D'\diamond \Delta_1\blacktriangleright \Delta_2$.
\end{definition}  

\begin{definition}[Closure]\label{def:closure}
Define:\\
\centerline{
\small
$
\begin{array}{c}
  \inference[\textsc{C-Rfl}]{
    {\tt stuck}_{\Omega, D}(\Delta)
  }{
  \Omega\tr D\diamond \Delta \Lts{} D\diamond \Delta  
  }
  \quad
  \inference[\textsc{C-Err}]{
    \neg{\tt minimal}(\Delta)
  }{
  \Omega\tr D\diamond \Delta \Lts{} {\tt err}  
  }
\\[2mm]  
\inference[\textsc{C-Tra}]{
    \Omega\tr D\diamond \Delta\dlts\alpha D'\diamond \Delta_1\blacktriangleright\Delta_2
\quad
    \Omega\tr D'\diamond \Delta_1\Lts{} \widetilde{E_1} 
\quad 
    \Omega\tr D'\diamond \Delta_2\Lts{} \widetilde{E_2} 
  }
  {
  \Omega \tr D\diamond \Delta\Lts{}\widetilde {E_1},\widetilde {E_2}
  }
\end{array}
$}\\[2mm]
The closure of a minimal environment $\Delta$ w.r.t. a decreasing set $D$ s.t. 
  $\Delta\in D$ and a fair oracle $\Omega$ is defined by the following rule:
  \begin{mathpar}
  \inference[\textsc{C-Top}]{
    \Omega \tr D\diamond \Delta\Lts{} D_1\diamond \Delta_1, \cdots, D_n\diamond \Delta_n
  }{
  {\tt closure}_{\Omega, D}(\Delta)=
  \Delta_1,\dots,\Delta_n 
  }
  \end{mathpar}

  \end{definition} 

Given a a fair oracle $\Omega$,  the relation $\Lts{}$ associates 
a configuration $C$ to a \emph{non-empty} tuple of \emph{e-configurations} $E_1,\dots, E_n$, denoted as 
$\widetilde E$, where each $E_i$ is a configuration $C$ or the \emph{failure} {\tt err}.
Given a configuration $C = D\diamond \Delta$, three cases may arise.
If $C$ is stuck, that is $C$ cannot fire any transition, then
we apply rule [\textsc{C-Rfl}] and relate $C$ with itself, else
if $C$ is not minimal, then we we apply rule [\textsc{C-Err}] and relate $C$ with {\tt err}.
Otherwise we have that $C$ fires an action and reaches the 
redex $\Delta'\diamond \Delta_1\blacktriangleright\Delta_2$:
we apply rule [\textsc{C-Tra}] and 
whenever $\Delta'\diamond \Delta_1$ is related by $\Lts{}$ to 
the e-configurations 
$\widetilde {E_1}$,
and $\Delta'\diamond \Delta_2$ is related by $\Lts{}$
to 
$\widetilde {E_2}$,
we let $C$ be related by $\Lts{}$ to 
$\widetilde {E_1},\widetilde {E_2}$.

The {\tt closure} of a session environment $\Delta$ is defined iff $\Lts$ does not relate~$\Delta$
with failures. If this is the case,  then $\Lts$ relates $\Delta$ with configurations $\widetilde C$:
the function strips off all decreasing sets and associates $\Delta$  to a set of  minimal stuck environments.
It is worth noting that {\tt closure} is a \emph{terminating function},
because it is deterministic and   
it has  $|D|$ as decreasing measure. 

\begin{example}\label{ex:auth-closure}
We continue the analysis started in Example~\ref{ex:auth-eredD} and find a subset of 
${\tt closure}_{\Omega,D}(\Delta)$, which is defined because~$\Delta$ and its redexes are minimal.
Remember $T^*_\ppa$ defined in~\S~\ref{ex:auth}:
  $T^*_\ppa\eqdef 
\ppc?\pwd(\stringk).\pps!\auth(\bool).T_\ppa +
\ppc?\ssh(\unit).\pps!\auth$ $(\bool).T_\ppa + 
\ppc?\quit(\unit).\End$. 
Consider $D_3\diamond\Delta''\blacktriangleright\Delta'$ 
defined in Example~\ref{ex:auth-eredD}:\\[1mm]
\centerline{
$
\Omega\triangleright D\diamond\Delta
\dlts{\tau_\pps}
\dlts{\tau_\ppc}
\dlts{\acancel@\ppc\bowtie\pps}
D_3\diamond
  \Delta''\blacktriangleright\Delta'
$
}\\[1mm]
To calculate the closure of $\Delta$ w.r.t. $D$, we need to analyse the closures of 
$\Delta''$ and $\Delta'$ w.r.t. $D_3$, respectively.
We have that $\Delta''$ and $\Delta'$ are minimal: we analyse the former closure, 
and note that $D_3\diamond \Delta''$ is not stuck, 
i.e. the client \ppc\ and the server \ppa\ can communicate on \quit.
Assume $\Omega(\Delta'')=(\ppa,\ppc)$. We have:
\begin{mathpar}
  \inference[\textsc{C-Tra}]{
    \Omega\triangleright D_3\diamond \Delta''\dlts{\aquit@\ppa \bowtie\ppc}
    D_3\less{\Delta''}\diamond\pps\colon \End, \ppc\colon\End,
    \ppa\colon \End  
    \blacktriangleright\nabla^\circ
    \quad 
    (A)
  }{
\Omega\triangleright D_3\diamond \Delta''\Lts{}D_3\less{\Delta''}\diamond 
\pps\colon \End, \ppc\colon\End,
  \ppa\colon \End
} 
\end{mathpar}
\centerline{
$
(A)\ \inference[\textsc{C-Rfl}]{}{
    \Omega\triangleright D_3\less{\Delta''}\diamond\pps\colon \End, \ppc\colon\End,
    \ppa\colon \End  \Lts{} 
    D_3\less{\Delta''}\diamond\pps\colon \End, \ppc\colon\End,
    \ppa\colon \End  
    }
$
}\\[1mm]

We can thus infer $(\pps\colon \End, \ppc\colon\End,
\ppa\colon \End)\in{\tt closure}_{\Omega,D}(\Delta)$.
\qed
\end{example}

\section{Iso-Recursive Multiparty Type System}
\label{sec:typing}
\begin{figure}[t]
  \emph{Sorting rules: \framebox{$\Gamma\vdash e\colon S$}}  
  \\
  \emph{Typing rules for processes: \framebox{$\Gamma\vdash P\colon T$}}  
  \begin{mathpar}
  \inference[\textsc{T-End}]{}{
  \Gamma\vdash 0\colon\End 
  }
  \and
  \inference[\textsc{T-Rec}]{
  \Gamma, \chi\colon \mu X.T \vdash P\colon T\{\mu X.T/X\}
  }{
  \Gamma\vdash \mu\chi.P\colon \mu X.T
  }    
  \and
  \inference[\textsc{T-Var}]{
  \Gamma(\chi)=\mu X.T
  }{
  \Gamma\vdash \chi\colon \mu X.T
  }
  \and
  \inference[\textsc{T-Inp}]{
  \Gamma,x\colon S\vdash P\colon T    
  }{
  \Gamma\vdash   \PIN \ppr l x P \colon
    \ppr?l(S).T
  }
  \and
  \inference[\textsc{T-Out}]{
  \Gamma\vdash e\colon S \qquad \Gamma\vdash P\colon T
  }{
  \Gamma \vdash \POUT \ppr l e P \colon 
  \ppr!l(S).T}
  \and
  \inference[\textsc{T-Sum}]{
  \Gamma \vdash  P\colon T_1\qquad
  \Gamma\vdash Q\colon T_2
  }{
  \Gamma \vdash  P + Q\colon T_1 + T_2}
  \and
  \inference[\textsc{T-Sum-L}]{
  \Gamma \vdash  P\colon T_1
  }{
  \Gamma \vdash  P\colon T_1 + T_2
  }
  \and
  \inference[\textsc{T-Sum-R}]{
  \Gamma \vdash  P\colon T_2
  }{
  \Gamma \vdash  P\colon T_1 + T_2
  }
  \and
  \inference[\textsc{T-If}]{
  \Gamma\vdash e\colon \bool\qquad
  \Gamma\vdash P\colon T \qquad \Gamma\vdash Q\colon T
  }{
  \Gamma\vdash  \ifk\; e \; \thenk\; P\; \elsek\; Q\colon T
  }
  \end{mathpar}
  \emph{Typing rules for sessions: \framebox{$\Gamma\Vdash {\cal M}\colon \Delta$}}  
  \begin{mathpar}
  \inference[\textsc{T-Thr}] {
  \Gamma\vdash P \colon T
  \qquad
  \wf(T)  
  }{
  \Gamma\Vdash {\pp p}\lhd P\colon {\pp p}\colon T
  }
  \and 
  \inference[\textsc{T-Ses}]{
  \Gamma\Vdash {\pp p}_1\lhd P_1\colon {\pp p}_1\colon T_1 \quad \cdots \quad 
  \Gamma\Vdash {\pp p}_n\lhd P_n\colon {\pp p}_n\colon T_n \quad
  \Delta = {\pp p}_1\colon T_1,\dots,{\pp p}_n\colon T_n\\
  \\
  \partition\Delta(\Delta_1,\dots,\Delta_k)
  \and
  \forall j\in \{1,\dots, k\}\,.\, 
  \keyword{comp}(\Delta_j)
  }{
  \Gamma \Vdash \PAR_{i\in \{1,..,n\}}\ {\pp p}_i\lhd P_i\colon \Delta
  }
  \end{mathpar} 
  \caption{Type system}\label{fig:typing} 
  \end{figure}
  
The typing rules for processes and sessions are defined in 
Figure~\ref{fig:typing};
the rules for expressions are defined in App.~\ref{sec:sorting_rules}.

Typing judgements for processes have the form $\Gamma\vdash P\colon T$, where $\Gamma$ 
maps variables to sorts and process variables to types:
$$
\Gamma := \emptyset  \mid \Gamma,x\colon S\mid \Gamma, \chi\colon T
$$

Typing judgements for sessions have the form $\Gamma\Vdash{\cal M}\colon \Delta$,
where $\Delta$ is the session environment introduced in~\S~\ref{sec:alg-lts}, 
that is a map from participants to types,  and invoke the type system $\vdash$.
The type system for sessions $\Vdash$ 
 only invokes the type system for processes $\vdash$ with 
 well-formed types (cf.~\S~\ref{sec:syntax}):
 for this reason, the typing rules for processes  
 involving type sums can be simplified
(cf.~rules \textsc{T-Sum},\textsc{T-Sum-L},\textsc{T-Sum-R}).

The rule depicting the essence of iso-recursive multiparty session types is 
\textsc{T-Rec}.
In order to allow $\Gamma$ to type a recursion process $\mu \chi.P$ with a type $\mu X.T$,
it must be the case that $\Gamma,\chi\colon\mu X.T$  types the continuation $P$ with 
the unfolded type $T\{\mu X.T/X\}$.
That is, in our iso-recursive setting the continuation must be typed by explicitly
unfolding the recursive type.
This is different from the equi-recursive approach, e.g.~\cite{GhilezanJPSY19},
where the type of $\mu \chi.P$ and the type of the continuation $P$
can be equal, because types $\mu X.T$ and  $T\{\mu X.T/X\}$ are equal.
For the same reason, in rule \textsc{T-Var}  an environment $\Gamma,\chi\colon\mu X.T$
assigns the type $\mu X.T$ to the process variable $\chi$:
note that it is not possible to assign a non-recursive type to process variables.

Rule \textsc{T-Inp} allows $\Gamma$ to type a input process $\PIN \ppr l x P$ with type 
$\ppr?l(S).T$ whenever $\Gamma,x\colon S$ assigns the type $T$ to the continuation $P$.
Dually, rule \textsc{T-Out} allows $\Gamma$ to type an output process $\POUT \ppr l e P$ with type 
$\ppr?l(S).T$ whenever the expression has sort $S$ and 
$\Gamma$ assigns the type $T$ to the continuation $P$.

Rule \textsc{T-Sum} is used for branching and selection, that are sums containing only
input types from the same participant and without duplicated labels,
or  output types from the same participant and without duplicated labels,
respectively 
(cf. Well-Formed Types in~\S~\ref{sec:syntax}, 
and Definition~\ref{def:notation}).
Note indeed that well-formed types do not contain types of the form e.g.
$T_1 + \mu X.T_2$, or $\End + T$.
The rule says that if $\Gamma$ can be used to type a process 
$P_1$ with type $T_1$, and a process $P_2$ with type $T_2$, then $\Gamma$ types $P_1 + P_2$ 
with type $T_1 + T_2$.

While rule \textsc{T-Sum} types exactly each input and output with their corresponding 
input and output type singletons, rule \textsc{T-Sum-L} 
allows for typing a process $P$ having type $T_1$ with the type  $T_1 + T_2$.
For instance, if 
$P$ is the branching process $\ppr?l_1(x).P_1 +\cdots +\ppr?l_n(x).P_n$ 
then we can use \textsc{T-Sum-L} to 
assign to $P$ the type $\&_{i\in \{1,\dots,n+1\}} \ppr?l_i(S_i).T_i$.
Rule \textsc{T-Sum-R} does the same thing, on the right:
if $P$ has type $T_2$ then we can use the rule to assign to $P$ the type  $T_1 + T_2$.

The increased flexibility offered by rules \textsc{T-Sum-L}, \textsc{T-Sum-R} is used in
the rule for if-then-else, that is \textsc{T-If}.
In order to type process $\ifk\; e \; \thenk\; P\; \elsek\; Q$ with type $T$ 
we require that $e$ has a boolean sort, and that both $P$ and $Q$ have type $T$.
To allow $P$ and $Q$ to use different labels to communicate in input/output with a participant,
we use rules \textsc{T-Sum-L} and  \textsc{T-Sum-R} in the premises of \textsc{T-If},
thus mimicking  a simple form of subtyping. The next example illustrates this idea.
\begin{figure}[t]
\begin{align*}
    P_\ppa &\eqdef  \mu \chi. (P_1 + P_2)
\\     
    P_1 &\eqdef \PIN \ppc\pwd x \textit{Check}_\ppa 
    \\
    P_2 &\eqdef
    \PIN \ppc\ssh x
    \POUT \pps\auth\truek \chi + 
    \PIN \ppc\quit x \INACT
    \\
    \textit{Check}_\ppa &\eqdef
    \ifk\, x= \text{``miau''}\,\thenk\,
    \POUT \pps\auth\truek \chi\,
    \,\elsek\,
    \POUT \pps\fail{} \INACT 
    \\
    T' &\eqdef \ppc?{\tt pwd}(\stringk).(\pps!\auth(\bool).X + 
  \pps!\fail(\unit).\End)\\
  T'' &\eqdef \ppc?\ssh(\unit).\pps!\auth(\bool).X 
  + \ppc?\quit(\unit).\End
\\ 
  T &\eqdef T' + T''  
  \end{align*} 
\caption{Variant  of authorisation server in~\S~\ref{ex:auth}}  
\label{fig:ex-auth-typing}
\end{figure}

\begin{example}\label{ex:auth-typing}
Consider the variant of Figure~\ref{fig:ex-auth-typing}
of the authorisation server \ppa\ in~\S~\ref{ex:auth} such 
that \ppa\ verifies the password sent by the client \pp c
while allowing only one attempt: if the password is wrong, \ppa \ sends 
\fail to the service~\pps \ and stops.
We discuss the typing of the authorisation server~$P_\ppa$, and omit the 
types of the other participants.
A formal derivation is included in App.~\ref{sec:auth-derivation}.

Let $\Gamma\eqdef \chi\colon \mu X.T,x\colon\stringk$, 
consider the two branches of $\textit{Check}_\ppa$, and let 
$T_{\textrm{if}}\eqdef \pps!\auth(\bool).\mu X.T + 
  \pps!\fail(\unit).\End$.
The left branch $\POUT \pps\auth\truek \chi$ can be  assigned to $T_{\textrm{if}}$ 
under $\Gamma$ 
by using \mbox{\textsc{T-Sum-L}}, \textsc{T-Out}, \textsc{T-Var}.
The right branch  $\POUT \pps\fail{} \INACT$ can be assigned to  
$T_{\textrm{if}}$ under $\Gamma$ 
by using \mbox{\textsc{T-Sum-R}}, \textsc{T-Out}, \textsc{T-End}.
By applying \textsc{T-If} we thus assign $T_{\textrm{if}}$ to $\textit{Check}_\ppa$ under $\Gamma$ ;
in turn, process $P_1$ is assigned to $T_1\eqdef\ppc?{\tt pwd}(\stringk).T_{\textrm{if}}$ under 
$\chi\colon \mu X.T$  by using
\textsc{T-Inp}.
We note that $T_1 = T'\{\mu X.T/X\}$.
Process $P_2$ is assigned to $T_2\eqdef \ppc?\ssh(\unit).\pps!\auth(\bool).\mu X.T
+ \ppc?{\tt quit}(\unit).\End$ under $\chi\colon \mu X.T$: we omit all details.
We note that $T_2 = T''\{\mu X.T/X\}$.

We use \textsc{T-Sum} to assign $T'\{\mu X.T/X\}+ T''\{\mu X.T/X\}=T\{\mu X.T/X\}$
to $P_1+P_2$ under $\chi\colon \mu X.T$.
We conclude by using \textsc{T-Rec} to assign $\mu X.T$ to~$P_\ppa$ under the empty environment,
thus typing the authorisation server.
\qed
\end{example}

\myparagraph{Type checking sessions}
The typing rules for sessions of Figure~\ref{fig:typing} have the form 
$\Gamma\Vdash {\cal M}\colon\Delta$ 
and use the rules for processes $\Gamma\vdash P\colon T$. 
The system relies on the notion of \emph{minimal partition} (cf.~Definition~\ref{def:partition}).

Rule \textsc{T-Thr} is used for single threads and  says that if the type system for processes $\vdash$ 
can be used to 
type a process $P$ with a well-formed  type $T$  (cf.~\S~\ref{sec:syntax}),
then the type system~$\Vdash$ assigns the typing $\pp p \colon T$ to  
the thread $\pp p \lhd P$.

Rule \textsc{T-Ses} is the top-level rule used to type-check the multiparty session.
In order to type-check a session composing the threads 
$\pp p_1\lhd P_1,\dots, \pp p_n\lhd P_n$ with 
the session environment $\Delta = \pp p_1\colon T_1,\dots, \pp p_n\colon T_n$, we require two things:
\begin{enumerate}
\item Each thread $\pp p_i\lhd P_i$ is typed with the environment $\pp p_i\colon T_i$, 
for $i=1,\dots n$;
\item Each environment $\Delta_j$ of the minimal partition 
$\{\Delta_1,\dots,\Delta_k\}$ of $\Delta$
satisfies \emph{compliance}, denoted $\keyword{comp}(\Delta_j)$.
\end{enumerate}
Compliance  resembles the approach based on safe 
contexts (e.g.~\cite[Definition~4.1]{ScalasY19}), although is fully computational.

\subsection{Compliance}\label{sec:compliance}

Intuitively, a session typed by a compliant environment never reaches an \emph{error},
that is a deadlocked system,
or a redex containing 
two participants $\pp p$ and $\pp q$ that are willing to communicate, 
e.g. $\pp p$ is sending an output to $\pp q$, and 
$\pp q$ is receiving an input from $\pp p$, or vice-versa,
but they mismatch the communication label and/or the type payload, 
or both $\pp p$ and $\pp q$ are sending (receiving) a value to each other: 
that is, there is a mismatch that makes the two participants stuck.

The formal definition of compliance relies on the \emph{closure} of $\dlts{}$ introduced 
in~\S~\ref{sec:alg-lts},
and of the formal definition of error below.
Let the \emph{tagged labels} of a type $T$, denoted ${\cal L}(T)$,
 be defined inductively as follows: 
 ${\cal L}(\ppr!l(S).T) \eqdef \{l@S\}$,
 ${\cal L}(\ppr?l(S).T)\eqdef \{l@S\}$,
 ${\cal L}(T_1 + T_2) \eqdef {\cal L}(T_1)\cup {\cal L}(T_2)$,
 ${\cal L}(T)\eqdef\emptyset$ otherwise.

\begin{definition}[Well-formed environment]\label{def:wfenv}
A session environment $\Delta$ is well-formed, denoted $\wf(\Delta)$,
whenever $\pp p\in\dom(\Delta)$ implies $\wf(\Delta(\pp p))$.
\end{definition}  

\begin{definition}[Communication mismatch]\label{def:mismatch}
 A well-formed session environment $\Delta$ is a communication mismatch whenever there exists 
$\{\pp p,\pp q\}\subseteq\dom(\Delta)$ such that one of the following cases arise:
\begin{align*}
\Delta(\pp p)&= \oplus_{i\in I} {\pp q}!l_i(S_i).T_i  &
\Delta(\pp q)&= \oplus_{j\in J} {\pp p}!l_j(S_j).T_j
\\
\Delta(\pp p)&= \&_{i\in I} {\pp q}?l_i(S_i).T_i  &
\Delta(\pp q)&= \&_{j\in J} {\pp p}?l_j(S_j).T_j
\\
\Delta(\pp p)&= \oplus_{i\in I} {\pp q}!l_i(S_i).T_i  &
\Delta(\pp q)&= \&_{j\in J} {\pp p}?l_j(S_j).T_j &
&{\cal L}(\Delta(\pp p))\cap {\cal L}(\Delta(\pp q))=\emptyset 
\\
\Delta(\pp p)&= \&_{i\in I} {\pp q}?l_i(S_i).T_i  &
\Delta(\pp q)&= \oplus_{j\in J} {\pp p}!l_j(S_j).T_j &
&{\cal L}(\Delta(\pp p))\cap {\cal L}(\Delta(\pp q))=\emptyset
\end{align*}
\end{definition}
The notion of deadlock is insensitive to decreasing sets and determinism, and is based on the 
non-deterministic transition system 
$\Delta\lts\alpha\Delta$ of Figure~\ref{fig:lts-types}.

\begin{definition}[Deadlock]\label{def:deadlock}
  Let $\keyword{consumed}(\Delta)\eqdef\forall \pp p \in\dom(\Delta)\,.\, \Delta(\pp p)=\End$.
  A session environment $\Delta$ is a deadlock when both
  (1) there not exists $\alpha, \Delta'$ such that 
  $\Delta\lts\alpha\Delta'$, 
  and (2) $\neg\keyword{consumed}(\Delta)$.
\end{definition}

\begin{definition}[Error]\label{def:error}
A well-formed  environment $\Delta$ is an error 
whenever $\Delta$ is a communication mismatch,
or $\Delta$ is a deadlock.
\end{definition}

\begin{definition}[Compliance]\label{def:compliance}
Let $\Delta$ be a minimal well-formed environment.
Define $\keyword{comp}(\Delta)$ whenever for all  fair oracles $\Omega$ and 
fixed points $D$ including $\Delta$,
 if ${\tt closure}_{\Omega,D}(\Delta)=\Delta_1, \dots, \Delta_n$
 then 
 $\Delta_i$ is not an error, for all $i\in \{1,\dots,n\}$.
\end{definition}

 \begin{example}\label{ex:auth-not-comp}
 Consider the minimal well-formed environment $\Delta''$ introduced  at the end of~\S~\ref{ex:auth}, 
  and the claim
 $\neg\keyword{comp}(\Delta'')$, which follows from  $\Delta''$ reaching the deadlocked environment 
 $\Delta_{\tt lock}=
 \pps\colon \ppa?\auth(\bool).T_\pps$,
 $\pp c\colon \ppa!{\tt pwd}(\stringk).T_\ppc + \ppa!\ssh(\unit).T_\ppc$,
 $\ppa\colon \End$.
 We prove the claim by  
using a Lemma mapping non-deterministic transitions to deterministic transitions. 
 We start by a sequence of 
 (non-deterministic) transitions  from $\Delta''$ that lead to  $\Delta_{\tt lock}$, and use  
 the result  to find a fair oracle $\Omega$ mimicking the sequence:
 \begin{align}
 D\diamond \Delta'' 
 &\lts{\tau_{\pps}}\lts{\tau_{\ppc}}
 \lts{\alogin@\ppc\bowtie\pps}
 \lts{\assh@\ppa\bowtie\ppc}
 \lts{\aauth@\pps\bowtie\ppa}
 \label{eq:eanc-1}
 \\
 D_1\diamond \pps\colon T_\pps,\ppc\colon T_\ppc, \ppa\colon T'_\ppa
 &
 \lts{\tau_{\pps}}\lts{\tau_{\ppc}} 
 \lts{\alogin@\ppc\bowtie\pps}
 \lts{\tau_{\ppa}}
 \lts{\assh@\ppa\bowtie\ppc}
 \lts{\aauth@\pps\bowtie\ppa}
 \label{eq:eanc-2}
 \\
 D_2\diamond \pps\colon T_\pps,\ppc\colon T_\ppc, \ppa\colon \End
 &\lts{\tau_{\pps}}\lts{\tau_{\ppc}} 
 \lts{\alogin@\ppc\bowtie\pps}D_3\diamond  \Delta_{\tt lock}
 \label{eq:eanc-3}
 \end{align} 
 The transitions in (\ref{eq:eanc-1}) correspond to a first round of the protocol,
 which leads the service \pps \ and the client \ppc\ to re-initialise, while the
 authorisation server \ppa \ reaches the type 
 $ T'_\ppa = \mu X. (\ppc?{\tt pwd}(\stringk).\pps!\auth(\bool).X +
 \ppc?\ssh(\unit).\pps!\auth(\bool).\End$ $+\ppc?{\tt quit}(\unit).\End)$.
 The transitions in (\ref{eq:eanc-2}) correspond to a second round of the protocol,
 which leads the service \pps \ and the client \ppc\ to re-initialise, while the
 authorisation server \ppa \ reaches the type $\End$.
 The transitions in (\ref{eq:eanc-3}) correspond to the starting of the protocol
 where the service \pps\  sends a \login request to the client~\ppc.
 After that, both the service and the client waits to interact with the server \ppa,
 which has ended. Note that ${\tt stuck}_{\Omega,D_3}(\Delta_{\tt lock})$.

 \noindent
 We apply Lemma~\ref{lem:multistep} in~App.~\ref{sec:multistep} 
 and infer
 $
  \Omega\triangleright D\diamond \Delta''\Lts{}\widetilde{C_1}, 
  D_3\diamond \Delta_{\tt lock},\widetilde{C_2}
 $.
 By Definition~\ref{def:closure},
 we have
 $\Delta_{\tt lock}\in \keyword{closure}_{\Omega,D}(\Delta'')$.
 To prove $\neg\keyword{comp}(\Delta'')$, we show that $\Delta_{\tt lock}$ 
 is an error. In fact, $\Delta_{\tt lock}$ is a deadlock (cf.~Definition~\ref{def:deadlock}), 
 because it cannot fire any action, 
 and 
because  there is a participant that has not finished, e.g. 
 $\Delta_{\tt lock}(\pps)\ne\End$.
 By Definition~\ref{def:error},  $\Delta_{\tt lock}$ is an error.
 \qed
 \end{example}

 \begin{example}\label{ex:auth-comp}
 Consider the minimal well-formed environment $\Delta$ of the authorisation protocol in~\S~\ref{ex:auth}.
 We claim that for any fair oracle $\Omega$ and fixed point $D\ni\Delta$,
 the closure of $\Delta$ returns two environments, where 
 $\Delta^{\tt end}\eqdef\pps\colon\End,\ppc\colon\End,\ppa\colon\End$:
 $\keyword{closure}_{\Omega,D}(\Delta)=\{\Delta,\Delta^{\tt end}\}$.
 Following this claim, we have $\keyword{comp}(\Delta)$.
 In fact, both $\Delta$ and $\Delta^{\tt end}$ are not errors.
 By definition, neither $\Delta$ nor $\Delta^{\tt end}$ is a mismatch:
 the latter case is clear;
 in the former case, the unique unguarded sum of prefixes is the branching 
 of the authorisation service \ppa\ below, while the type of \ppc\ is guarded:
 \begin{center}
 $
 \ppc?{\tt password}(\stringk).\pps!\auth(\bool).T_\ppa +
\ppc?\ssh(\unit).\pps!\auth(\bool).T_\ppa + 
\ppc?{\tt quit}(\unit).\End
$  
\end{center}
Moreover, neither $\Delta$ nor $\Delta^{\tt end}$ is a deadlock.
$\Delta$ can indeed take a step: the environment is in the closure because 
it is first contained in the initial decreasing set $D$ and then re-encountered
after a sequence of interactions.
The claim can be verified by using the certified implementation in~\S~\ref{sec:implementation}. 
 \qed  
 \end{example}  
 
 \begin{remark}
  In~\cite{ScalasY19} an environment is deadlock-free if for all redexes $\Gamma$ 
  reachable in multiple steps 
  we have that if $\Gamma$ does not move then its range contains only the type $\End$. 
  Conversely, Definition~\ref{def:deadlock} expresses a negative property, and in turn 
  we transform the implication $\keyword{stuck}(\Gamma) \rightarrow \keyword{consumed} 
  (\Gamma)$ of~\cite{ScalasY19} 
  into its negation: $\keyword{stuck}(\Gamma) \land \neg \keyword{consumed}(\Gamma)$.
  \qed
\end{remark}

\subsection{Subject Reduction and Progress}\label{sec:sr-sketch}
We conclude this section by showing that the typing system satisfies subject reduction and progress.
We outline the sketch of the proof of subject reduction, and refer to App.~\ref{sec:sr} for all details. 
The proof of progress is in App.~\ref{sec:progress}.

The purpose  of the subject reduction theorem is to establish that
 if a session $\cal M$ is well-typed and does a step $\alpha$ and reaches the session ${\cal M}'$,
then ${\cal M}'$ is well-typed.
Assume that $\Gamma\Vdash{\cal M}\colon\Delta$.
To assess subject reduction, we provide an environment $\Delta'$ s.t. 
$\Gamma\Vdash{\cal M}'\colon\Delta'$.
Since the step ${\cal M}\lts\alpha {\cal M}'$ is non-deterministic, we match this step
with a non-deterministic environment transition (cf.~Figure~\ref{fig:lts-types}).

A key result to establish subject reduction is that compliance (cf.~Definition~\ref{def:compliance})
is preserved by non-deterministic transitions of session environments.

\begin{lemma}\label{lem:lts-alts}
  Let $\Delta$ be minimal.
  If 
  $D\diamond \Delta \lts\alpha D'\diamond \Delta'$ then there exists a fair oracle~$\Omega$
  and environment $\Delta''$ s.t.
  $\Omega\triangleright D\diamond \Delta \dlts\alpha D'\diamond \Delta'\blacktriangleright\Delta''$.
  \end{lemma}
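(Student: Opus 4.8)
The plan is to invert the configuration transition and then reconstruct the same step deterministically by a careful choice of the oracle (and, where a synchronisation label must be disambiguated, of the scheduling policy). First I would apply the only configuration rule, \textsc{Se-Top}, to the hypothesis $D\diamond\Delta\lts\alpha D'\diamond\Delta'$; this yields $\Delta\in D$, the side condition $D'=D\less\Delta$, and an underlying environment transition $\Delta\lts\alpha\Delta'$. Since the deterministic rule \textsc{Se-Top} likewise removes the visited environment from the decreasing set, the set $D\less\Delta$ produced on the deterministic side coincides with $D'$, so the decreasing-set bookkeeping matches automatically. It then remains to reproduce $\Delta\lts\alpha\Delta'$, and I would proceed by case analysis on whether this step was derived by \textsc{Se-Rec} or \textsc{Se-Com}.

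In the \textsc{Se-Rec} case we have $\alpha=\tau$, some participant $\ppp$ with $\Delta(\ppp)=\mu X.T$, and $\Delta'=\Delta,\ppp\colon T\{\mu X.T/X\}$ via \textsc{E-Rec}. I would pick any fair oracle $\Omega$ with $\Omega(\Delta)=\ppp$: this single value is consistent with clause~(2) of Definition~\ref{def:fair} precisely because $\Delta(\ppp)=\mu X.T$, and a globally fair $\Omega$ agreeing with it exists since at every other environment one may independently return a $\mu$-participant, a syncable pair, or nothing according to clauses~(1)--(3). The matching deterministic step is then $\Omega\tr D\diamond\Delta\dlts{\tau_\ppp}D\less\Delta\diamond\Delta'\blacktriangleright\nabla^\circ$, so I would take $\Delta''=\nabla^\circ$; here the non-deterministic label $\tau$ is refined to its decorated form $\tau_\ppp$.

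In the \textsc{Se-Com} case we have $\alpha=\ASYN l\ppp\ppq$ with $\Delta(\ppp)\lts{\AIN\ppq l v}T'_\ppp$ and $\Delta(\ppq)\lts{\AOUT\ppp l v}T'_\ppq$, and $\Delta'$ replaces the types of $\ppp$ and $\ppq$ by $T'_\ppp$ and $T'_\ppq$. I would choose a fair $\Omega$ with $\Omega(\Delta)=(\ppp,\ppq)$; the fairness requirement ${\tt top}(\Delta(\ppp))=\ppq$ and ${\tt top}(\Delta(\ppq))=\ppp$ of clause~(1) holds by well-formedness, since the input transition forces $\Delta(\ppp)$ to be a branching $\&_{i\in I}\ppq?l_i(S_i).T_i$ all addressed to the single participant $\ppq$ (and dually for $\Delta(\ppq)$), so their top participants are exactly $\ppq$ and $\ppp$. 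I would then orient the scheduling policy so that $l$ is the first common label of $\ppp$ and $\ppq$; the deterministic step selects that label, produces the main continuation carrying $\ppp\colon T'_\ppp,\ppq\colon T'_\ppq$---which is exactly $\Delta'$, because the branch continuations do not depend on the transmitted value $v$---and records the remaining shared branches as the sum continuation, which I take to be $\Delta''$ (with $\Delta''=\nabla^\circ$ when $l$ is the only common label).

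The delicate points, and the one I expect to be the real obstacle, are the two forms of bookkeeping that are invisible in the non-deterministic system. First, the oracle must be exhibited as a genuinely fair total object, not merely constrained at $\Delta$; I would discharge this by the independent pointwise construction sketched above, relying on the fact that the prescribed value at $\Delta$ already meets the relevant clause of Definition~\ref{def:fair}. Second, and more importantly, the deterministic system commits to one synchronisation label and defers the rest to the sum continuation, whereas the non-deterministic step may fire any shared label; matching a given $l$ therefore requires placing $l$ at the head of the schedule and verifying that the discarded-branch environment assembled as $\Delta''$ is well defined and leaves the types of all other participants untouched. Checking this agreement---that the deterministic main continuation equals $\Delta'$ on the nose while the sum continuation faithfully collects exactly the labels not chosen---is the crux, and it rests on the well-formedness analysis of $\Delta(\ppp)$ and $\Delta(\ppq)$ as single-participant branchings and selections.
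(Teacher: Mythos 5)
Your proposal is correct and follows essentially the same route as the paper: invert the configuration step into the $\tau$ and synchronisation cases, choose a fair oracle returning the relevant participant(s) at $\Delta$, and in the synchronisation case additionally fix the scheduling policy so that $l$ is the first common label, then close with \textsc{Se-Rec-D} or \textsc{Se-Com-D}. The only packaging difference is that the paper obtains the sum continuation $\Delta''$ by invoking Lemma~\ref{lem:semi-alg-preserve} (semi-algorithmic preservation) rather than reconstructing it inline from the branching/selection structure as you do, but the content is the same.
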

  
\begin{lemma}\label{sec:compliance-dlts}
  If $\keyword{comp}(\Delta)$ and
  $\Omega\triangleright D\diamond \Delta \dlts\alpha D'\diamond \Delta'\blacktriangleright\Delta''$
  then 
  $\keyword{comp}(\Delta')$.
\end{lemma}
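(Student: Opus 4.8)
The plan is to unfold $\keyword{comp}(\Delta')$ (Definition~\ref{def:compliance}) and reduce it to the hypothesis $\keyword{comp}(\Delta)$ through the recursive structure of closure (rule \textsc{C-Tra} in Definition~\ref{def:closure}). First I would observe that, since $\Delta$ is minimal and the transition $\Omega\tr D\diamond\Delta\dlts\alpha D'\diamond\Delta'\blacktriangleright\Delta''$ exists, $\Delta$ is neither stuck nor non-minimal, so the closure of $D\diamond\Delta$ is necessarily derived by \textsc{C-Tra} using precisely this (deterministic, hence unique) step, recursing on $D'\diamond\Delta'$ and on $D'\diamond\Delta''$. Instantiating the hypothesis $\keyword{comp}(\Delta)$ at the very oracle $\Omega$ and fixed point $D$ of the statement, every environment returned by $\keyword{closure}_{\Omega,D}(\Delta)$ is a non-error; by \textsc{C-Tra} the tuple $\widetilde{E_1}$ obtained from $D'\diamond\Delta'$ is a sub-tuple of it, so $\keyword{closure}_{\Omega,D'}(\Delta')$ is defined (no failure can occur in a sub-derivation without propagating) and all its environments are non-errors. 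Here I would discharge the side condition $\Delta'\in D'$: it holds because $D'=D\less\Delta$, the step keeps $\Delta'$ inside the fixed point $D$, and $\Delta'\neq\Delta$ since the transition strictly changes the type of the engaged participant (the iso-recursive argument of Example~\ref{ex:auth-ered}). I would also record that $\Delta'$ is minimal and well-formed — a precondition for $\keyword{comp}(\Delta')$ to be meaningful — which follows from the deterministic LTS preserving minimality and well-formedness.

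The remaining, and genuinely harder, step is to upgrade non-erroneousness from the single pair $(\Omega,D')$ to \emph{every} fair oracle $\Omega^*$ and \emph{every} fixed point $D^*\ni\Delta'$, as Definition~\ref{def:compliance} demands. Fix such $\Omega^*,D^*$. The key point is that every configuration, and in particular every stuck final environment, reachable from $\Delta'$ is already reachable from $\Delta$ by prefixing the step $\alpha$, so it lies within the region that $\keyword{comp}(\Delta)$ controls. I would make this precise by building a combined fair oracle $\hat\Omega$ that coincides with $\Omega$ at $\Delta$ (thus firing $\alpha$) and with $\Omega^*$ elsewhere, together with a fixed point $\hat D\ni\Delta$ whose residual after the first step drives the $\Delta'$-subtree exactly as $\Omega^*,D^*$ do; fairness is a pointwise condition on environments (Definition~\ref{def:fair}), so $\hat\Omega$ remains fair. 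Applying \textsc{C-Tra} to $\hat\Omega\tr\hat D\diamond\Delta$ then exhibits the environments of $\keyword{closure}_{\Omega^*,D^*}(\Delta')$ among those of $\keyword{closure}_{\hat\Omega,\hat D}(\Delta)$, which are non-errors by $\keyword{comp}(\Delta)$ instantiated at $(\hat\Omega,\hat D)$, giving $\keyword{comp}(\Delta')$.

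I expect the main obstacle to be the well-definedness of this construction under the decreasing-set bookkeeping: when $D^*$ happens to contain $\Delta$ itself, the standalone closure of $\Delta'$ may re-analyse $\Delta$ (using $\Omega^*(\Delta)$), whereas in the embedded computation $\Delta$ has already been removed from the decreasing set. The way out is that re-entering $\Delta$ merely re-explores a sub-tree of $\Delta$'s own closure, which is error-free by $\keyword{comp}(\Delta)$; hence no new errors can arise whether or not $\Delta$ is revisited, and the two closures agree on their error content even if not literally as tuples. Phrasing compliance equivalently as ``every stuck environment reachable from $\Delta$ by $\dlts{}$ under some fair oracle is a non-error'' makes this reachability argument cleanest, since reachability from $\Delta'$ is contained in reachability from $\Delta$ and is visibly insensitive to the initial decreasing set, with minimality of $\Delta$ and $\Delta'$ keeping the relevant reachable sets aligned. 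If an oracle/fixed-point independence result for closures of minimal environments is available — the uniform behaviour illustrated in Example~\ref{ex:auth-comp} — the second step collapses to a single citation; otherwise the construction above is the route I would take.
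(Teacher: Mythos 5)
Your first paragraph is essentially the paper's proof. The paper argues exactly as you do: it first checks that well-formedness is preserved along the step (via Lemma~\ref{lem:wf-preserve}, by inverting the transition into the $\tau$ and synchronisation cases), and then observes that, for the given oracle $\Omega$ and for $D'=D\less\Delta$, rule \textsc{C-Tra} of Definition~\ref{def:closure} makes $\keyword{closure}_{\Omega,D'}(\Delta')$ (the environments stripped from $\widetilde{E_1}$) a subset of $\keyword{closure}_{\Omega,D}(\Delta)$, so every final environment reachable from $\Delta'$ is already declared a non-error by $\keyword{comp}(\Delta)$. Where you diverge is your second half: you correctly notice that Definition~\ref{def:compliance} quantifies over \emph{all} fair oracles and \emph{all} fixed points containing $\Delta'$, whereas the subset argument only covers the single pair $(\Omega, D\less\Delta)$ inherited from the hypothesis. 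The paper's written proof stops at the subset relation and does not discharge this quantifier explicitly; your construction of a spliced oracle $\hat\Omega$ and fixed point $\hat D$ is an attempt to close that gap, and your last paragraph correctly identifies its weak point: if $\Delta$ is reachable again from $\Delta'$, the spliced oracle is forced to repeat $\Omega(\Delta)$ rather than follow $\Omega^*(\Delta)$, so the two closures need not coincide as tuples and one must argue at the level of reachable stuck environments, or invoke an oracle/fixed-point independence property for closures of minimal environments, neither of which is proved in the paper or in your sketch. In short, your proposal subsumes the paper's argument and is more faithful to the letter of Definition~\ref{def:compliance}, but the extra step it adds is only sketched and would need to be isolated as a separate lemma to be fully rigorous.
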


\begin{corollary}[Compliance preservation]\label{cor:comp-preserve}
  If $\keyword{comp}(\Delta)$ and
  $D\diamond \Delta \lts\alpha D'\diamond \Delta'$
  then
  $\keyword{comp}(\Delta')$.
\end{corollary}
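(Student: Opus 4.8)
The plan is to chain the two preceding lemmas, using the fact that the hypothesis $\keyword{comp}(\Delta)$ already carries the minimality of $\Delta$ that Lemma~\ref{lem:lts-alts} requires. First I would observe that, by Definition~\ref{def:compliance}, the predicate $\keyword{comp}(\Delta)$ is only defined on minimal well-formed environments; hence the assumption $\keyword{comp}(\Delta)$ supplies, for free, both that $\Delta$ is minimal and that it is well-formed. This is the only point that needs any care, since the statement of the corollary does not mention minimality explicitly and one must recover it from the compliance hypothesis.

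With $\Delta$ minimal in hand, I would apply Lemma~\ref{lem:lts-alts} to the non-deterministic transition $D\diamond \Delta \lts\alpha D'\diamond \Delta'$. This yields the existence of a fair oracle $\Omega$ and a sum continuation $\Delta''$ such that the matching deterministic transition $\Omega\triangleright D\diamond \Delta \dlts\alpha D'\diamond \Delta'\blacktriangleright\Delta''$ holds. The role of Lemma~\ref{lem:lts-alts} here is precisely to bridge the non-deterministic semantics appearing in the hypothesis of the corollary with the deterministic semantics on which compliance is built.

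Finally, I would feed this deterministic transition into Lemma~\ref{sec:compliance-dlts}, whose premises are exactly $\keyword{comp}(\Delta)$ together with a deterministic step $\Omega\triangleright D\diamond \Delta \dlts\alpha D'\diamond \Delta'\blacktriangleright\Delta''$; its conclusion is $\keyword{comp}(\Delta')$, which is the desired result. I would note that although the oracle $\Omega$ produced by Lemma~\ref{lem:lts-alts} is a particular witness, the conclusion $\keyword{comp}(\Delta')$ delivered by Lemma~\ref{sec:compliance-dlts} is already the full statement of Definition~\ref{def:compliance}, universally quantified over all fair oracles and all fixed points containing $\Delta'$, so no further quantifier manipulation is needed. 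In short, there is essentially no hard step in the corollary itself: the real work is discharged entirely by the two lemmas, and the only subtlety is the implicit extraction of minimality and well-formedness from the compliance hypothesis before Lemma~\ref{lem:lts-alts} can be invoked.
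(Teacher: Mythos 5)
Your proof is correct and follows exactly the route the paper intends: extract minimality (and well-formedness) of $\Delta$ from the hypothesis $\keyword{comp}(\Delta)$ via Definition~\ref{def:compliance}, apply Lemma~\ref{lem:lts-alts} to obtain a matching deterministic transition under some fair oracle, and conclude with Lemma~\ref{sec:compliance-dlts}. The paper leaves the corollary's derivation implicit, but this chaining of the two preceding lemmas is precisely the intended argument, so nothing further is needed.
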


\begin{lemma}\label{lem:types-struct}
  If $\Gamma\Vdash {\cal M}_1\colon \Delta$ and ${\cal M}_1\pequiv {\cal M}_2$ then
  $\Gamma\Vdash {\cal M}_2\colon \Delta$.
  \end{lemma}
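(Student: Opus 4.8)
If $\Gamma\Vdash {\cal M}_1\colon \Delta$ and ${\cal M}_1\pequiv {\cal M}_2$ then $\Gamma\Vdash {\cal M}_2\colon \Delta$.

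The plan is to proceed by a case analysis on how ${\cal M}_1 \pequiv {\cal M}_2$ was derived. Since $\pequiv$ is defined as the least reflexive relation closed under the permutation axiom, exactly two cases arise. If the relation holds by reflexivity, then ${\cal M}_1 = {\cal M}_2$ and there is nothing to prove. Otherwise it holds by the permutation axiom, so ${\cal M}_1 = \PAR_{i\in I} {\pp p}_i \lhd P_i$ and ${\cal M}_2 = \PAR_{j\in J} {\pp p}_j \lhd P_j$ with $\keyword{permutation}(I, J)$. No induction is needed here, since the permutation axiom already relates any reindexing of a parallel composition in a single step.

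The crux of the argument is that the session environment $\Delta$ is a \emph{finite map} from participants to types, and hence is insensitive to the ordering of the threads: the map induced by $({\pp p}_i \colon T_i)_{i\in I}$ coincides with the map induced by its permutation $({\pp p}_j \colon T_j)_{j\in J}$. First I would invert the derivation of $\Gamma \Vdash {\cal M}_1 \colon \Delta$, which must conclude by \textsc{T-Ses}, extracting its three families of premises: (a) a per-thread typing $\Gamma \Vdash {\pp p}_i \lhd P_i \colon {\pp p}_i \colon T_i$ for each $i\in I$; (b) a minimal partition $\partition\Delta(\Delta_1,\dots,\Delta_k)$; and (c) $\keyword{comp}(\Delta_h)$ for each $h\in\{1,\dots,k\}$.

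Each premise transfers to ${\cal M}_2$ unchanged. The per-thread judgements in (a) are literally the same derivations, merely reindexed by the permutation; the minimal partition in (b) is defined purely in terms of the map $\Delta$ (cf.~Definition~\ref{def:partition}), so reordering leaves it untouched; and compliance in (c) is a predicate on the partition blocks $\Delta_h$, again order-independent. I would then re-apply \textsc{T-Ses} to these reindexed premises and conclude $\Gamma \Vdash {\cal M}_2 \colon \Delta$.

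I do not expect a genuine obstacle: the statement amounts to the observation that every ingredient of \textsc{T-Ses} --- the environment $\Delta$, its minimal partition, and the compliance checks --- depends only on the multiset of threads and on $\Delta$ viewed as a map, never on the linear arrangement of the parallel composition. The one delicate point is the degenerate case in which ${\cal M}_1$ is a single thread ${\pp p}\lhd P$ typed by \textsc{T-Thr}; this is subsumed by reflexivity, since the only permutation of a one-element index set is the identity, forcing ${\cal M}_2 = {\cal M}_1$.
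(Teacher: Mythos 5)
Your proof is correct and takes essentially the same approach as the paper, whose own argument is a one-line appeal to inversion of $\Gamma\Vdash {\cal M}_1\colon \Delta$ and the definition of $\pequiv$, with the details deferred to the Coq mechanisation. Your write-up simply spells out that inversion: the two-case analysis on $\pequiv$ (reflexivity versus the permutation axiom) and the observation that every premise of \textsc{T-Ses} --- the per-thread judgements, the minimal partition, and compliance --- depends only on $\Delta$ as a map and on the multiset of threads, so \textsc{T-Ses} can be re-applied unchanged.
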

  \begin{proof}
  If follows from  the inversion of  $\Gamma\Vdash {\cal M}_1\colon \Delta$ and 
  the definition of
  $\pequiv$. The result is mechanised in Coq~(cf.~App.~\ref{sec:coq}).
  \qed
  \end{proof}

\begin{theorem}[Subject Reduction]\label{th:sr}
  Let $\cal M$ be a closed session, and let $D$ be a fixed point including $\Delta$.
  Assume (1) $\Gamma\Vdash {\cal M}\colon \Delta$ and 
  (2) ${\cal M}\lts\alpha{\cal M'}$.

\noindent
We have $\Gamma\Vdash {\cal M}'\colon \Delta$ or 
$D\diamond \Delta\lts\alpha D'\diamond \Delta'$ and 
$\Gamma\Vdash {\cal M'}\colon \Delta'$.
\end{theorem}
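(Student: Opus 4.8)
The plan is to proceed by case analysis on the last rule deriving ${\cal M}\lts\alpha{\cal M}'$, after first inverting the session typing~(1). By \textsc{T-Ses}, (1) provides thread typings $\Gamma\vdash P_i\colon T_i$ with $\wf(T_i)$, together with $\Delta=\pp p_1\colon T_1,\dots,\pp p_n\colon T_n$, a minimal partition $\partition\Delta(\Delta_1,\dots,\Delta_k)$, and $\keyword{comp}(\Delta_j)$ for every $j$; since ${\cal M}$ is closed, each $P_i$ carries no free process variables. To recombine the residual thread typings back into a session typing for ${\cal M}'$ I will use three auxiliary results: process-variable substitution (if $\Gamma,\chi\colon\mu X.T\vdash P\colon T\{\mu X.T/X\}$ then $\Gamma\vdash P\{\mu\chi.P/\chi\}\colon T\{\mu X.T/X\}$), value substitution (if $\Gamma,x\colon S\vdash P\colon T$ and $\emptyset\vdash v\colon S$ then $\Gamma\vdash P\{v/x\}\colon T$), and an action-matching lemma stating that whenever $\Gamma\vdash P\colon T$ with $\wf(T)$ and $P$ fires an input or output action $a$ to $P'$, the type fires the same action, $T\lts{a}T'$, with $\Gamma\vdash P'\colon T'$ and $\wf(T')$. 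The action-matching lemma is proved by inversion on the derivation of $P\lts{a}P'$ (through \textsc{R-Inp}, \textsc{R-Out}, \textsc{R-Sum-L}) against the typing of $P$, folding in value substitution in the input case.

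The first disjunct, with $\Delta$ unchanged, covers \textsc{R-IfT} and \textsc{R-IfF}. There a thread $\pp r\lhd\ifk\; e\; \thenk\; P\; \elsek\; Q$ reduces to one branch; inverting \textsc{T-If} shows the chosen branch already has the type of the whole conditional, and since the environment LTS of Figure~\ref{fig:lts-types} has no rule matching a conditional, $\Delta$, its minimal partition, and the compliance of each $\Delta_j$ all transfer verbatim, giving $\Gamma\Vdash{\cal M}'\colon\Delta$. The case \textsc{R-Str} is handled by Lemma~\ref{lem:types-struct}: rewrite the source along $\pequiv$, apply the relevant case to the inner transition, and rewrite the target back, preserving whichever disjunct was obtained.

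The second disjunct, with a matching environment step, covers \textsc{R-Rec} and \textsc{R-Com}. For \textsc{R-Rec}, the unfolding participant $\pp r$ has type $\mu X.T$; inverting \textsc{T-Rec} and applying process-variable substitution gives $\Gamma\vdash P\{\mu\chi.P/\chi\}\colon T\{\mu X.T/X\}$, and \textsc{Se-Rec} lifted through \textsc{Se-Top} (using $\Delta\in D$) yields $D\diamond\Delta\lts\tau D\less\Delta\diamond\Delta'$ with $\Delta'=\Delta[\pp r\mapsto T\{\mu X.T/X\}]$. For \textsc{R-Com}, the action-matching lemma gives $T_{\pp p}\lts{\AIN \ppq l v}T'_{\pp p}$ and $T_{\pp q}\lts{\AOUT \ppp l v}T'_{\pp q}$ with the residuals typed at $T'_{\pp p},T'_{\pp q}$, so \textsc{Se-Com} via \textsc{Se-Top} produces $D\diamond\Delta\lts{\sync l p q}D\less\Delta\diamond\Delta'$ with $\Delta'=\Delta[\pp p\mapsto T'_{\pp p},\pp q\mapsto T'_{\pp q}]$. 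In both cases the residual types are well-formed (sub-terms after a prefix, respectively the unfolding of a well-formed type), so \textsc{T-Thr} re-types every thread of ${\cal M}'$ against $\Delta'$.

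What remains, and what I expect to be the main obstacle, is re-establishing the \textsc{T-Ses} side conditions for $\Delta'$: that $\Delta'$ admits a minimal partition all of whose parts are compliant. The step touches only the participants of a single part $\Delta_j$ (the unfolded participant, or the two communicating participants, which share a part since each names the other), leaving the remaining parts pointwise unchanged. Because neither unfolding nor synchronisation introduces new participant names, $\keyword{parties}$ is non-increasing on $\Delta_j$ and disjoint from the other parts, so the move restricts to a localized configuration step $D_j\diamond\Delta_j\lts\alpha D_j'\diamond\Delta_j'$ for any fixed point $D_j\ni\Delta_j$. Compliance of $\Delta_j'$ then follows from $\keyword{comp}(\Delta_j)$ by Corollary~\ref{cor:comp-preserve}, which, through Lemmas~\ref{lem:lts-alts} and~\ref{sec:compliance-dlts}, requires precisely the minimality of $\Delta_j$ that we have; moreover $\keyword{comp}(\Delta_j')$ presupposes $\Delta_j'$ minimal (a non-minimal reachable environment would trigger \textsc{C-Err} and falsify $\keyword{comp}(\Delta_j)$), so $\Delta_j'$ does not split. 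The minimal partition of $\Delta'$ is therefore the partition of $\Delta$ with $\Delta_j$ replaced by $\Delta_j'$, all of whose parts are compliant, and \textsc{T-Ses} gives $\Gamma\Vdash{\cal M}'\colon\Delta'$. The delicate bookkeeping is exactly this transport of the global environment move onto the affected minimal part and back onto the partition of $\Delta'$; the other cases reduce to the substitution and inversion lemmas above.
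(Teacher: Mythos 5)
Your proposal follows essentially the same route as the paper's proof: case analysis on the reduction, the value- and process-substitution lemmas, an I/O lemma matching thread actions against type transitions (the paper's Lemma~\ref{lem:sr}, which takes the type transition as a hypothesis rather than deriving its existence), and the transport of compliance and minimality through the single affected part of the partition via Corollary~\ref{cor:comp-preserve} (the paper's Lemma~\ref{lem:sr-minimal} and Corollary~\ref{cor:sr-minimal}). The differences are only in packaging; in particular, your observations that the two communicating participants must lie in the same part because each names the other, and that compliance of the residual part already entails its minimality, are exactly the arguments the paper uses.
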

\begin{proof}
By induction on (2), using value and process substitution (mechanised in Coq, cf.~App~\ref{sec:coq}),
Lemma~\ref{lem:types-struct}, 
and  Corollary~\ref{cor:comp-preserve}.
\qed
\end{proof}

Let $\keyword{Ended}(\PARI{\ppp_i\lhd P_i})$  when for all $i\in I$ we have $P_i =\INACT$.
\begin{theorem}[Progress]\label{th:progress}
  Let $\cal M$ be a closed session.
  If $\Gamma\Vdash{\cal M}\colon\Delta$ and
  does not exist ${\cal M}'$ s.t. ${\cal M}\lts\tau {\cal M}'$
  or 
  ${\cal M}\lts{\ASYN l \ppp \ppq} {\cal M}'$,
  for all $l,\ppp, \ppq$,  
then $\keyword{Ended}(\cal M)$.
\end{theorem}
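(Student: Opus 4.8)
The plan is to argue by contradiction. Suppose $\Gamma\Vdash{\cal M}\colon\Delta$, that $\cal M$ admits no $\tau$- and no synchronisation step, and yet $\neg\keyword{Ended}(\cal M)$. Inverting the typing, the representative case is \textsc{T-Ses}, where ${\cal M}=\PARI \ppp_i\lhd P_i$, $\Delta=\ppp_1\colon T_1,\dots,\ppp_n\colon T_n$, every thread is typed by \textsc{T-Thr} with $\wf(T_i)$, and the minimal partition $\partition\Delta(\Delta_1,\dots,\Delta_k)$ satisfies $\keyword{comp}(\Delta_j)$ for each $j$. Because $\cal M$ cannot fire $\tau$, no $P_i$ can be a recursion $\mu\chi.P'$ (else \textsc{R-Rec} applies) nor a conditional (else \textsc{R-IfT}/\textsc{R-IfF} applies, using that $\cal M$ is closed so every guard evaluates), and closedness excludes a bare process variable; hence each non-$\INACT$ thread $P_i$ is an input, an output, or a sum of such, and correspondingly each non-$\End$ type $\Delta(\ppp_i)$ is a branching or selection whose ${\tt top}$ participant is defined.

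First I would fix a non-ended participant $\ppp$ together with its intended partner $\ppq\eqdef{\tt top}(\Delta(\ppp))$. Either $\ppq\notin\dom(\Delta)$, in which case $\ppp$'s minimal block is already a \emph{deadlock} in the sense of Definition~\ref{def:deadlock}; or $\ppq$ lies in the same minimal block $\Delta_j$ and I trace the chain of ``waiting-for'' edges $\ppp\to\ppq\to\cdots$, reading off its shape by inversion of the process typing (\textsc{T-Inp}, \textsc{T-Out}, \textsc{T-Sum}) and of the type transitions (\textsc{E-In}, \textsc{E-Out}, \textsc{E-Sel-L}, \textsc{E-Bra-L}). Since $\Delta_j$ is finite, this chain must reach a participant typed $\End$ (so some party blocks on a terminated one) or revisit a participant (a waiting cycle); in either case $\Delta_j$ reaches a configuration that can fire neither \textsc{Se-Rec} nor \textsc{Se-Com} while some component is not $\End$, i.e.\ a deadlock. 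The remaining possibility is that $\ppp$ and $\ppq$ face each other directly with incompatible polarity or with disjoint tagged labels, which is a \emph{communication mismatch} by Definition~\ref{def:mismatch}. Either way the offending environment is an \emph{error} (Definition~\ref{def:error}).

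The final step transports this error into the closure so as to contradict $\keyword{comp}(\Delta_j)$. Using Lemma~\ref{lem:lts-alts} I would realise the non-deterministic path witnessing the error as a deterministic run under a suitable fair oracle $\Omega$ and a fixed point $D\ni\Delta_j$, so that the erroneous stuck environment is returned by \textsc{C-Rfl} and hence lies in ${\tt closure}_{\Omega,D}(\Delta_j)$ (Definition~\ref{def:closure}); being a deadlock or a mismatch it violates compliance, the desired contradiction, whence $\keyword{Ended}(\cal M)$. \textbf{The main obstacle} is the precise correspondence between ``the threads cannot synchronise'' and ``the environment reaches an error'': the subsumption rules \textsc{T-Sum-L} and \textsc{T-Sum-R} let a type advertise more branches than its process actually offers, so a pair whose types synchronise may still be stuck at the process level. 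Reconciling this slack is exactly the purpose of the discarded-branch bookkeeping carried by the sum continuation $\blacktriangleright$ in $\dlts{}$: I expect the heart of the argument to be a lemma ensuring that every label a thread can actually fire is examined by the closure, so that a genuinely stuck branch surfaces as a stuck error rather than being masked by an alternative branch the process cannot take.
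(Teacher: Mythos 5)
Your skeleton is broadly the right one --- invert \textsc{T-Ses}, rule out silent threads (recursions, conditionals, bare process variables), and play the compliance of each minimal block against the assumed stuckness of the session --- but the central step is not carried out, and the route you sketch for it does not work. To invoke Definition~\ref{def:deadlock} you must show that the \emph{entire} minimal block $\Delta_j$ fires no transition of Figure~\ref{fig:lts-types}: deadlock is a global property of the environment, not of one participant. Your local ``waiting-for'' chain from a single non-ended $\ppp$ (and likewise the case $\ppq\notin\dom(\Delta)$) does not exclude that some \emph{other} pair in $\Delta_j$ can synchronise at the type level, in which case $\Delta_j$ is not stuck, is not a deadlock, need not appear in its own closure, and your contradiction with $\keyword{comp}(\Delta_j)$ evaporates. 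Worse, the obstacle you correctly flag at the end --- \textsc{T-Sum-L}/\textsc{T-Sum-R} let a type advertise branches the process does not implement --- is precisely the situation in which the environment can move while the session cannot, so it is not a side issue to be absorbed by the sum-continuation bookkeeping of $\dlts{}$: it is the case your argument must, and does not, close. Your final appeal to Lemma~\ref{lem:lts-alts} to ``realise a path witnessing the error'' also misreads the situation: in the progress setting there is no path to simulate; whatever error there is must be the \emph{initial} block itself, entering the closure through a single application of \textsc{C-Rfl}.

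The paper closes the gap by arguing in the opposite direction. It first packages session-stuckness as ``not silent and not prompt'' (Lemma~\ref{lem:stuck-inv}) and then proves Lemma~\ref{lem:stuck-typing}: for a closed well-typed session, \emph{any} environment transition --- \textsc{Se-Rec} or \textsc{Se-Com} --- forces, by inversion of the typing of the involved participants, the session to be silent or prompt, hence able to move; contrapositively, a stuck session has a stuck environment. From there each $\Delta_j$ satisfies ${\tt stuck}_{\Omega,D_j}(\Delta_j)$, so ${\tt closure}_{\Omega,D_j}(\Delta_j)=\{\Delta_j\}$ by \textsc{C-Rfl}, and the deadlock clause of compliance (the mismatch clause is explicitly not needed) forces $\keyword{consumed}(\Delta_j)$, since the ``can move'' disjunct is already excluded. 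The proof then finishes with Lemma~\ref{lem:end-typing}, converting a consumed environment back into an ended session modulo a $\tau$-step that the hypothesis forbids. To repair your proposal you would need to supply an analogue of Lemma~\ref{lem:stuck-typing} --- a typing-inversion argument from environment transitions to session transitions --- rather than a lemma about which labels the closure examines.
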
 

\section{Automated Deductive Verification of Compliance}
\label{sec:implementation}
The typing system presented in~\S~\ref{sec:typing} relies on the notion of \emph{compliance}, which
is defined theoretically by relying on the novel definitions of 
\emph{deterministic session environment transitions}
and \emph{closure} introduced in~\S~\ref{sec:alg-lts}. 
In this section, we showcase how these theoretical notions can de deployed \emph{soundly} 
in mainstream programming languages and compilers 
by presenting a \emph{reference implementation of compliance} and by mechanising 
the properties of the implementation, which are that compliant environment are mismatch-free and deadlock-free.

Our goal is to define \emph{compliance} as a computable function 
that decides when a session environment has a ``good behaviour'', 
and in turn can be assigned by the typing system to a session.
We note that computability is an essential pre-requisite for decidable type checking
while assigning non-compliant environments to sessions is unsound because it invalidates 
progress, and must be avoided.

Towards this aim, we need to 
(1) deploy the function;
(2) provide a mechanised proof that the function terminates;
(3) provide a mechanised proof that the function decides freedom from mismatches and deadlocks.
This result is established once (by the type system designer): after that, 
the function can be used each time we invoke the type checker on a session process.

The proofs and their mechanisation in (2) and (3) are necessary because the designer 
can deploy a wrong implementation, e.g. it could have forgotten a case leading 
to an environment deadlock,  thus allowing to type check  sessions that deadlock at runtime.
By providing a computer-assisted proof that the implementation rules out errors and deadlocks in environments,
we can rely on Theorem~\ref{th:progress} to obtain that sessions typed by accepted environments do not deadlock.

In the remainder of the section, we tackle the requirements (1), (2) and (3) by  
defining function \emph{compliance} and its \emph{behavioural specification},
that is the contract of the function~\cite{Meyer92}.
We choose OCaml as target language and use tools of the OCaml ecosystem
relying on Why3~\cite{FilliatreP13} to 
enable automated deductive verification of behavioural specifications by using constraint solvers, 
e.g.~\cite{alt-ergo,MouraB08,cvc4},
while supporting imperative features, ghost code~\cite{FilliatreGP16},
and interactive proofs.

The verification has been done by using Cameleer~\cite{PereiraR20,Pereira24}, 
which in turn relies on~\cite{ChargueraudFLP19,FilliatreP13}.
Proofs of lemmas requiring induction are  done interactively in Why3.

\subsection{Structure of the Implementation}\label{sec:code}
To implement \emph{closure} (cf.~Definition~\ref{def:closure}) in OCaml, 
we use function \verb|cstep| receiving 
a fair oracle~$\Omega$, a decreasing set~$D$, a (ghost) fixed point~$\cal W$,
a session environment~$\Delta$,
and a (ghost) list of environments~$\cal H$ representing the \emph{history}
of the visited environments; the function returns an environment.
Function \verb|compliance| invokes \verb|cstep| in order to accept or reject the environment~$\Delta$:
\begin{ocamlsc}
  let[@ghost] rec cstep (o : oracle) (d : typEnv list) 
    ((w : typEnvRedexes)[@ghost]) (delta: typEnv)
    ((history : typEnv list)[@ghost]):typEnv = $\cdots$
  let[@ghost] compliance (o : oracle) (d : typEnv list)
    ((w : typEnvRedexes)[@ghost]) (delta : typEnv) :bool =
    try let m = cstep o d w delta [] in consumed m
    with | Fixpoint h -> (* h = h0, e *) let e = last h in let h0 = pre h 
    in mem_typEnv e h0 && sound e next | _ -> raise NotCompliant
  \end{ocamlsc}
The behavioural specification of the functions is described in~\S~\ref{sec:verification}.
Ghost parameters are used both to provide a semantics to the fixed point mechanism and to prove the 
soundness of the accepted environments, 
and do not 
have computational interest: all ghost code referring to such parameters should be 
erased from the regular code after providing the proof effort~\cite{FilliatreGP16,PereiraR20}.

In function \verb|cstep| we use exceptions to tackle different behaviours of  environments.
In all cases but for exception \verb|Fixpoint|,
termination by raising an exception determines failure of establishing compliance.

\begin{definition}[Positive exits]\label{def:exits}
  Positive exits of function \verb|cstep| are listed below.
  A positive exit implies that the parameter $\Delta$ of \verb|cstep| satisfies compliance.
  \begin{center}
    \begin{tabular}{|c|c|c|c|c|}
      \hline
    \ \textbf{\keyword{name}}\ &
    \ \textbf{\keyword{param}}\ &
    \ \textbf{\keyword{exit}}\ &   
    \ \textbf{\keyword{exception}}\ &  
    \ \textbf{\keyword{positive}}
    \\
    \hline
    \verb|cstep|
    &$\Omega,D,{\cal W},\Delta,{\cal H} $
    &\checkmark
    &&
    \checkmark
    \\
    \hline
    &
    &&\verb|Fixpoint| 
    &\checkmark
    \\
    \hline
    \end{tabular}
    \end{center}
  \end{definition} 
W.r.t. the signatures of \verb|cstep| and of \keyword{closure} in Definition~\ref{def:closure},
the non-ghost parameters are the same 
while  
the return type is different, because \keyword{closure} returns a set of environments.
Remember that the aim of the returned set of environments is to establish \emph{compliance}
by verifying that all the final environments are not a communication mismatch, 
or a deadlock (cf.~Definition~\ref{def:compliance}).
Function \verb|cstep| achieves the same result
by using exception handling and ghost parameters.

The body of \verb|cstep| is recursive, and  contains sub-calls of the form\\
${\tt cstep}(\Omega, D\less\Delta, {\cal W}, \Delta', ({\cal H},\Delta))$:
the first parameter $\Omega$ is the oracle and is the same in all calls;
the second parameter $D\less\Delta$ corresponds to the removal of $\Delta$ from the decreasing set~$D$; 
the third parameter $\cal W$ is the fixed point and is the same in all calls;
the fourth parameter $\Delta'$ is obtained by updating the type of one or of two participants returned by 
the oracle~$\Omega$ (cf.~Definition~\ref{def:fair});
the last parameter appends $\Delta$ to the history $\cal H$: in the remainder of the section,
the \emph{notation} ${\cal H},\Delta$ indicates that  $\Delta$ is the last environment visited in
${\cal H}\cup\{\Delta\}$.

\subsection{Verification}\label{sec:verification}
The verification of function \verb|compliance|
relies on the behavioural specification and verification 
of function \verb|cstep|, which in turn relies on auxiliary lemmas.
\begin{figure}[t]
\begin{tabular}{|c|c|c|c|l|l|l|}
  \hline
\ \textbf{\keyword{name}}\ &
\ \textbf{\keyword{param}}\ & 
\ \textbf{\keyword{result}}\ &  
\ \textbf{\keyword{variant}}\ &
\ \textbf{\keyword{requires}}\ &
\ \textbf{\keyword{raises}} \ &
\ \textbf{\keyword{ensures}} \
\\
\hline
\verb|cstep|
&$\Omega$ 
&$\Delta_o$
&$|D|$ 
&
$\keyword{fair}(\Omega)$ 
&
\verb|OracleNotFair|$ \Rightarrow\falsek$
&
$\keyword{cons}(\Delta_o)$
\\
\hline
&$D$&
&&
$\keyword{isFix}({\cal W},\Delta,|D|*2)$
&
\verb|Fixpoint|$({\cal H}',\Delta') \Rightarrow$ &
\\
&&&&& $\Delta'\in {\cal H}'\land \keyword{sound}_\Omega(\Delta')$ &
\\
\hline
&${\cal W}$
&&&
$D\cap {\cal H}=\emptyset$
&
\verb|Deadlock|$({\cal H}',\Delta') \Rightarrow$ &
\\
&&&&& $\Delta'\in {\cal H}'\land \keyword{mismatch}(\Delta')\vee$ &
\\
&&&&& $\Omega(\Delta')=\keyword{Ret}_0 \land \neg\keyword{cons}(\Delta')$
&
\\ 
\hline
&$\Delta$
&&&
$D\cup {\cal H}=\keyword{comb}(\cal W)$
&
\verb|WrongBranch|$({\cal H}',\Delta') \Rightarrow$ &
\\
&&&&&
$\exists \ppp,\ppq\,.\,\Omega(\Delta')=\keyword{Ret}_2(\ppp,\pp q)
\land$
&
\\
&&&&&
$\keyword{mismatch}_2(\Delta'(\ppp),\Delta'(\ppq))$
&
\\
\hline
&$\cal H$
&&&
$\Delta\in\keyword{comb}(\cal W)$
&
\verb|DecrNotFix|$({\cal H}',\Delta') \Rightarrow$ 
&
\\
&&&&&
$\Delta'\not\in {\cal H}'$
&
\\
\hline
&&&&&
\verb|NotMinimal|$({\cal H}',\Delta') \Rightarrow$ 
&
\\
&&&&&
$\neg \keyword{minimal}(\Delta')$
&
\\
\hline
\end{tabular}
\\[1em]
\begin{mathpar}
\small
{
\inference[\textsc{Sd-Rec}]{
\Omega(\Delta)=\keyword{Ret}_1(\pp p)\\ \neg\keyword{mismatch}(\Delta)}{
\keyword{sound}_\Omega(\Delta)  
}
\and
\inference[\textsc{Sd-Com}]{
  \Omega(\Delta)=\keyword{Ret}_2(\pp p, \pp q)\\
  \neg\keyword{mismatch}(\Delta)
  }{
\keyword{sound}_\Omega(\Delta)  
}}
\end{mathpar}
\begin{tabular}{|c|c|c|l|l|l|}
  \hline
\ \textbf{\keyword{name}}\ &
\ \textbf{\keyword{param}}\ & 
\ \textbf{\keyword{result}}\ &  
\ \textbf{\keyword{requires}}\ &
\ \textbf{\keyword{raises}} \ &
\ \textbf{\keyword{ensures}} \
\\
\hline
\verb|compliance|
&
$\Omega$ 
&$b$
&$\keyword{fair}(\Omega)$ 
&\verb|NotCompliant|$ \Rightarrow\truek$
&$b=\truek$
\\
\hline
&$D$
&&$\keyword{isFix}({\cal W},\Delta,|D|*2)$
&
&
\\
\hline
&$\cal W$
&&  $D=\keyword{comb}(\cal W)$&& 
\\
\hline
&$\Delta$
&&  $\Delta\in D$&&
\\
\hline
\end{tabular}
\caption{Behavioural specification of the implementation}\label{fig:contract}  
\end{figure}

Figure~\ref{fig:contract} presents the behavioural specification of the implementation.
The column \textbf{param} lists the input arguments of each function.
The column \textbf{result} lists the result returned by each function.
The column \textbf{variant} indicates the decreasing argument of \verb|cstep|;
note that \verb|compliance| is not recursive.
The column \textbf{requires} indicates the pre-conditions stated in terms of the 
{parameters}.
The column \textbf{raises} indicates the formula holding for the argument 
carried by  the exception; in the specification of \verb|cstep| we omit exceptions asserting true.
The column \textbf{ensures} indicates the post-condition stated in terms of the 
result.

There are two positive exits of function \verb|cstep| establishing the 
\emph{compliance} of the environment $\Delta$ received in input (cf.~Definition~\ref{def:exits}): 
termination, and   
raising \verb|Fixpoint|.
The conditions holding when \verb|cstep| raises an exception (cf. keyword \textbf{raises}) 
are discussed  below while illustrating the verification process.
Note that exceptions \verb|Fixpoint|, \verb|Deadlock|, \verb|Wrongbranch|, \verb|DecrNotFix| and
\verb|NotMinimal| carry the history ${\cal H}',\Delta'$,
where $\Delta'$ is the last visited environment.

The predicate $\keyword{sound}_\Omega(\Delta')$ relies on the result of 
the oracle and on Definition~\ref{def:mismatch}: if the oracle receives $\Delta'$ and returns one 
(cf.~rule \textsc{Sd-Rec}) or two (cf.~rule~\textsc{Sd-Com}) participants,
and $\Delta'$ is not a mismatch, then $\Delta'$ is sound.
The predicate $\keyword{cons}(\Delta')$ says that all participants in the 
environment~$\Delta'$ have type $\End$.

The post-condition (cf. keyword \textbf{ensures}) of \verb|cstep| says that the returned environment is consumed:
all participants  have type $\End$.
For what concerns the pre-conditions of \verb|cstep| (cf. keyword \textbf{requires}),
the predicate $\keyword{fair}(\Omega)$ implements Definition~\ref{def:fair}
by relying on constructors $\keyword{Ret}_2,\keyword{Ret}_1$ and
$\keyword{Ret}_0$:\\[1mm]
\centerline{
$\begin{array}{rll}
\keyword{fair}(\Omega) \eqdef
\forall \Delta\,.\, (&\forall\pp p\, \pp q\,.\,
\Omega(\Delta)= \keyword{Ret}_2(\pp p, \pp q)\Rightarrow
\keyword{top}(\Delta(\pp p))=\pp q \land 
\keyword{top}(\Delta(\pp q))=\pp p)\ \land
\\
&
(\forall \pp p \,.\,
\Omega(\Delta)= \keyword{Ret}_1(\pp p)\Rightarrow
\exists X\, T\,.\, \Delta(\pp p)= \mu X.T) \ \land 
\\
&
(\forall X\, T \, \pp r\ \pp p\ \pp q\,.\,
\Omega(\Delta)= \keyword{Ret}_0\Rightarrow\Delta(\pp r)\ne \mu X.T \ \land
\\
&\ \neg(\keyword{top}(\Delta(\pp p))=\pp q \land 
\keyword{top}(\Delta(\pp q)=\pp p)))
\end{array}
$}\\[1mm]
The predicate $\keyword{isFix}({\cal W},\Delta,n)$ says that 
${\cal W}$ is a fixed point of $\Delta$ (up-to depth $n$).
The core mechanism to analyse iso-recursive types and environments  is to rely on 
fixed points $\cal W$ of type \verb|typEnvRedexes|, that is a map from participants
to all type redexes up-to depth~$n$,
and on the projection of all combinations of these mappings into a set
of environments, denoted $\keyword{comb}(\cal W)$.
The depth~$n$ indicates how many type transitions 
$T\lts{\alpha_1}\cdots \lts{\alpha_n}T'$
 are considered (cf.~Figure~\ref{fig:lts-types});
these include the unfolding of iso-recursive types
$\mu X.T$ into $T\{\mu X.T/X\}$. 
Given a fixed point $\cal W$, we require that the decreasing set $D$ and the 
history~$\cal H$ partition the set $\keyword{comb}(\cal W)$ (cf. Figure~\ref{fig:contract}, 
function \verb|cstep|,
keyword \keyword{requires},
lines 3-4).

The pre-conditions of \verb|compliance| mirror those of \verb|cstep|, modulo the fact that
there is no history. 
The post-condition of \verb|compliance| ensures that the function returns true by exploiting (1) the
post-condition of \verb|cstep| and (2) the formula holding when \verb|cstep| raises \verb|Fixpoint|.
The exceptional exit of \verb|compliance| occurs when raising 
\verb|NonCompliant|, thus rejecting the input environment~$\Delta$.

\myparagraph{Termination}
The first result establishes that function \verb|cstep| terminates.
We instruct~\cite{PereiraR20} to use $|D|$ as decreasing measure, cf. the keyword 
\keyword{variant} in the function specification of Figure~\ref{fig:contract},
and obtain the desired result automatically. 

\myparagraph{Absence of communication mismatches}
In order  to show that environments accepted by \verb|compliance| are mismatch-free,
we ensure that positive exits of function \verb|cstep| (cf.~Definition~\ref{def:exits}) carry environments that
are not communication mismatches (cf.~Definition~\ref{def:mismatch}) by inspecting \verb|cstep|'s contract  in
Figure~\ref{fig:contract}.

The first positive exit is termination: \verb|cstep| returns $\Delta_o$.
The contract's clause with keyword \keyword{ensures} establishes 
that $\Delta_o$ is consumed:
by definition, $\Delta_o$ is not a mismatch.
The second positive exit corresponds to the exception \verb|Fixpoint|: 
the exceptions carries the history ${\cal H'},\Delta'$, where $\Delta'$ is the last visited environment.
The clause \keyword{raises} establishes that $\Delta'\in \cal H'$
and that $\Delta'$ is sound.
By inversion of rules \textsc{Sd-Rec}, \textsc{Sd-Com}, we obtain that
$\Delta'$ is not a mismatch.\qed

The automated verification is performed in~\cite{PereiraR20} and relies on 
the predicate $\keyword{mismatch}_2(T_1, T_2)$ (cf.~Figure~\ref{fig:contract}) 
to deal with wrong choices of sums: intuitively, the predicate follow
Definition~\ref{def:mismatch} by using types rather than participants.

\myparagraph{Absence of deadlocks}
Similarly, we show that positive exits of \verb|cstep| of Definition~\ref{def:exits} 
correspond to absence of deadlocks of Definition~\ref{def:deadlock}.

The first positive exit occurs when function \verb|cstep| returns $\Delta_o$.
The clause with keyword \keyword{ensures} establishes 
that $\Delta_o$ is consumed:
by definition, $\Delta_o$ is not a deadlock.
The second positive exit is raising exception \verb|Fixpoint|; 
the exception  carries the history~${\cal H}',\Delta'$, where $\Delta'$ is the last 
visited environment.
From the contract's clause with keyword \keyword{raises}, we infer that 
$\Delta'\in\cal H'$ and that $\Delta'$ is sound.

By inversion of rules \textsc{Sd-Rec}, \textsc{Sd-Com}, we obtain that
two cases arise: 
(1) there is a participant \pp p s.t. $\Omega(\Delta')=\keyword{Ret}_1(\pp p)$ and 
$\Delta'$ is not a mismatch; 
(2) there are participants \pp p, \pp q such that
$\Omega(\Delta')=\keyword{Ret}_2(\pp p, \pp q)$ and $\Delta'$ is not a mismatch. 
We show that in both cases (1) and (2) we have that $\Delta'$ can do a transition.
\begin{enumerate}
  \item By the fairness pre-condition of \verb|cstep|, we obtain\\[1mm] 
\centerline{
  $ 
  \Omega(\Delta')= \keyword{Ret}_1(\pp p)\Rightarrow
  \exists X\, T\,.\, \Delta'(\pp p)= \mu X.T
  $
}\\[1mm]
We apply \textsc{E-Rec}, \textsc{Se-Rec} of Figure~\ref{fig:lts-types} and find $\Delta''$ 
  s.t. 
  $\Delta'\lts\tau\Delta''$.
  \item 
  By the fairness pre-condition of \verb|cstep|, we obtain 
\\[1mm] 
\centerline{$ 
\Omega(\Delta')= \keyword{Ret}_2(\pp p, \pp q)\Rightarrow
\keyword{top}(T_{\pp p})=\pp q \land \keyword{top}(T_{\pp q})=\pp p
$}\\[1mm]
where $\Delta'(\pp p)\eqdef T_{\pp p}$ and 
$\Delta'(\pp q)\eqdef T_{\pp q}$.

By inversion of  $\keyword{top}(T_{\pp p})$ and  $\keyword{top}(T_{\pp q})$ 
(cf.~Definition~\ref{def:fair}), 
we obtain that 

$T_{\pp p}=\&_{i\in I} \pp q?l_i(S_i).T_i$ or 
$T_{\pp p}=\oplus_{i\in I} \pp q!l_i(S_i).T_i$, and 
$T_{\pp q}=\&_{j\in J} \pp p?l_j(S_j).T_j$ or
$T_{\pp q}=\oplus_{j\in J} \pp p!l_j(S_j).T_j$.

By hypothesis, $\Delta'$ is not a mismatch: Definition~\ref{def:mismatch} ensures that
two sub-cases arise:
$T_{\pp p}=\&_{i\in I} \pp q?l_i(S_i).T_i$ and 
$T_{\pp q}=\oplus_{j\in J} \pp p!l_j(S_j).T_j$ and 
${\cal L}(T_{\pp p})\cap {\cal L}(T_{\pp q})\ne \emptyset$,
or
$T_{\pp p}=\oplus_{i\in I} \pp q!l_i(S_i).T_i$, and 
$T_{\pp q}=\&_{j\in J} \pp p?l_j(S_j).T_j$ and 
${\cal L}(T_{\pp p})\cap {\cal L}(T_{\pp q})\ne \emptyset$.
In both cases we apply \textsc{Se-Com} of Figure~\ref{fig:lts-types} and find 
$\alpha, \Delta''$ s.t. $\Delta'\lts\alpha\Delta''$. \qed
\end{enumerate}

\section{Related Work}
\label{sec:discussion}
To the best of our knowledge, only few works follow an iso-recursive approach to session types.
\cite{HeuvelP22} proposes a decentralized analysis of multiparty protocols 
that is based on a typed  asynchronous \mbox{$\pi$-calculus} relying 
on the notion of router processes; deadlock-freedom is established by following
the priority-based approach of session types~\cite{DardhaG18}.
The rule to type check recursion types the continuation by 
 unfolding iso-recursive types and lifting priorities to 
a common greater highest priority.
Finally, type preservation holds up to unfolding 
(cf.~\cite[Theorem 2]{HeuvelP22}).

\cite{HorneP24} studies iso-recursive and equi-recursive subtyping for binary sessions.
Session types are interpreted as propositions of multiplicative/additive linear logic extended with least and
greatest ﬁxed points (cf.~\cite{CairesP10,Wadler14}).
The typing rules correspond to the proof rules in~\cite{BaeldeDKS22}, 
and include the unfolding of least and greatest fixed points. 
The authors compare the two subtyping relations,
 and note that the relations  preserve not only
the usual safety properties, but also termination.

Many recent papers~\cite{ZhouO25,ZhouWO24,ZhouZO23,ZhouZO22,Rossberg23,PatrignaniMD21,LigattiBN17} rely on 
iso-recursive types for variants of the $\lambda$-calculus, 
following the seminal work on Amber rules~\cite{theoryOfObjects96}.
While the setting is different from ours, these papers provide several insights on
the advantage of iso-recursive types and on their algorithmic implementation
 and mechanised verification.
Previous papers~\cite{BengtsonBFGM11,BackesHM14} studied iso-recursive types for a 
concurrent $\lambda$-calculus
that can be seen as the foundational theory of core $F^\sharp$.
 
As mentioned above, iso-recursive types have been first studied formally 
in the setting of Amber  rules~\cite{theoryOfObjects96}.
Pierce's book~\cite{Pierce02} further discusses the differences between iso-recursive and equi-recursive types.

\myparagraph{Future Work} 
Our plans go along two directions: completing the study in the paper and 
extending the language model and the type analysis.

Towards completion, we plan to conclude the mechanisation of subject reduction in Coq,
and to compare the performance of compliance checking in OCaml with the verification 
of deadlock freedom in bottom-up approaches (cf.~\cite{UdomsrirungruangY25}) relying 
on model-checking~\cite{ScalasY19}, eventually considering a realistic testing suite 
involving multiple participants and interactions~(cf.~\cite[Table 2]{ScalasY19}).

For what concerns extensions, there are two main features we are interested in: 
session delegation and asynchronous subtyping for multiparty session types.

Handling session delegation in session types is challenging and 
might require type constructors~\cite{GiuntiV10} or session channel 
decorations~\cite{GayH05,DCDMY09,Vasconcelos12} to preserve type soundness.  
Our plan is to enforce soundness at the type level, without affecting
the programmer's syntax.

Asynchronous subtyping (e.g.~\cite{ampst23}) is known to be undecidable 
for more than two participants.
We envision to overcome this obstacle to an algorithmic solution 
by considering a maximal depth of the search of the asynchronous outputs that 
can be anticipated, similarly to the bound on recursion in~\cite{CutnerYV22}.   

{\small
\myparagraph{Acknowledgements} 
We thank the reviewers for detailed and helpful comments.  
This work is partially supported by EPSRC EP/T006544/2,
EP/N027833/2, EP/T014709/2, EP/Y005244/1, EP/V000462/1,
Horizon EU TaRDIS 101093006, 
Advanced Research and Invention Agency (ARIA), and a grant from the Simons Foundation. 
}

\addcontentsline{toc}{section}{Bibliography}
\bibliography{st}
\bibliographystyle{splncs04}

\appendix

\section{Well-formedness}
\label{sec:wf}

We define well-formed types, noted $\wf(T)$, 
which are the types of our interest.

We require three properties: contractiveness, closure, and well-behaviour.
Contractiveness and closure are defined in~\S~\ref{sec:syntax}.
Well-behaviour 
is defined in \mbox{Figure~\ref{fig:wb}} and 
relies on the auxiliary definitions below.

Let $\multiset{l_1,\dots,l_n}$ denote \emph{multisets} of labels,
and let $\keyword{nodup}(\multiset{l_1,$ $\dots,l_n})$ hold 
whenever $l_i$ is unique, for all $i\in\{1,\dots,n\}$.

Let $\keyword{labels}$ be a function mapping types to multisets of labels,
and assume
$\keyword{labels}(\pp r?l(S).T) \eqdef \multiset l$, 
$\keyword{labels}(\pp r!l(S).T)\eqdef \multiset l$, and
$\keyword{labels}(T_1 + T_2) \eqdef \keyword{labels}(T_1) \uplus \keyword{labels}(T_2)$,
and $\keyword{labels}(T)\eqdef\emptyset$ otherwise.

Let $\keyword{polarity}$ be a partial function mapping types to polarities 
$p\in\{!, ?\}$, and let $\keyword{polarity}(T)\downarrow$ whenever 
$\keyword{polarity}$ is defined on type $T$.
Let
$\keyword{polarity}(\pp r?l(S).T) \eqdef ?$,
$\keyword{polarity}(\pp r!l(S).T)\eqdef !$, and
$\keyword{polarity}(T_1 + T_2) \eqdef p $ 
whenever $\keyword{polarity}(T_1) = p$ and $\keyword{polarity}(T_2)= p$,
and $\keyword{polarity}(T)$ be undefined otherwise.

Let $\keyword{participant}$ be a  partial function mapping types to participants,
and let $\keyword{participant}(T)\downarrow$ whenever $\keyword{participant}$ is defined on type $T$.
Let 
$\keyword{participant}(\pp r?l(S).T)$ $\eqdef \ppr$,
$\keyword{participant}(\pp r!l(S).T)\eqdef \ppr$, and
$\keyword{participant}(T_1 + T_2) \eqdef \ppr $
whenever $\keyword{participant}(T_1)=\ppr$ and 
$\keyword{participant}(T_2)=\ppr$,
and $\keyword{participant}(T)$ be undefined otherwise.

\begin{definition}[Uniform sums]\label{def:uniform}
A sum type $T_1+T_2$ is uniform, noted \\ $U(T_1+T_2)$, if all these conditions hold:
\begin{enumerate} 
\item $\keyword{nodup}(\keyword{labels}(T_1+T_2))$
\item $\keyword{polarity}(T_1 + T_2)\downarrow$
\item $\keyword{participant}(T_1 + T_2)\downarrow$ \ .
\end{enumerate}
\end{definition}  

\begin{figure}[t]
  \begin{mathpar}
    \inference[\textsc{Wb-Inp}]{
    \keyword{wb}(T)
    }{
    \keyword{wb}(\pp r?l(S).T)
    }
    \and
    \inference[\textsc{Wb-Out}]{
    \keyword{wb}(T)
    }{
    \keyword{wb}(\pp r!l(S).T)
    }
    \and
    \inference[\textsc{Wb-Sum}]{
      U(T_1 + T_2) 
      \quad \keyword{wb}(T_1)
      \quad \keyword{wb}(T_2)
      }{
      \keyword{wb}(T_1 + T_2)
      }
      \and
    \inference[\textsc{Wb-End}]{}{\keyword{wb}(\End)} 
    \and
    \inference[\textsc{Wb-Rec}]{%IO(T) \quad 
    \keyword{wb}(T)}{\keyword{wb}(\mu X.T)}  
    \and
    \inference[\textsc{Wb-Var}]{}{\keyword{wb}(X)}
  \end{mathpar}  
    \caption{Well-behaved types}\label{fig:wb}  
    \end{figure}  
    
\begin{definition}[Well-behaviour]\label{def:wb}
A type $T$ is well-behaved, noted $\keyword{wb}(T)$, when 
one of the rules of Figure~\ref{fig:wb} applies to $T$.
\end{definition}  

\begin{definition}[Well-formedness]
A type $T$ is well-formed, noted $\wf(T)$,
whenever is contractive, closed, and well-behaved. 
\end{definition}  

\begin{example}\label{ex:wf}
Examples of well-formed types include 
$\ppp ?l(\nat).\ppq !l(\bool).T_1$,  
and \\
$\ppp ?l_1(\stringk).T_1 +
\ppp ?l_2(\intk).T_2 +
\ppp ?l_3(\unit).T_3$,
and $\ppp !l_1(\stringk).T_1 +
\ppp !l_2(\intk).T_2 +$\\
 $
\ppp !l_3(\unit).T_3$,
and $\mu X.(\ppp ?l_1(\stringk).X +
\ppp ?l_2(\intk).T_2)$, 
whenever $l_1,l_2$ and $l_3$ are \mbox{distinct,}
and $T_1,T_2$ and $T_3$ are well-formed.

A first example of ill-formed type is $\ppp ?l(\stringk).T_1 +
\ppp ?l(\intk).T_2$, because the label~$l$ is used twice.
The type $\ppp ?l(\stringk).X$ is ill-formed, because it is open.

The type $T\eqdef\mu X.(\ppp ?l_1(\stringk).X) + \ppp ?l_2(\intk).T_1$ is ill-formed,
because is not uniform.
In fact, $\keyword{polarity}(T)$ is undefined, because 
$\keyword{polarity}(\mu X.(\ppp ?l_1(\stringk).X))$ is undefined. 

To recover well-formedness, whenever $\ppp ?l_2(\intk).T_1$ is well-formed, 
we can lift recursion up:
$\wf(\mu X.(\ppp ?l_1(\stringk).X + \ppp ?l_2(\intk).T_1))$.
Note that there is no risk to capture a free occurrence of~$X$ in $\ppp ?l_2(\intk).T_1$,
because the assumption $\ppp ?l_2(\intk).T_1$ well-formed
implies that $\ppp ?l_2(\intk).T_1$ is closed.

The type $\ppp !l(\intk).(\ppq ?l(\bool).\End + \End)$ is ill-formed,
because the continuation is not uniform:
$\keyword{polarity}(\ppq ?l(\bool).\End + \End)$ is undefined because
$\keyword{polarity}(\End)$ is undefined.
\qed
\end{example}  

Next, we show that well-formedness is preserved by type transitions of 
Figure~\ref{fig:lts-types}.
This property is mandatory to prove subject reduction (cf.~\S~\ref{sec:sr}).

Let $\keyword{sums}(T_1 + T_2)\eqdef\keyword{sums}(T_1)\cup \keyword{sums}(T_2)$,
and $\keyword{sums}(T)\eqdef\{T\}$ otherwise.

\begin{lemma}\label{lem:wb-rec}
If 
$\wb(T)$ then 
$\wb(T\{\mu X.T/X\})$.
\end{lemma}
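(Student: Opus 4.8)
The plan is to strengthen the statement into a substitution lemma that is closed under induction and then read off the claim as an instance. Concretely, I would first prove the auxiliary claim
\[
\wb(T') \text{ and } \wb(U) \implies \wb(T'\{U/X\}),
\]
for an \emph{arbitrary} well-behaved type $U$. The lemma then follows immediately: from $\wb(T)$ rule \textsc{Wb-Rec} gives $\wb(\mu X.T)$, so instantiating the auxiliary claim with $T' := T$ and $U := \mu X.T$ yields $\wb(T\{\mu X.T/X\})$.

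I would establish the auxiliary claim by induction on the derivation of $\wb(T')$ (equivalently, on the structure of $T'$). The cases \textsc{Wb-End}, \textsc{Wb-Inp}, \textsc{Wb-Out}, and \textsc{Wb-Rec} are routine: substitution commutes with the corresponding constructor (in the $\mu$-case under the usual Barendregt convention, taking the bound variable distinct from $X$ and not free in $U$), so re-applying the same rule to the inductive hypotheses closes them. The base case \textsc{Wb-Var}, $T' = Y$, splits on whether $Y = X$: if so $T'\{U/X\} = U$ and we invoke the hypothesis $\wb(U)$; otherwise $T'\{U/X\} = Y$ and we re-apply \textsc{Wb-Var}.

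The only delicate case is \textsc{Wb-Sum}, $T' = T_1 + T_2$ with $U(T_1+T_2)$ and $\wb(T_1),\wb(T_2)$. The inductive hypotheses give $\wb(T_1\{U/X\})$ and $\wb(T_2\{U/X\})$, so it remains to re-establish uniformity $U(T_1\{U/X\} + T_2\{U/X\})$. For this I would prove a separate invariance claim: whenever $\keyword{polarity}(T)\downarrow$, substitution leaves the three syntactic measures untouched,
\[
\keyword{labels}(T\{U/X\}) = \keyword{labels}(T),\quad
\keyword{polarity}(T\{U/X\}) = \keyword{polarity}(T),\quad
\keyword{participant}(T\{U/X\}) = \keyword{participant}(T).
\]
This is itself a short induction on $T$: the functions $\keyword{labels}$, $\keyword{polarity}$, and $\keyword{participant}$ inspect only the prefix heads and recurse through sums, but never descend into a prefix continuation, under a $\mu$, or into a variable. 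Hence in a summand with defined polarity every free occurrence of $X$ is guarded by a prefix and is therefore invisible to these measures. Applying the invariance claim to $T_1$ and $T_2$ (whose polarities are defined since $U(T_1+T_2)$ holds) preserves the polarity and participant conditions and preserves the label multiset, hence $\keyword{nodup}$; thus $U(T_1\{U/X\} + T_2\{U/X\})$ holds and \textsc{Wb-Sum} applies.

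The main obstacle is precisely isolating this invariance property and pinning down the structural reason behind it: a well-behaved sum can never have a bare variable, an $\End$, or a $\mu$-type as a summand, since those have undefined polarity, so the only free occurrences of $X$ that substitution can reach inside a polarity-defined type are guarded ones, which the measure functions ignore. Once this observation is made precise, every case of the substitution induction closes mechanically and the lemma follows.
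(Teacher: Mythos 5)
Your proof is correct and follows essentially the same route as the paper's: a structural induction whose only delicate case is \textsc{Wb-Sum}, closed by observing that $\keyword{labels}$, $\keyword{polarity}$ and $\keyword{participant}$ never see a guarded occurrence of $X$, so uniformity is preserved by the substitution. The one difference is cosmetic but welcome: you state the substitution lemma for an arbitrary well-behaved $U$ and then instantiate $U := \mu X.T$ via \textsc{Wb-Rec}, which makes precise the induction hypothesis that the paper's direct induction on $T$ (with the substituend $\mu X.T$ held fixed) uses implicitly.
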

\begin{proof} 
By structural induction on $T$.
\begin{description}
\item[Case $T=\ppr ?l(S).T_1$.] The I.H. is  $\wb(T_1\{\mu X.T/X\})$.
The result follows from $T\{\mu X.T/X\}=
\ppr ?l(S).(T_1\{\mu X.T/X\})$, and from rule \textsc{Wb-Inp} of Figure~\ref{def:wb}.
\item[Case $T=\ppr !l(S).T_1$.] Analogous.
\item[Case $T = T_1 + T_2$.]
The I.H. is  $\wb(T_1\{\mu X.T/X\})$ and $\wb(T_2\{\mu X.T/X\})$.
To conclude and infer 
$\wb(T\{\mu X.T/X\})=
\wb(T_1\{\mu X.T/X\} + T_2\{\mu X.T/X\})$
we verify $U(T_1\{\mu X.T/X\} + T_2\{\mu X.T/X\})$:
that is items (1), (2) and (3) of Definition~\ref{def:uniform} hold.
By inversion of $\wb(T)$ we obtain $U(T_1+T_2)$.
First, we note that 
$\keyword{polarity}(T')\downarrow$, forall $T'\in \keyword{sums}(T)$.
From this we infer $X\not\in \keyword{sums}(T)$
and in turn 
$\keyword{labels}(T_i\{\mu X.T/X\})=
\keyword{labels}(T_i)$, which trivially proves (1).
To see (2), we note that $\keyword{participant}(T)\downarrow$ and $X\not\in \keyword{sums}(T)$
imply $\keyword{participant}(T_1)=\keyword{participant}(T_2)$ and 
$\keyword{participant}(T_i\{\mu X.T/X\})=
\keyword{participant}(T_i)$, respectively.
In the same manner, from $\keyword{polarity}(T)\downarrow$ and $X\not\in \keyword{sums}(T)$
we obtain $\keyword{participant}(T_1)$ $=\keyword{participant}(T_2)$ and 
$\keyword{participant}(T_i\{\mu X.T/X\})=
\keyword{participant}(T_i)$, respectively, and in turn (3).
\item[Case $T = \End$.] Follows from $T\{\mu X.T/X\}=T$.
\item[Case $T = \mu Y. T_1$.] There are two sub-cases corresponding to 
(1) $X=Y$ and ($X\ne Y$).
In case (1) we have $T\{\mu X.T/X\}=T$ and we have done.
In case (2) the I.H. is $\wb(T_1\{\mu X.T/X\})$.
The result then follows from $T\{\mu X.T/X\}=\mu Y. (T_1\{\mu X.T/X\})$.
\item[Case $T = Y$.] There are two sub-cases corresponding to 
(1) $X=Y$ and (2) $X\ne Y$. In case (1) we have $T\{\mu X.T/X\}=\mu X.T$: we 
apply \textsc{Wb-Rec} to the hypothesis  $\wb(T)$ and conclude that $\wb(\mu X.T)$, as desired.
In case (2) we have $T\{\mu X.T/X\}=T$ and we have done.
\qed
\end{description} 
\end{proof}

\begin{lemma}\label{lem:wf-rec}
  If  
  $\wf(\mu X.T)$ then 
  $\wf(T\{\mu X.T/X\})$.
\end{lemma}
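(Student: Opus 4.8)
The plan is to unfold the definition of well-formedness into its three constituents — contractiveness, closure, and well-behaviour — and to establish each one for $T\{\mu X.T/X\}$ separately, reducing two of the three to facts that are already available.

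For well-behaviour, I would first invert the hypothesis $\wf(\mu X.T)$ to obtain $\wb(\mu X.T)$, and then invert rule \textsc{Wb-Rec} (the only rule that can derive a well-behaviour judgement for a recursive type) to recover $\wb(T)$. Lemma~\ref{lem:wb-rec} then yields $\wb(T\{\mu X.T/X\})$ directly, so this part is immediate. For closure, I would use that $\wf(\mu X.T)$ entails $\mu X.T$ closed, hence the free type variables of $T$ are contained in $\{X\}$. Since $\mu X.T$ is itself closed, substituting it for every free occurrence of $X$ in $T$ eliminates the only candidate free variable, so $T\{\mu X.T/X\}$ is closed; a one-line induction on the structure of $T$, tracking the free-variable set under substitution by a closed term, makes this precise.

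The main obstacle is contractiveness. Here I would exploit the guardedness reading stated in \S\ref{sec:syntax}: because $\mu X.T$ is contractive, every free occurrence of $X$ in $T$ lies under at least one input or output prefix, so no subterm of the forbidden shape $\mu X_1.\dots.\mu X_n.X_1$ occurs. Substitution then only replaces these already-guarded occurrences of $X$ by the closed, contractive type $\mu X.T$, and it cannot introduce a fresh unguarded cycle: any $\mu$-cycle in $T\{\mu X.T/X\}$ is either inherited from a substituted copy of $\mu X.T$ (contractive by hypothesis) or belongs to the ambient structure of $T$, which is a subterm of the contractive $\mu X.T$ and hence itself contractive.

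An alternative, and arguably cleaner, route uses the transition-system characterisation of contractiveness mentioned in the footnote of \S\ref{sec:syntax}. One observes that $\mu X.T \lts\tau T\{\mu X.T/X\}$ by rule \textsc{E-Rec}, so contractiveness of $\mu X.T$ — namely termination of the $\tau$-unfolding into a prefixed or ended type — transfers immediately to its reduct $T\{\mu X.T/X\}$. Whichever formulation I adopt, making the first, purely syntactic argument fully rigorous requires a dedicated substitution lemma of the form \emph{``substituting a closed contractive type into guarded positions preserves contractiveness''}, proved by induction on $T$ in the same case analysis as Lemma~\ref{lem:wb-rec}; I expect this bookkeeping, rather than any conceptual difficulty, to be where the real work lies.
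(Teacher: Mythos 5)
Your proof is correct and follows the same route as the paper: both decompose $\wf$ into contractiveness, closure, and well-behaviour, dispatch the first two directly (the paper simply declares them ``easy to see''), and reduce the third to Lemma~\ref{lem:wb-rec} via inversion of \textsc{Wb-Rec}. Your elaboration of the contractiveness and closure cases merely spells out what the paper leaves implicit.
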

\begin{proof}
It is easy to see that the hypothesis $\wf(\mu X.T)$ implies that 
$T\{\mu X.T/X\}$ is both contractive and closed.
To see that $\wb(T\{\mu X.T/X\})$, we invert $\wb(\mu X.T)$ and obtain 
$\wb(T)$. We apply Lemma~\ref{lem:wb-rec} and infer the desired result.
\qed
\end{proof}

\begin{lemma}\label{lem:wf-preserve}
If $\wf(T)$ and $T\lts\alpha T'$ then $\wf(T')$.
\end{lemma}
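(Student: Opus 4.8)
The plan is to proceed by induction on the derivation of $T\lts\alpha T'$, performing case analysis on the last rule applied, since the transition relation on types is generated by the five (left) rules of Figure~\ref{fig:lts-types} together with their right-hand counterparts \textsc{E-Sel-R} and \textsc{E-Bra-R}. Recall that $\wf(T)$ decomposes into three independent requirements: contractiveness, closedness, and well-behaviour $\wb(T)$. I would establish preservation of each component in turn within every case, so the bulk of the argument reduces to inversions of the well-behaviour rules of Figure~\ref{fig:wb}.

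First I would dispatch the prefix cases \textsc{E-Out} and \textsc{E-In}, where $T=\ppr !l(S).T_1$ (resp. $\ppr ?l(S).T_1$) fires the action and reduces to $T'=T_1$. Here contractiveness and closedness are immediate, since $T_1$ is a subterm of $T$ and the prefix binds no type variable; and $\wb(T_1)$ follows by inverting $\wb(T)$ through rule \textsc{Wb-Out} (resp. \textsc{Wb-Inp}). Thus $\wf(T_1)=\wf(T')$.

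Next I would handle the sum cases \textsc{E-Sel-L}, \textsc{E-Sel-R}, \textsc{E-Bra-L}, \textsc{E-Bra-R}, where $T=T_1+T_2$ and the active summand, say $T_1$, already satisfies $T_1\lts\alpha T'$ as the premise. The key observation is that full well-formedness of $T_1+T_2$ descends to each summand: contractiveness and closedness are inherited by subterms, while $\wb(T_1)$ is obtained by inverting $\wb(T_1+T_2)$ via rule \textsc{Wb-Sum} (which also yields $U(T_1+T_2)$ and $\wb(T_2)$, not needed here). With $\wf(T_1)$ in hand, the conclusion $\wf(T')$ follows directly from the induction hypothesis applied to the premise $T_1\lts\alpha T'$; the cases for $T_2$ are symmetric.

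Finally, the recursion case \textsc{E-Rec} has $T=\mu X.T_1\lts\tau T_1\{\mu X.T_1/X\}=T'$, which is precisely the conclusion of Lemma~\ref{lem:wf-rec}; I would simply cite it. I expect this to be the only genuinely substantial case, and the work has already been discharged by Lemmas~\ref{lem:wb-rec} and \ref{lem:wf-rec}, whose delicate point is ensuring that substitution neither breaks uniformity nor captures a free variable. The remaining cases are routine inversions, so the principal subtlety is merely to check, in the sum cases, that the active summand inherits all three well-formedness components rather than well-behaviour alone.
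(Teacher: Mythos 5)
Your proof is correct and follows the same route as the paper, which argues by induction on the derivation of $T\lts\alpha T'$ and discharges the \textsc{E-Rec} case via Lemma~\ref{lem:wf-rec}; you merely spell out the routine prefix and sum cases that the paper leaves implicit.
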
  
\begin{proof}
By induction on $T\lts\alpha T'$.
Case \textsc{E-Rec} relies on Lemma~\ref{lem:wf-rec}.
\qed
\end{proof}  

\section{Minimal Environments}
\label{sec:minimal}
 In this section, we formalise the projection 
 $\Delta\less\eNd$ used in Definition~\ref{def:partition}, and show 
 examples of both minimal and non-minimal environments.

Let $\emptyset\less\eNd\eqdef \emptyset$,
$\Delta,\ppp\colon \End\eqdef \Delta\less\eNd$, 
and $\Delta,\ppp\colon T\eqdef \Delta\less\eNd,\ppp\colon T$ otherwise.

\medskip\noindent
In order to illustrate minimality,
take the following example:
\begin{align*}
\Delta\eqdef
&\ppp\colon \ppq!l_1(S).\pp t?l_2(S).\End,
\ppq \colon \ppp?l_1(S).\pps !l_2(S).\End,
\\
&\pp r \colon\End,\pp s\colon\End,
\pp t \colon \ppp!l_2(S).\End,
\pps\colon \ppq?l_2(S).\End
\end{align*}
Note that 
\begin{align*}
\Delta\less\eNd = 
&\ppp\colon \ppq!l_1(S).\pp t?l_2(S).\End, 
\ppq \colon \ppp?l_1(S).\pps !l_2(S).\End,\\
&\pp t \colon \ppp!l_2(S).\End,\pps\colon \ppq?l_2(S).\End
\end{align*}
In fact, there is only a partition of $\Delta\less\eNd$, that is
$\{\Delta\less\eNd\}$.
In turn,  ${\cal P}_{\cal R}(\Delta\less\eNd)=\{\{\Delta\less\eNd\}\}$.

Since  
$S\in{\cal P}_{\cal R}(\Delta\less\eNd)$ implies 
$S=\{\Delta\less\eNd\}$, we cannot find the candidate that is 
required for breaking minimality,
hence $\Delta$ is minimal.

\medskip
An example of non-minimal environment is 
$$
\Delta\eqdef
\ppp\colon \ppq!l_1(S).\End,
\ppq\colon \ppp?l_1(S).\End, 
\ppr \colon \pps!l_2(S).\End,
\pps\colon \ppr?l_2(S).\End,
\pp t \colon\End,
\pp u\colon\End
$$
Note that $\Delta\less\eNd= \ppp\colon \ppq!l_1(S).\End,
\ppq\colon \ppp?l_1(S).\End, 
\ppr \colon \pps!l_2(S).\End,
\pps\colon \ppr?l_2(S).\End$.

The set of all partitions of $\Delta\less\eNd$ is 
${\cal P}_{\cal R}(\Delta\less\eNd)\eqdef\{\{\Delta\less\eNd\},\{\Delta_1, \Delta_2\}\}$ 
where
\begin{align*}
\Delta_1 &\eqdef \ppp\colon \ppq!l_1(S).\End, \ppq\colon \ppp?l_1(S).\End
\\
\Delta_2 &\eqdef \ppr \colon \pps!l_2(S).\End, \pps\colon \ppr?l_2(S).\End
\end{align*}

Take ${\cal P}_{\cal R}(\Delta\less\eNd)\ni S=\{\Delta_1, \Delta_2\}$, and note that
$S$ is a candidate for breaking minimality, because 
$\{\Delta_1, \Delta_2\}\ne\{\Delta\less\eNd\}$. 

From $\Delta\less\eNd = \bigcup_{\Delta'\in S}$ we obtain that $\Delta$ is not minimal.

\section{Deterministic Transition System}
\label{sec:alg-lts-rules} 
  \begin{figure}[t]
    \emph{Conditional sum}
    \begin{align*}
      R &:= \circ \mid T &
      \circ\oplus R &= R 
      &
      R \oplus \circ &= R &
      T_1 \oplus T_2 &= T_1 + T_2
    \end{align*}
      \emph{Transition rules for types: \framebox{$T\alts\alpha  T\blacktriangleright R$}}  
    \begin{mathpar}
    \inference[\textsc{E-Out-A}]{
    \emptyset\vdash v\colon S  
    }{
    {\pp r}!l(S).T \alts{\AOUT \ppr l v} 
    T
    \blacktriangleright \circ
    }
    \and
    \inference[\textsc{E-In-A}]{
      \emptyset\vdash v\colon S
    }{
      {\pp r}?l(S).T  \alts{\AIN \ppr l v} T
      \blacktriangleright \circ
    } 
    \and
    \inference[\textsc{E-Sel-A-L}]{
    T_1\alts{\AOUT \ppr l v} T'
    \blacktriangleright R_1
    }{
    T_1 + T_2\alts{\AOUT \ppr l v}
    T'
    \blacktriangleright R_1 \oplus  T_2
    }
    \and
    \inference[\textsc{E-Bra-A-L}]{
      T_1\alts{\AIN \ppr l v} T'
      \blacktriangleright R_1
    }{
      T_1 + T_2\alts{\AIN \ppr l v} T'
      \blacktriangleright R_1 \oplus T_2
    }
    \and 
    \inference[\textsc{E-Rec-A}]{
    }{
    \mu X.T \alts{\tau} T\{\mu X.T/X\}
    \blacktriangleright \circ
    }
    \end{mathpar}
    \emph{Transition rules for session environments: 
    \framebox{$D\diamond \Delta \alts\alpha D\diamond\Delta\blacktriangleright\Delta$}} 
    \begin{mathpar}
      \inference[\textsc{Se-Rec-A}]{
    \Delta\in D\and  
    T\alts\tau T'
    \blacktriangleright \circ
    }{
      D\diamond \Delta,{\pp p}:T\alts{\tau_{\pp p}}
      D\less\Delta\diamond\Delta,{\pp p}:T'\blacktriangleright\nabla^\circ
    }
    \and 
    \inference[\textsc{Se-Com-A}]{
      \Delta\in D\and  
      T_\ppp\alts{\AIN \ppq l v} T'_\ppp
      \blacktriangleright R_\ppp
      \and
      T_\ppq\alts{\AOUT \ppp l v}  T'_\ppq
      \blacktriangleright R_\ppq
      \\
      \Delta_1 = \text{ if } R_\ppp = T''_\ppp \text{ and }  R_\ppq= T''_\ppq \text{ then }
      \Delta, \ppp\colon T''_\ppp, \ppq\colon T''_\ppq \text{ else } \nabla^\circ
    }{
      D\diamond \Delta, \ppp\colon T_\ppp, \ppq\colon T_\ppq \alts{\ASYN l \ppp \ppq} 
      D\less\Delta\diamond\Delta, \ppp\colon T'_\ppp, \ppq\colon T'_\ppq
    \blacktriangleright   \Delta_1
    }
    \end{mathpar}  
    \caption{Semi-algorithmic transition system of session environments}
    \label{fig:alg-lts-types}
    \end{figure}  

  The semi-algorithmic rules for session environments and types are defined in 
  Figure~\ref{fig:alg-lts-types}; we omit right rules for sum types.
  As for the non-deterministic rules in Figure~\ref{fig:lts-types}, we assume that types and environments
  are well-formed (cf.~App.~\ref{sec:wf}).
   The main difference w.r.t. the rules in Figure~\ref{fig:lts-types} is the presence of a 
   \emph{separator}, noted $\blacktriangleright$, to introduce a new parameter, 
   which we call \emph{sum continuation}.

  Intuitively, a sum continuation contains the information that is discarded by the semi-algorithmic
  selection of a sum type, and that is used by the \emph{closure} function specified in Definition~\ref{def:closure}.
   To illustrate an example of sum continuation of types, consider $T_1+ T_2$ defined in 
   Example~\ref{ex:auth-typing} and further analysed in App.~\ref{sec:auth-derivation}.
  We have $T_1+ T_2\alts{\ppc?{\tt pwd}(``\text{1234}")} T_{\textrm{if}}\blacktriangleright T_2$.

  The sum continuation does not necessarily contain useful information.
  In order to discard the information provided in the sum continuation, we introduce two \emph{placeholders} for
  types and environments, noted $\circ$ and $\nabla^\circ$, respectively.
  Formally, type placeholders are embedded in the the semi-algorith{\-}mic transition rules for types, which  
  have the form $T\alts\alpha  T\blacktriangleright R$.
  The term $R$, which we call \emph{remnant}, is  obtained by the union of  $T$ and $\circ$.
  Since we want to discard the placeholder $\circ$ in sums, we define a \emph{sum operator on remnants},
  noted $\oplus$, s.t.  $\circ$ is neutral and $T_1\oplus T_2\eqdef T_1 + T_2$.

  Rules \textsc{E-Out-A},\textsc{E-In-A}, and \textsc{E-Rec-A} for types set  the 
  {placeholder} $\circ$  as sum continuation.
    Rules \textsc{E-Out-A} and \textsc{E-In-A} describe the transitions of single output and input types, 
    respectively, while 
    rule~\textsc{E-Rec-A} describes the silent transitions of recursive types.
    Rules \textsc{E-Sel-A-L} and \textsc{E-Bra-A-L} depict the transitions of selection and branching types, 
    respectively. Their continuations make use of the $\oplus$ operator on remnants defined on top of 
    Figure~\ref{fig:alg-lts-types}.

    To illustrate, consider the type $T_1 + T_2 + T_3$ and assume that we can infer the 
    transition:
    $T_2\alts{\AOUT \ppr l v} \End\blacktriangleright R_2$.
    By applying the (right) rule for selection \textsc{E-Sel-A-R}, we infer
    $T_1 + T_2\alts{\AOUT \ppr l v} \End\blacktriangleright T_1 \oplus R_2$.
    By applying \textsc{E-Sel-A-L}, we obtain 
    $T_1 + T_2 + T_3\alts{\AOUT \ppr l v} \End\blacktriangleright (T_1 \oplus R_2)\oplus T_3\eqdef R$.
    We may have (i) $R_2=\circ$ or (ii) $R_2 = T''$, for some $T''$.
    In case (i) we have $R = T_1 + T_3$ while in case (ii) 
    we have $R= T_1 + T'' + T_3$.

For what concerns the rules for environments, consider rule \textsc{Se-Rec-A}, 
which allows a participant \ppp\ to fire a \emph{decorated silent transition},
noted $\tau_\ppp$.  The sum continuation contains the placeholder $\nabla^\circ$, thus indicating that  
the environment \emph{will be ignored} by the \emph{closure} function specified in Definition~\ref{def:closure}.
Note that in the hypothesis we assume that a type $T$ firing a $\tau$ action has the placeholder~$\circ$ in
the sum continuation.
This holds because the internal action originated from the unfolding of a recursive type, 
which cannot occur in a well-formed sum (cf.~App.~\ref{sec:wf}).

    The remaining rule for environments is \textsc{SE-Com-A}, and describes a synchronisation.
    When the label is $\sync l p q$, that is a synchronisation 
    on the label~$l$ between the participant \ppp\ doing an input and 
    the participant \ppq\ doing an output, the sum  continuation can:
    (1) account for the remnant type of the branching type of $\pp p$,
    and for the remnant type of the selection type of $\pp q$, or
    (2) be the placeholder $\nabla^\circ$ to be ignored.
    Case (2) occurs when at least one of the participants has a singleton type, 
    and in turn one of the remnants is the placeholder~$\circ$:
    since no further synchronisation among the remnants is possible, we set the placeholder 
    $\nabla^\circ$ as sum continuation.
    
    \begin{figure}[t]
      \emph{Deterministic lts for session environments: 
      \framebox{$\Omega\tr D\diamond  \Delta \dlts{\alpha} D\diamond\Delta\blacktriangleright\Delta$}}       
    \begin{mathpar}
      \inference[\textsc{Se-Rec-D}]{
      \keyword{minimal}(\Delta)\and
      \Omega(\Delta)= {\pp p} \and
      D\diamond\Delta\alts{\tau_{\pp p}}D\less\Delta\diamond\Delta_1\blacktriangleright\nabla^\circ  
      }{
          \Omega\tr D\diamond 
          \Delta\dlts{\tau_{\pp p}}
          D\less\Delta\diamond\Delta_1\blacktriangleright\nabla^\circ 
      } 
      \and
      \inference[\textsc{Se-Com-D}]{
        \keyword{minimal}(\Delta)\and
      \Omega(\Delta)= ({\pp p}, {\pp q}) \and
      D\diamond\Delta\alts\alpha D\less\Delta\diamond\Delta_1\blacktriangleright\Delta_2  \\
      \alpha\in\{l@\pp p\bowtie \pp q, l@\pp q\bowtie \pp p \}\and
      l = {\tt first}({\cal L}(\Delta(\pp p))\cap {\cal L}(\Delta(\pp q)))
      }{
          \Omega\tr D\diamond 
          \Delta\dlts\alpha 
          D\less\Delta\diamond\Delta_1\blacktriangleright\Delta_2
      } 
      \end{mathpar}  
      \caption{Deterministic session environment reduction}\label{fig:lts-types-det}
      \end{figure}
    
    \medskip
    The deterministic transition system is defined in 
    Figure~\ref{fig:lts-types-det}
    on top of the semi-algorithmic system defined in~Figure~\ref{fig:alg-lts-types}. 
    The system contains two rules, \mbox{\textsc{Se-Rec-D}} and 
    \textsc{Se-Com-D},
     which are the deterministic counterpart
    of rules \textsc{Se-Rec} and \textsc{Se-Com} of Figure~\ref{fig:lts-types}, respectively.
    Both rules are only defined for \emph{minimal environments} (cf.~Definition~\ref{def:partition}).
    
Consider a fair oracle $\Omega$ (cf.~Definition~\ref{def:fair}).
Rule \textsc{Se-Rec-D} applies when the oracle selects a \emph{single participant} \pp p of the environment~$\Delta$,
 that is the type of \pp p in $\Delta$ is $\mu X.T$.
Given a decreasing set $D$, the rule allows for firing a $\tau_{\pp p}$ transition that unfolds
the type of \pp p to $T\{\mu X.T/X\}$ by means of rule \mbox{\textsc{Se-Rec-A}} of Figure~\ref{fig:alg-lts-types}:
if the configuration $D\diamond\Delta$ can do a semi-algorithmic $\tau_{\pp p}$ step and reach the configuration 
$C\eqdef D\less\Delta\diamond\Delta_1$ with sum continuation $\nabla^\circ$, 
then $D\diamond\Delta$  under $\Omega$ does a deterministic $\tau_{\pp p}$ step and reaches 
the configuration~$C$ with sum continuation~$\nabla^\circ$.

Rule \textsc{Se-Rec-Com-D} holds when the oracle selects a \emph{pair of participants} \pp p 
and \pp q of the environment $\Delta$, and allows a synchronisation on a label~$l$
that updates the types of \pp p and \pp q with their continuations.
Determinism is achieved by identifying~$l$: note that 
there could be multiple possible synchronisations among \pp p and \pp q (cf.~Example~\ref{ex:auth-ered}
and Example~\ref{ex:auth-eredD}).
We exploit assumption (iv) of ~\S~\ref{sec:dlts} and use the scheduling policy of labels to select the 
\emph{first label} in the intersection of the tagged labels of \pp p and \pp q.

Let $D$ be the decreasing set and $l$ be the selected label. 
By the oracle's fairness in~Definition~\ref{def:fair} 
and the intersection hypothesis,  we have that if $\Delta\in D$ then 
the configuration $D\diamond\Delta$ can fire a semi-algorithmic transition $\alpha$, 
\mbox{$\alpha=\sync l p q$} or $\alpha=\sync l q p$, and reach the configuration 
$C\eqdef D\less\Delta\diamond\Delta'$ with sum continuation 
$\Delta''$, where both $\Delta'$ and $\Delta''$ are defined in rule \textsc{Se-Rec-Com-A} in the same Figure.
By applying rule \textsc{Se-Rec-Com-D}, we infer that $D\diamond\Delta$ under $\Omega$ fires 
the deterministic transition $\alpha$ and reaches the configuration $C$ with sum continuation~$\Delta''$.

We conclude by noting that a third case arises for a fair oracle~$\Omega$: \emph{no participants} in $\Delta$ 
are selected.
In this case, Definition~\ref{def:fair} ensures that (i) there are not participants in $\Delta$ 
having a recursive type, and (ii) there are not a pair of participants in $\Delta$ that can synchronise among each 
other. That is, $\Delta$ is stuck (cf.~Definition~\ref{def:stuck}). 

\section{Multistep Transitions}
\label{sec:multistep}
We show that the relation $\Lts{}$ of Definition~\ref{def:closure} can be used to
describe a multistep reduction system w.r.t. 
the non-deterministic transition system \mbox{$D\diamond\Delta\lts\alpha D\diamond\Delta$} of 
Figure~\ref{fig:lts-types}.

\begin{lemma}[Multistep reduction]\label{lem:multistep}
  If $D_1\diamond \Delta_1\lts{\alpha_1}\cdots$
  $\lts{\alpha_{n-1}} D_n\diamond\Delta_n$ and $\Delta_i$ is minimal, 
  $i\in\{1,\dots,n\}$, and 
  $D_n\diamond\Delta_n\nlts{\alpha} D'\diamond \Delta'$
  then there is a fair oracle~$\Omega$  such that 
  $\Omega\triangleright D_1\diamond \Delta_1\Lts{}\widetilde{C_1},D_n\diamond\Delta_n,\widetilde{C_2}$.  
  \end{lemma}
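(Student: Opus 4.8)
The plan is to induct on the length $n$ of the run, constructing the fair oracle $\Omega$ incrementally from the front and driving the closure $\Lts{}$ along the ``main'' spine $\Delta_1,\dots,\Delta_n$, while each discarded sum continuation is allowed to spawn an independent subtree. Before the induction I would record three routine facts. First, every semi-algorithmic step of Figure~\ref{fig:alg-lts-types} projects onto a non-deterministic one of Figure~\ref{fig:lts-types} (\textsc{Se-Rec-A} gives \textsc{Se-Rec}, \textsc{Se-Com-A} gives \textsc{Se-Com}, and the guard $\Delta\in D$ with \textsc{Se-Top} rebuilds the configuration step); consequently, if $D\diamond\Delta$ fires no non-deterministic transition then it fires no deterministic one, so ${\tt stuck}_{\Omega,D}(\Delta)$ holds for every $\Omega$ (Definition~\ref{def:stuck}). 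Second, fair oracles exist: the greedy oracle that returns a recursive-typed participant when one exists, otherwise a synchronising pair when one exists, otherwise ``undefined'', satisfies Definition~\ref{def:fair}. Third, $\Lts{}$ is total — it relates every configuration, and the placeholder $\nabla^\circ$ (which contributes the empty tuple), to some e-configuration tuple — which follows from the termination remark after Definition~\ref{def:closure} ($|D|$ strictly decreases at each \textsc{C-Tra}, since firing requires $\Delta\in D$) together with the exhaustive split stuck\,/\,not-minimal\,/\,fires-a-step underlying \textsc{C-Rfl}, \textsc{C-Err}, \textsc{C-Tra}.

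For the base case $n=1$ the run is empty and $D_1\diamond\Delta_1$ is stuck by hypothesis; I would take any fair oracle $\Omega$, note ${\tt stuck}_{\Omega,D_1}(\Delta_1)$ by the first fact, and conclude $\Omega\triangleright D_1\diamond\Delta_1\Lts{}D_1\diamond\Delta_1$ by \textsc{C-Rfl}, which is the claim with empty $\widetilde{C_1},\widetilde{C_2}$. For the step $n\geq 2$, I would first apply the induction hypothesis to the tail $D_2\diamond\Delta_2\lts{\alpha_2}\cdots\lts{\alpha_{n-1}}D_n\diamond\Delta_n$ (minimal environments, stuck endpoint), obtaining a fair oracle $\Omega'$ with $\Omega'\triangleright D_2\diamond\Delta_2\Lts{}\widetilde{G_1},\,D_n\diamond\Delta_n,\,\widetilde{G_2}$. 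From \textsc{Se-Top} I read off $\Delta_1\in D_1$, $D_2=D_1\less{\Delta_1}$ and $\Delta_1\lts{\alpha_1}\Delta_2$; since $\Delta_1$ is minimal, Lemma~\ref{lem:lts-alts} supplies a fair oracle $\Omega_1$ and a sum continuation $\Theta_1$ with $\Omega_1\triangleright D_1\diamond\Delta_1\dlts{\alpha_1}D_2\diamond\Delta_2\blacktriangleright\Theta_1$. I then define $\Omega$ to coincide with $\Omega'$ except that $\Omega(\Delta_1)=\Omega_1(\Delta_1)$; since fairness is a pointwise condition on $(\Omega(\Delta),\Delta)$, $\Omega$ is fair.

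The crux — and the step I expect to be the main obstacle — is showing that overriding the oracle at $\Delta_1$ is invisible to the closure of $D_2\diamond\Delta_2$, so that the induction hypothesis survives under $\Omega$. The key observation is that \textsc{Se-Rec-D}/\textsc{Se-Com-D} fire at a node $E\diamond\Delta$ only when $\Delta\in E$, and along every branch of a closure tree the decreasing set only shrinks ($E'=E\less{\Delta}\subseteq E$); starting from $D_2=D_1\less{\Delta_1}$, every decreasing set occurring in the tree of $D_2\diamond\Delta_2$ is a subset of $D_2$ and hence excludes $\Delta_1$, so no step there ever consults $\Omega(\Delta_1)$. This ``locality of the oracle relative to $D$'' is exactly what lets the per-step oracles of Lemma~\ref{lem:lts-alts} be stitched into one fair oracle without conflict, and it is where the decreasing-set discipline does the real work. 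Granting it, $\Omega\triangleright D_2\diamond\Delta_2\Lts{}\widetilde{G_1},\,D_n\diamond\Delta_n,\,\widetilde{G_2}$ still holds, the first deterministic step holds for $\Omega$ (it depends only on $\Omega(\Delta_1)$ and minimality of $\Delta_1$), and totality gives $\Omega\triangleright D_2\diamond\Theta_1\Lts{}\widetilde F$. A final application of \textsc{C-Tra} yields
\[
\Omega\triangleright D_1\diamond\Delta_1\Lts{}\widetilde{G_1},\,D_n\diamond\Delta_n,\,\widetilde{G_2},\,\widetilde F,
\]
i.e.\ the claim with $\widetilde{C_1}=\widetilde{G_1}$ and $\widetilde{C_2}=\widetilde{G_2},\widetilde F$. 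The remaining ingredients (projection of deterministic onto non-deterministic steps, existence of a fair oracle, totality of $\Lts{}$) are comparatively routine.
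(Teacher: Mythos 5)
Your proof is correct and follows essentially the same route as the paper's: both construct a fair oracle that mimics the scheduling of the given non-deterministic run, lift each step via Lemma~\ref{lem:lts-alts} to a deterministic transition, and then unfold the closure by repeated \textsc{C-Tra} ending in \textsc{C-Rfl} at the stuck configuration $D_n\diamond\Delta_n$. The only difference is presentational: the paper builds the oracle $\Omega_{\tt ND}$ in one shot and glosses over its consistency as a function of environments alone, whereas your induction with oracle-stitching and the decreasing-set ``locality'' observation (each $\Delta_i$ with $i<n$ is removed from the decreasing set, so a later override or re-encounter cannot enable a conflicting step) makes that point explicit.
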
 
\begin{proof}
 We build the result by using Lemma~\ref{lem:lts-alts} and by matching 
 non-deterministic transitions $\lts\alpha$ with deterministic transitions $\dlts\alpha$
 scheduled by a fair oracle. 

Consider an oracle $\Omega_{\tt ND}$ selecting processes according to the order 
 of transitions in
$D_1\diamond \Delta_1\lts{\alpha_1}\cdots\lts{\alpha_{n-1}} D_n\diamond\Delta_n$.
It is easy to see that $\Omega_{\tt ND}$ is fair (cf.~Definition~\ref{def:fair}).
Because $\Omega_{\tt ND}$ mimics the behaviour of the non-deterministic scheduler, we can 
repeatedly use Lemma~\ref{lem:lts-alts} and have an exact match by ignoring the sum continuation after each step:
\begin{center}
$
\Omega_{\tt ND}\triangleright D_1\diamond \Delta_1\dlts{\alpha_1}
\cdots\dlts{\alpha_{n-1}}D_n\diamond\Delta_n\blacktriangleright \_
$
\end{center}
Since by hypothesis $D_n\diamond\Delta_n\nlts{\alpha}D'\diamond \Delta'$, we have 
${\tt stuck}_{\Omega_{\tt ND}, D_n}(\Delta_n)$.
By repeated applications of rule \textsc{C-Tra}, ending with the application of rule 
\textsc{C-Rfl} to $D_n\diamond \Delta_n$, we obtain the desired result,
$\Omega_{\tt ND}\triangleright D\diamond \Delta\Lts{}\widetilde{C_1},D_n\diamond\Delta_n,\widetilde{C_2}$.  
\qed
\end{proof}

\section{Sorting Rules}
\label{sec:sorting_rules}
\begin{figure}
\begin{mathpar}
\inference[\textsc{S-Nat}]{n\in\mathbb N}{\Gamma\vdash n\colon \nat}
\and
\inference[\textsc{S-Int}]{z\in\mathbb Z}{\Gamma\vdash z\colon \intk}
\and
\inference[\textsc{S-Str-E}]{}{\Gamma\vdash \textrm{""}\colon \stringk}
\and
\inference[\textsc{S-Str}]{c\in\textrm{Char} \quad \Gamma\vdash s\colon\stringk}{
    \Gamma\vdash c \ \widehat{} \ s\colon \stringk}
\and
\inference[\textsc{S-Bl}]{b\in\{\truek,\falsek\}}{\Gamma\vdash b\colon \bool}
\and
\inference[\textsc{S-Uni}]{}{\Gamma\vdash ()\colon \unit}
\and
\inference[\textsc{S-Var}]{}{\Gamma,x\colon S\vdash x \colon S}
\and
\inference[\textsc{S-Lt}]{    
\Gamma\vdash e_1\colon S\quad
\Gamma\vdash e_2\colon S\quad
S\in\{\nat,\intk\}    
}{
\Gamma\vdash e_1 < e_2\colon \bool    
}
\and
\inference[\textsc{S-Eq}]{    
\Gamma\vdash e_1\colon S\quad
\Gamma\vdash e_2\colon S
}{
\Gamma\vdash e_1 = e_2\colon \bool    
}
\and
\inference[\textsc{S-Neg}]{    
\Gamma\vdash e\colon \bool
}{
\Gamma\vdash \neg e\colon \bool    
}
\and
\inference[\textsc{S-And}]{    
\Gamma\vdash e_1\colon \bool\quad 
\Gamma\vdash e_2\colon \bool
}{
\Gamma\vdash e_1\land e_2 \colon \bool    
}
\and
\inference[\textsc{S-Or}]{    
\Gamma\vdash e_1\colon \bool\quad 
\Gamma\vdash e_2\colon \bool
}{
\Gamma\vdash e_1\vee e_2\colon \bool    
}
\end{mathpar}
\caption{Sorting rules}\label{fig:sorting}
\end{figure}   

The sorting rules used in the typing system (cf.~\S~\ref{sec:typing})
are presented in Figure~\ref{fig:sorting}.

We first present the rules for values~$v$.
Rule \textsc{S-Nat} allows for typing a number~$n$ in the set of naturals $\mathbb N$ with type $\nat$.
Rule \textsc{S-Int} is used to type a number~$n$ in the set of integers $\mathbb Z$ with type $\intk$.
Rule \textsc{S-Str-E} type-checks the empty string.
Non-empty strings are typed by concatenating characters $c$ with strings $s$, written 
$c \ \widehat{} \ s$,
by means of   rule \textsc{S-Str}.
Boolean constants are typed with rule \textsc{S-Bl}.
The only value of type $\unit$ is the empty value, noted $()$, and is typed by using rule 
\textsc{S-Uni}. 

Variables~$x$ are typed with rule \textsc{S-Var}.

Expressions~$e$ are typed with the remaining rules.
Rule \textsc{S-Lt} accepts a ``less than'' test among expressions $e_1$ and $e_2$
whenever $e_1$ and $e_2$ have the same numerical sort.
Rule \textsc{S-Eq} does an equality test among expressions $e_1$ and $e_2$ of the same sort.
Rule \textsc{S-Neq} is used to negate a boolean expression $e$.
Rule \textsc{S-And} type-checks a conjunction of two boolean expressions $e_1$ and $e_2$.
Rule \textsc{S-Or} accepts a disjunction of two boolean expressions $e_1$ and~$e_2$.

\section{Type Derivation of Example~\ref{ex:auth-typing}
\label{sec:auth-derivation}}
Remember the process definitions:
\begin{align*}
P_\ppa &\eqdef  \mu \chi. (P_1 + P_2)
\\     
    P_1 &\eqdef \PIN \ppc\pwd x \textit{Check}_\ppa 
    \\
    P_2 &\eqdef
    \PIN \ppc\ssh x
    \POUT \pps\auth\truek \chi + 
    \PIN \ppc\quit x \INACT
    \\
    \textit{Check}_\ppa &\eqdef
    \ifk\, x= \text{``miau''}\,\thenk\,
    \POUT \pps\auth\truek \chi\,
    \,\elsek\,
    \POUT \pps\fail{} \INACT 
  \end{align*}

Define the types 
\begin{align*}
T_{\textrm{if}}&\eqdef \pps!\auth(\bool).\mu X.T + 
\pps!\fail(\unit).\End\\
T_1 &\eqdef \ppc?{\tt pwd}(\stringk).T_{\textrm{if}}
\\
T_2&\eqdef \ppc?\ssh(\unit).\pps!\auth(\bool).\mu X.T
+ \ppc?{\tt quit}(\unit).\End
\end{align*}

We note that $T_1 = T'\{\mu X.T/X\}$ and
$T_2 = T''\{\mu X.T/X\}$, and in turn
$T_1 + T_2 = T'\{\mu X.T/X\} + T''\{\mu X.T/X\} = T\{\mu X.T/X\}$. 

The type derivation for the authorisation server $P_\ppa$ is:
\begin{mathpar}
\inference[\textsc{T-Rec}]{
  \inference[\textsc{T-Sum}]{
  (A)\ \chi\colon \mu X.T\vdash P_1\colon T_1
  \\
  (B)\ \chi\colon \mu X.T\vdash P_2 \colon T_2
  }  
  {
    \chi\colon \mu X.T\vdash P_1 + P_2\colon T_1 + T_2
  }
}{
\emptyset   \vdash P_\ppa\colon \mu X.T
}
\end{mathpar} 
We  show the details of the derivations (A) and (B) that allow for typing $P_1 $ and $P_2$,
respectively.

Let $\Gamma_1\eqdef \chi\colon \mu X.T,x\colon\stringk$.
The derivation (A) is:
\begin{mathpar}
\inference[\textsc{T-Inp}]{
 \inference[\textsc{T-If}]{
  (C)\  \Gamma_1\vdash \POUT \pps \auth \truek \chi\colon T_{\textrm{if}}
  \\
  (D)\ \Gamma_1\vdash\POUT \pps \fail {} \INACT\colon T_{\textrm{if}}\hspace{2em}
 }{
  \Gamma_1\vdash \textit{Check}_\ppa\colon T_{\textrm{if}}
 } 
}{
  (A)\ \chi\colon \mu X.T\vdash P_1\colon T_1
}
\end{mathpar}  
where 
\begin{mathpar}
 \inference[\textsc{T-Sum-L}]{
  \inference[\textsc{T-Out}]{
    \inference[\textsc{T-Var}]{}{
      \Gamma_1\vdash \chi \colon \mu X.T
    }
  }{
    \Gamma_1\vdash \POUT \pps\auth\truek\chi\colon 
    \pps!\auth(\bool).\mu X.T
  }
  }{
    (C) \ \Gamma_1\vdash \POUT \pps \auth \truek \chi\colon T_{\textrm{if}}
    }
    \and
    \inference[\textsc{T-Sum-R}]{
      \inference[\textsc{T-Out}]{
        \inference[\textsc{T-End}]{}{
          \Gamma_1\vdash \INACT \colon \End
        }  
      }{
        \Gamma_1\vdash \POUT \pps\fail{} \INACT\colon \pps!\fail(\unit).\End
      }  
    }{
      (D) \ \Gamma_1\vdash\POUT \pps \fail {} \INACT\colon T_{\textrm{if}}
    }  
\end{mathpar}    
Next, we show the derivation (B) that type checks $P_2$.

Let $\Gamma_2\eqdef \chi\colon \mu X.T,x\colon\unit$.
The derivation (B) is:

\begin{mathpar}
\inference[\textsc{T-Sum}]{
  (E)\ \chi\colon \mu X.T\vdash \PIN \ppc\ssh x \POUT \pps\auth\truek \chi\colon 
  \ppc?\ssh(\unit).\pps!\auth(\bool).\mu X.T
  \\ 
  (F)\ \chi\colon \mu X.T\vdash\PIN \ppc\quit x \INACT\colon \ppc?\quit(\unit).\End
  \hspace{11em} 
  }{
  (B) \ \chi\colon \mu X.T\vdash P_2\colon T_2}
\end{mathpar}  
where 
\begin{mathpar}
  \inference[\textsc{T-Inp}]{
    \inference[\textsc{T-Out}]{
      \inference[\textsc{T-Var}]{}{
        \Gamma_2\vdash \chi \colon \mu X.T
      } 
    }{
      \Gamma_2\vdash \POUT \pps\auth\truek \chi\colon\pps!\auth(\bool).\mu X.T
    }
  }
  {(E)\ \chi\colon \mu X.T\vdash \PIN \ppc\ssh x \POUT \pps\auth\truek \chi\colon 
  \ppc?\ssh(\unit).\pps!\auth(\bool).\mu X.T
  }
  \and
  \inference[\textsc{T-Inp}]{
    \inference[\textsc{T-End}]{}{
      \Gamma_2\vdash\INACT\colon\End
    }
  }{
  (F)\ \chi\colon \mu X.T\vdash\PIN \ppc\quit x \INACT\colon \ppc?\quit(\unit).\End 
  }
\end{mathpar}

\section{Proof of the Subject Reduction Theorem}
\label{sec:sr}
\newcommand{\W}{{\mathcal W}}

Consider the transition system of sessions of Figure~\ref{fig:session-semantics}, 
the non-deterministic transition system of session environments
of Figure~\ref{fig:lts-types},
and the type system of Figure~\ref{fig:typing}.
We prove that subject reduction holds (cf.~Theorem~\ref{th:sr}).

We start with substitution results.

\begin{lemma}[Value Substitution]\label{lem:substv}
Assume that the following hold:
\begin{enumerate}
\item $\Gamma,x : S\vdash P\colon T$
\item $\Gamma\vdash v\colon S$
\end{enumerate}  
We have that $\Gamma\vdash P\{v/x\}\colon T$.
\end{lemma}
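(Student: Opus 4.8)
The plan is to prove the statement by induction on the derivation of the typing judgement $\Gamma, x\colon S \vdash P\colon T$ (hypothesis (1)), rather than on the structure of $P$: the rules \textsc{T-Sum-L} and \textsc{T-Sum-R} are not syntax-directed, so the same $P$ may be typed by several derivations, and inducting on the derivation lets the induction hypothesis (I.H.) apply uniformly to each sub-derivation. The I.H.\ states that for every strict sub-derivation concluding $\Gamma'\!, x\colon S \vdash P'\colon T'$ and every value with $\Gamma' \vdash v\colon S$, we have $\Gamma' \vdash P'\{v/x\}\colon T'$. Before the main induction I would establish the companion \emph{expression substitution} property: if $\Gamma, x\colon S \vdash e\colon S'$ and $\Gamma \vdash v\colon S$ then $\Gamma \vdash e\{v/x\}\colon S'$, by a routine induction on the sorting derivation (Figure~\ref{fig:sorting}); this is needed for the prefixes carrying expressions. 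I would also use that value typing is stable under context extension (weakening), since values are closed and typed independently of the process context.

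The base cases are immediate. For \textsc{T-End}, $\INACT\{v/x\} = \INACT$ and $\Gamma \vdash \INACT\colon \End$ is re-derived by \textsc{T-End}. For \textsc{T-Var}, since process variables $\chi$ are syntactically distinct from value variables $x$, we have $\chi\{v/x\} = \chi$ and $(\Gamma, x\colon S)(\chi) = \Gamma(\chi) = \mu X.T$, so \textsc{T-Var} applies under $\Gamma$ unchanged. The sum cases \textsc{T-Sum}, \textsc{T-Sum-L}, \textsc{T-Sum-R} are routine, since substitution distributes over $+$, i.e.\ $(P_1+P_2)\{v/x\} = P_1\{v/x\} + P_2\{v/x\}$: apply the I.H.\ to the premise(s) and re-apply the same rule. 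For \textsc{T-Out} and \textsc{T-If}, which mention expressions, I would apply the expression substitution property to the payload/guard and the I.H.\ to the continuation(s); e.g.\ for \textsc{T-Out} on $\POUT \ppr l e {P'}$ the premises $\Gamma, x\colon S \vdash e\colon S'$ and $\Gamma, x\colon S \vdash P'\colon T'$ give $\Gamma \vdash e\{v/x\}\colon S'$ and, by I.H., $\Gamma \vdash P'\{v/x\}\colon T'$, whence \textsc{T-Out} concludes.

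The delicate cases are the two binders. For \textsc{T-Inp} on $\PIN \ppr l {x'} {P'}$, I would first use $\alpha$-conversion to assume $x' \ne x$ and that $x'$ does not occur in $v$ (automatic, as $v$ is a closed value), so that $P\{v/x\} = \PIN \ppr l {x'} {P'\{v/x\}}$ with no capture. The premise is $\Gamma, x\colon S, x'\colon S' \vdash P'\colon T'$; weakening of value typing gives $\Gamma, x'\colon S' \vdash v\colon S$, so the I.H.\ on the extended context yields $\Gamma, x'\colon S' \vdash P'\{v/x\}\colon T'$, and \textsc{T-Inp} concludes. For \textsc{T-Rec} on $\mu\chi.P'$, similarly $\alpha$-convert so that $\chi$ is fresh; since $x \ne \chi$ and $v$ is closed, $P\{v/x\} = \mu\chi.(P'\{v/x\})$. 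As $\Gamma$ is a finite map, the premise context is $\Gamma, \chi\colon \mu X.T, x\colon S$ up to the order of entries; weakening $\Gamma \vdash v\colon S$ to this context and applying the I.H.\ give $\Gamma, \chi\colon \mu X.T \vdash P'\{v/x\}\colon T\{\mu X.T/X\}$, and \textsc{T-Rec} re-derives $\Gamma \vdash (\mu\chi.P')\{v/x\}\colon \mu X.T$.

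I expect the only real subtlety---rather than a genuine obstacle---to be the bookkeeping in the binder cases: pushing the substitution under the input and recursion binders without capture, and extending the value-typing hypothesis $\Gamma \vdash v\colon S$ to the enlarged contexts $\Gamma, x'\colon S'$ and $\Gamma, \chi\colon \mu X.T$ before invoking the I.H. All of this is justified by $\alpha$-renaming together with the closedness of $v$ and the standard weakening property of the sorting judgement, none of which interacts with the iso-recursive type machinery; consequently the lemma goes through by the above induction.
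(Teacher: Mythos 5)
Your proposal is correct and takes essentially the same approach as the paper, which proves the lemma by induction on the derivation of hypothesis (1) and leaves the case analysis implicit. Your elaboration of the cases---in particular the observation that the non-syntax-directed rules \textsc{T-Sum-L}/\textsc{T-Sum-R} force induction on the derivation rather than on $P$, and the careful handling of the binder cases via $\alpha$-renaming, closedness of $v$, and weakening---is a faithful and complete expansion of what the paper asserts in one line.
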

\begin{proof}
By induction on (1).
  \qed  
\end{proof} 

\begin{lemma}[Process Substitution]\label{lem:substP}
  Assume that the following hold:
  \begin{enumerate}
  \item $\Gamma,\chi : T\vdash P\colon U$
  \item $\Gamma\vdash Q\colon T$
  \end{enumerate}  
  We have that $\Gamma\vdash P\{Q/\chi\}\colon U$.
  \end{lemma}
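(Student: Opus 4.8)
The plan is to proceed by induction on the derivation of hypothesis (1), namely $\Gamma,\chi\colon T\vdash P\colon U$, with a case analysis on the last typing rule applied. In each case I rewrite $P\{Q/\chi\}$ according to the structure of $P$, apply the induction hypothesis to the typing premises, and reassemble the derivation. Throughout I rely on the convention (cf.~\S~\ref{sec:syntax}) that the substituted process $Q$ is \emph{closed}, so that substituting $Q$ for $\chi$ causes no variable capture and, dually, extending the context with bindings that do not occur in $Q$ preserves hypothesis (2).

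The decisive case is \textsc{T-Var}. Here $P=\chi'$ with $(\Gamma,\chi\colon T)(\chi')=\mu X.T'$ and $U=\mu X.T'$. If $\chi'\neq\chi$, then $P\{Q/\chi\}=\chi'$ and the type of $\chi'$ is read off $\Gamma$ unchanged, so \textsc{T-Var} still applies under $\Gamma$. If instead $\chi'=\chi$, then $T=\mu X.T'=U$ and $P\{Q/\chi\}=Q$, so the goal $\Gamma\vdash Q\colon U$ is exactly hypothesis (2). This is the only point at which (2) is consumed.

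The binder cases \textsc{T-Rec} and \textsc{T-Inp} require the usual care with scopes, but are unproblematic precisely because $Q$ is closed. For \textsc{T-Rec}, $P=\mu\chi'.P'$ with $U=\mu X.T'$ and premise $\Gamma,\chi\colon T,\chi'\colon\mu X.T'\vdash P'\colon T'\{\mu X.T'/X\}$; assuming (by $\alpha$-conversion) that $\chi'\neq\chi$, I first weaken (2) to $\Gamma,\chi'\colon\mu X.T'\vdash Q\colon T$, which is sound since $\chi'$ does not occur free in the closed term $Q$, then apply the induction hypothesis to obtain $\Gamma,\chi'\colon\mu X.T'\vdash P'\{Q/\chi\}\colon T'\{\mu X.T'/X\}$, and conclude with \textsc{T-Rec}, using $(\mu\chi'.P')\{Q/\chi\}=\mu\chi'.(P'\{Q/\chi\})$. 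The input case \textsc{T-Inp} is analogous, weakening (2) with the value binding $x\colon S$. The cases \textsc{T-Out} and \textsc{T-If} additionally carry an expression premise of the form $\Gamma,\chi\colon T\vdash e\colon S$; since the sorting rules (cf.~App.~\ref{sec:sorting_rules}) never inspect process variables, this premise immediately yields $\Gamma\vdash e\colon S$, while the process premises are discharged by the induction hypothesis.

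The remaining cases \textsc{T-End}, \textsc{T-Sum}, \textsc{T-Sum-L}, and \textsc{T-Sum-R} are routine: substitution commutes with the corresponding process constructors, $\INACT\{Q/\chi\}=\INACT$, and applying the induction hypothesis to each premise reassembles the derivation. I expect the main obstacle to be purely bookkeeping, namely stating and discharging the weakening step used in the binder cases cleanly enough that the recursive type $\mu X.T'$ introduced by \textsc{T-Rec} lines up with the type carried by \textsc{T-Var}; once the convention that $Q$ is closed is in place, no genuine difficulty remains.
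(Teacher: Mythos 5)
Your proof is correct, and it fills in exactly what the paper leaves implicit: the paper's own proof of this lemma is a one-liner (``by structural induction on $P$'') deferring entirely to the Coq mechanisation (\texttt{types\_substP} in App.~\ref{sec:coq}), so your case analysis on the typing rules, the split on $\chi'=\chi$ versus $\chi'\neq\chi$ in \textsc{T-Var}, and the weakening of hypothesis (2) under the binders of \textsc{T-Rec} and \textsc{T-Inp} (justified by $Q$ being closed) are all the intended content. The one substantive difference is your choice of induction measure: you induct on the derivation of $\Gamma,\chi\colon T\vdash P\colon U$ rather than on the structure of $P$, and this is actually the better choice, since in the \textsc{T-Sum-L} and \textsc{T-Sum-R} cases the premise types the \emph{same} process $P$ as the conclusion, so a purely structural induction on $P$ would not furnish an induction hypothesis there without an auxiliary inner induction on the derivation height. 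Your version avoids that wrinkle outright.
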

  \begin{proof}
    By structural induction on $P$. See~App.~\ref{sec:coq}.  
    \qed  
  \end{proof}

We then introduce three simple results that will be used in the proof of Theorem~\ref{th:sr}.

Remember the definition of tagged labels of types ${\cal L}(T)$ (cf.~\S~\ref{sec:compliance}), 
and
let the \emph{labels of actions}, noted ${\cal L}(\alpha)$, be defined as:
${\cal L}(\AIN  \ppr l v)\eqdef\{l\}$,
${\cal L}(\AOUT \ppr l v )\eqdef\{l\}$,
${\cal L}(\tau)\eqdef\emptyset$.

Define the \emph{labels of processes}, noted ${\cal L}(P)$, 
as ${\cal L}(\ppr?l(x).P)\eqdef\{l\}$,
${\cal L}(\pp r!l\langle e \rangle.P)$ $\eqdef \{l\}$,
${\cal L}(P_1 + P_2)\eqdef{\cal L}(P_1)\cup{\cal L}(P_2)$,
${\cal L}(\ifk\; e \; \thenk\; P_1\; \elsek\; P_2)\eqdef 
{\cal L}(P_1)\cup{\cal L}(P_2)$,
and ${\cal L}(P)\eqdef\emptyset$ otherwise.

\begin{lemma}\label{lem:typ-inv}
If $\Gamma\vdash P\colon T$ and 
$l\in {\cal L}(P)$ then there exists $S$ s.t.  $l@S\in{\cal L}(T)$.
\end{lemma}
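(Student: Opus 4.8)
The plan is to proceed by induction on the derivation of $\Gamma\vdash P\colon T$, casing on the last typing rule applied. The key observation that makes everything routine is that ${\cal L}(P)$ collects only the \emph{top-level} guard labels of $P$ — it does not descend under input/output prefixes — and symmetrically ${\cal L}(T)$ collects only the top-level guard labels of $T$. Hence the statement is really an inversion property read off at the root of the derivation, and the continuations typed in the premises of \textsc{T-Inp} and \textsc{T-Out} never need to be inspected.

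First I would dispatch the cases \textsc{T-End}, \textsc{T-Rec}, and \textsc{T-Var}: here $P$ is $\INACT$, a recursion $\mu\chi.P'$, or a process variable $\chi$, each of which falls under the ``otherwise'' clause of ${\cal L}$, so ${\cal L}(P)=\emptyset$ and the hypothesis $l\in{\cal L}(P)$ is vacuously false. The prefix cases \textsc{T-Inp} and \textsc{T-Out} are then immediate and require no inductive hypothesis: for \textsc{T-Inp} we have $P=\PIN \ppr {l'} x {P'}$ and $T=\ppr?l'(S').T'$, so ${\cal L}(P)=\{l'\}$ and ${\cal L}(T)=\{l'@S'\}$; from $l\in{\cal L}(P)$ we get $l=l'$, and taking $S\eqdef S'$ yields $l@S\in{\cal L}(T)$. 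The case \textsc{T-Out} is identical with $T=\ppr!l'(S').T'$.

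Finally, the cases \textsc{T-Sum}, \textsc{T-Sum-L}, \textsc{T-Sum-R}, and \textsc{T-If} all proceed by the inductive hypothesis together with the fact that ${\cal L}$ distributes as a union over sums (and over the two branches of a conditional). For \textsc{T-Sum} with $P=P_1+P_2$ and $T=T_1+T_2$, from $l\in{\cal L}(P)={\cal L}(P_1)\cup{\cal L}(P_2)$ I pick the branch $P_k$ containing $l$ and apply the inductive hypothesis to the premise $\Gamma\vdash P_k\colon T_k$, obtaining $S$ with $l@S\in{\cal L}(T_k)\subseteq{\cal L}(T)$. The rules \textsc{T-Sum-L} and \textsc{T-Sum-R} are handled the same way, using ${\cal L}(T_1),{\cal L}(T_2)\subseteq{\cal L}(T_1+T_2)$, and \textsc{T-If}, where both branches carry the same type $T$, follows by applying the inductive hypothesis to whichever of $P_1,P_2$ contributes $l$.

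The argument carries no genuine obstacle; the only point worth care is that the process typing relation is \emph{not} syntax-directed — rules \textsc{T-Sum-L} and \textsc{T-Sum-R} apply to an arbitrary process — so the induction must be taken on the typing derivation rather than on the structure of $P$, guaranteeing that each case corresponds to exactly one possible last rule.
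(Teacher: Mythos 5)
Your proof is correct and follows exactly the approach the paper takes (the paper's proof is simply ``By induction on $\Gamma\vdash P\colon T$''): induction on the typing derivation, with the prefix cases read off directly, the nullary cases vacuous since ${\cal L}(P)=\emptyset$, and the sum/conditional cases discharged by the inductive hypothesis plus the union structure of ${\cal L}$. Your remark that the induction must be on the derivation rather than on $P$ (because of \textsc{T-Sum-L}/\textsc{T-Sum-R}) is a correct and worthwhile observation.
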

\begin{proof}
By induction on $\Gamma\vdash P\colon T$.
\qed
\end{proof}  

\begin{lemma}\label{lem:lts-inv}
If $\pp p\triangleright P\lts\alpha \pp p\triangleright P'$ and ${\cal L}(\alpha)=\{l\}$ 
then 
$l\in{\cal L}(P)$.
\end{lemma}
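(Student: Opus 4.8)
The plan is to argue by induction on the derivation of $\pp p\triangleright P\lts\alpha\pp p\triangleright P'$ using the single-thread transition rules of Figure~\ref{fig:session-semantics}, exploiting the hypothesis ${\cal L}(\alpha)=\{l\}$ to discard most rules at the outset. Since ${\cal L}(\tau)=\emptyset\neq\{l\}$, the rules \textsc{R-Rec}, \textsc{R-IfT} and \textsc{R-IfF}, whose label is $\tau$, cannot be the last rule applied. Rule \textsc{R-Com} is also excluded: its conclusion is a parallel composition of two distinct threads, hence never of the form $\pp p\triangleright P'$, and its label is a synchronisation rather than an input or output. Thus only \textsc{R-Inp}, \textsc{R-Out}, the two sum rules \textsc{R-Sum-L} and \textsc{R-Sum-R}, and the congruence rule \textsc{R-Str} remain.

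First I would dispatch the base cases by unfolding definitions. If the last rule is \textsc{R-Inp}, then $P=\PIN \ppq l x {P_1}$ and $\alpha=\AIN \ppq l v$, so ${\cal L}(P)=\{l\}$ and the claim is immediate; the \textsc{R-Out} case is symmetric, with $P=\POUT \ppq l e {P_1}$ and $\alpha=\AOUT \ppq l v$. For \textsc{R-Sum-L}, the process is $P=P_1+Q$ with premise $\pp p\triangleright P_1\lts\alpha\pp p\triangleright P'$; since $\alpha$ is unchanged, the induction hypothesis gives $l\in{\cal L}(P_1)$, and from ${\cal L}(P)={\cal L}(P_1)\cup{\cal L}(Q)$ we obtain $l\in{\cal L}(P)$. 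The right variant \textsc{R-Sum-R} is identical and yields $l\in{\cal L}(Q)\subseteq{\cal L}(P)$.

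The only step needing a separate observation is \textsc{R-Str}, and I expect it to be the sole (minor) subtlety. Here the premises are ${\cal M}_1'\pequiv{\cal M}_1$, ${\cal M}_1\lts\alpha{\cal M}_2$ and ${\cal M}_2\pequiv{\cal M}_2'$, with ${\cal M}_1'=\pp p\triangleright P$ and ${\cal M}_2'=\pp p\triangleright P'$. I would note that $\pequiv$, being the least reflexive relation satisfying the permutation axiom, relates two sessions only when they carry the same multiset of threads; in particular a single thread is structurally congruent only to itself. Hence ${\cal M}_1=\pp p\triangleright P$ and ${\cal M}_2=\pp p\triangleright P'$, so the inner transition ${\cal M}_1\lts\alpha{\cal M}_2$ is again of single-thread shape and the induction hypothesis applies directly, closing the case. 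All remaining reasoning is routine evaluation of the definitions of ${\cal L}(\alpha)$ and ${\cal L}(P)$.
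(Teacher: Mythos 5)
Your proposal is correct and follows the same route as the paper, which proves this lemma simply ``by induction on $\pp p\triangleright P\lts\alpha \pp p\triangleright P'$''; your case analysis (discarding the $\tau$-labelled rules and \textsc{R-Com}, handling input/output directly, using the induction hypothesis for the sum rules, and observing that a single thread is $\pequiv$-congruent only to itself for \textsc{R-Str}) is exactly the elaboration the paper leaves implicit.
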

\begin{proof}
By induction on $\pp p\triangleright P\lts\alpha \pp p\triangleright P'$.
\qed
\end{proof}

\begin{lemma}\label{lem:tlts-inv}
If $T\lts\alpha T'$ and ${\cal L}(\alpha)=\{l\}$ then there exists $S$ s.t.  
$l@S\in{\cal L}(T)$.
\end{lemma}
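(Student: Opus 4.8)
The plan is to proceed by induction on the derivation of $T\lts\alpha T'$ (Figure~\ref{fig:lts-types}), exactly as in the proofs of Lemma~\ref{lem:typ-inv} and Lemma~\ref{lem:lts-inv}. The first observation I would make is that the hypothesis ${\cal L}(\alpha)=\{l\}$ is non-empty, so $\alpha$ cannot be the silent action: by definition ${\cal L}(\tau)=\emptyset$, whereas $\alpha$ must carry the single label $l$. Hence $\alpha$ is either an input $\AIN \ppr l v$ or an output $\AOUT \ppr l v$, in both cases with unique label $l$. This immediately rules out the case \textsc{E-Rec}, whose transition label is $\tau$; that case is vacuous and no property of the unfolding $T\{\mu X.T/X\}$ is needed.

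For the base cases \textsc{E-Out} and \textsc{E-In}, we have $T=\ppr!l(S).T'$ or $T=\ppr?l(S).T'$, so by definition of tagged labels ${\cal L}(T)=\{l@S\}$, and the witness sort $S$ is read off directly, giving $l@S\in{\cal L}(T)$. For the inductive cases \textsc{E-Sel-L} and \textsc{E-Bra-L}, we have $T=T_1+T_2$ with a sub-derivation $T_1\lts\alpha T'$ whose label is still $\{l\}$; the induction hypothesis yields some $S$ with $l@S\in{\cal L}(T_1)$, and since ${\cal L}(T_1+T_2)={\cal L}(T_1)\cup{\cal L}(T_2)$ we conclude $l@S\in{\cal L}(T)$. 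The symmetric right rules \textsc{E-Sel-R} and \textsc{E-Bra-R} (cf.~Example~\ref{ex:auth-ered}) are handled identically, using the second summand $T_2$.

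There is essentially no obstacle here: the statement is a direct structural fact about how tagged labels propagate through the type LTS, and the only point worth flagging is the elimination of the recursion rule, which relies solely on the label of $\alpha$ being non-empty. I would therefore state the proof simply as ``By induction on $T\lts\alpha T'$,'' in line with the companion inversion lemmas, leaving the routine case analysis implicit.
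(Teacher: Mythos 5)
Your proposal is correct and matches the paper's proof, which is simply stated as ``By induction on $T\lts\alpha T'$''; your explicit case analysis (base cases \textsc{E-Out}/\textsc{E-In} yielding the witness sort $S$ directly, sum rules via the induction hypothesis and the union clause of ${\cal L}$, and \textsc{E-Rec} eliminated because ${\cal L}(\tau)=\emptyset$) is exactly the routine argument the paper leaves implicit.
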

\begin{proof}
  By induction on $T\lts\alpha T'$.
  \qed
\end{proof}

The following lemmas are needed to establish Corollary~\ref{cor:comp-preserve}.

\begin{lemma}[Semi-algorithmic preservation]\label{lem:semi-alg-preserve}
  The following hold.
  \begin{enumerate}
  \item If $T\lts\alpha T'$ then
  there exists $R''$ s.t. 
  $T\alts\alpha T'\blacktriangleright R''$;
  \item If $D\diamond \Delta \lts\alpha D'\diamond \Delta'$ then
  there exists $\Delta''$ s.t. 
  $D\diamond \Delta \alts\alpha D'\diamond \Delta'\blacktriangleright\Delta''$.
  \end{enumerate}  
  \end{lemma}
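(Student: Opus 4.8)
The plan is to prove both clauses by induction, exploiting the fact that the semi-algorithmic system of Figure~\ref{fig:alg-lts-types} is obtained from the non-deterministic system of Figure~\ref{fig:lts-types} by adding the bookkeeping of the sum continuation: every non-deterministic rule has a semi-algorithmic counterpart with the same premises, label and target, differing only by the separator $\blacktriangleright R$. Consequently the existential witness $R''$ (resp.\ $\Delta''$) is simply the sum continuation emitted by the matching semi-algorithmic rule, and the whole argument reduces to checking, rule by rule, that the counterpart fires.

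For clause (1) I would proceed by induction on the derivation of $T\lts\alpha T'$. In the base cases \textsc{E-Out}, \textsc{E-In} and \textsc{E-Rec} the rules \textsc{E-Out-A}, \textsc{E-In-A} and \textsc{E-Rec-A} apply verbatim (their side conditions, e.g.\ $\emptyset\vdash v\colon S$, are shared), each emitting the placeholder $\circ$, so I take $R''=\circ$. In the inductive cases \textsc{E-Sel-L} and \textsc{E-Bra-L}, with $T=T_1+T_2$ and a sub-derivation $T_1\lts\alpha T'$, the induction hypothesis yields a remnant $R_1$ with $T_1\alts\alpha T'\blacktriangleright R_1$; applying \textsc{E-Sel-A-L} (resp.\ \textsc{E-Bra-A-L}) then gives $T_1+T_2\alts\alpha T'\blacktriangleright R_1\oplus T_2$, so I set $R''=R_1\oplus T_2$. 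The symmetric right rules are identical. I would also record, for use in clause (2), the observation that the only rule of the type LTS that emits a $\tau$ label is \textsc{E-Rec}: hence a $\tau$-step is necessarily a single \textsc{E-Rec-A} application, whose remnant is forced to be $\circ$ (well-formedness, cf.\ App.~\ref{sec:wf}, guarantees that $\mu X.T$ never occurs under a sum, so \textsc{E-Sel-A-L}/\textsc{E-Bra-A-L} cannot propagate a $\tau$).

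For clause (2) I would first invert the configuration rule \textsc{Se-Top}, obtaining $\Delta\in D$, $D'=D\less\Delta$, and an underlying environment step $\Delta\lts\alpha\Delta'$, and then case on the last rule of the latter. If it is \textsc{Se-Rec}, then $\Delta=\Delta_0,\ppp\colon T$ and $\Delta'=\Delta_0,\ppp\colon T'$ with $T\lts\tau T'$; clause (1) gives $T\alts\tau T'\blacktriangleright R''$, and by the observation above $R''=\circ$, which is exactly the side condition of \textsc{Se-Rec-A}. Applying \textsc{Se-Rec-A} (the firing participant $\ppp$ supplying the decoration $\tau_\ppp$) produces the configuration step with decreasing set $D\less\Delta$ and witness $\Delta''=\nabla^\circ$. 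If it is \textsc{Se-Com}, then $\Delta=\Delta_0,\ppp\colon T_\ppp,\ppq\colon T_\ppq$ with $T_\ppp\lts{\AIN \ppq l v}T'_\ppp$ and $T_\ppq\lts{\AOUT \ppp l v}T'_\ppq$; applying clause (1) to each premise yields remnants $R_\ppp$ and $R_\ppq$, and \textsc{Se-Com-A} then fires with the same synchronisation label, its conditional defining the sum continuation $\Delta_1$, which I take as $\Delta''$. In both cases the label and the resulting decreasing set $D\less\Delta$ coincide with those of the inverted \textsc{Se-Top} step.

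The one genuinely non-routine point, and hence the step I expect to be the main obstacle, is the $\tau$ case of clause (2): instantiating \textsc{Se-Rec-A} requires the remnant of the silent type transition to be precisely the placeholder $\circ$, so I must justify that a $\tau$-step of a well-formed type can only arise from \textsc{E-Rec}. This is exactly the observation isolated in clause (1), and it is the only place where well-formedness (no recursion sitting under a sum) is actually used. Everything else is a mechanical rule-by-rule correspondence; the residual discrepancy between the undecorated $\tau$ of \textsc{Se-Rec} and the decorated $\tau_\ppp$ of \textsc{Se-Rec-A} is purely cosmetic and is resolved by reading the firing participant off the \textsc{Se-Rec} premise.
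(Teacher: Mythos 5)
Your proposal is correct and follows essentially the same route as the paper, whose proof is only the one-line sketch ``(1) by induction on $T\lts\alpha T'$; (2) by induction on $D\diamond \Delta \lts\alpha D'\diamond \Delta'$ relying on (1)''. Your expansion — rule-by-rule matching with the semi-algorithmic counterparts, and the observation that a $\tau$-step of a well-formed type forces the remnant $\circ$ (recursion cannot sit under a well-formed sum) — is exactly the justification the paper itself gives informally in App.~\ref{sec:alg-lts-rules}.
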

  \begin{proof}
    The semi-algorithmic rules $\alts{}$ for types and environments are in App.~\ref{sec:alg-lts-rules}.
  
    (1) is proved by induction on  $T\lts\alpha T'$.
  
  (2) is proved by induction on $D\diamond \Delta \lts\alpha D'\diamond \Delta'$ while relying on (1). 
  \qed
  \end{proof}

  \myparagraph{Proof of Lemma~\ref{lem:lts-alts}}
  By inversion  of $D\diamond \Delta \lts\alpha D'\diamond \Delta'$, we have 
  (i) $\alpha =\tau$ or
  (ii) $\alpha=\sync l p q$. 
  Consider fairness in Definition~\ref{def:fair}.
  
  In case (i),  it is clear that  there is $\pp p$ and $T$ such that $\Delta(\pp p)=\mu X.T$ and 
  $\Delta'(\pp p)=T\{\mu X.T /X\}$.
  We let $\Omega$ choose \pp p, and infer the desired result by using 
  rule \textsc{Se-Rec-D} of Figure~\ref{fig:lts-types-det}.

  In case (ii), we let $\Omega$ choose \pp p and \pp q, 
  and assume a scheduling policy  picking~$l$ as first label.
  We apply Lemma~\ref{lem:semi-alg-preserve} and infer that there is $\Delta''$ s.t.:
  $$
  D\diamond \Delta \alts\alpha D'\diamond \Delta'\blacktriangleright\Delta''
  $$ 
  The result then follows from an application of \textsc{Se-Com-D} in 
  Figure~\ref{fig:lts-types-det}.
  \qed

  \myparagraph{Proof of Lemma~\ref{sec:compliance-dlts}}
  First note that  compliance requires well-formedness (cf.~Definition~\ref{def:compliance}), 
  and in turn $\wf(\Delta)$.
  To obtain $\wf(\Delta')$, we use preservation of well-formedness by $T\lts\alpha T'$ 
  (cf. Lemma~\ref{lem:wf-preserve}).

  By inversion of  $\Omega\triangleright D\diamond \Delta \dlts\alpha D'\diamond \Delta'\blacktriangleright\Delta''$,
  we infer 
  (i) $\alpha=\tau$ and $\Delta=\Delta_0,\ppp\colon T$ and $\Delta'=\Delta_0,\ppp\colon T'$ or
  (ii) $\alpha=\sync l p q $ and $\Delta=\Delta_0,\ppp\colon T_1,\ppq\colon T_2$ and 
  $\Delta'=\Delta_0,\ppp\colon T',\ppq \colon T''$.
  In case (i), by inversion, we obtain $T\lts\tau T'$:  
  we apply~Lemma~\ref{lem:wf-preserve} and infer $\wf(T')$.
  In case (ii), by inversion, we obtain $T_1\lts{\AOUT \ppq l v} T'$ and 
  $T_2\lts  {\AIN \ppp l v} T''$: 
  we apply~Lemma~\ref{lem:wf-preserve} on both transitions and  infer $\wf(T')$ 
  and $\wf(T'')$. 
  
  Compliance then follows from Definition~\ref{def:closure} and from the following result:
  ${\tt closure}_{\Omega,D'}(\Delta')\subseteq {\tt closure}_{\Omega,D}(\Delta)$.
  In fact, compliance is defined on $\Delta$ iff its closure  does not contain failures $\tt err$:
  \begin{mathpar}
\inference{
  \Omega\triangleright D'\diamond \Delta'\Lts{}\widetilde{C_1}\and
  \Omega\triangleright D'\diamond \Delta''\Lts{}\widetilde{C_2}
}{
  \Omega\triangleright D\diamond\Delta\Lts{}\widetilde{C_1},\widetilde{C_2}
}
  \end{mathpar}  
Remember that  ${\tt closure}_{\Omega,D}(\Delta)$ is obtained by stripping the decreasing sets from  
$\widetilde{C_1},\widetilde{C_2}$, and that 
${\tt closure}_{\Omega,D'}(\Delta')$ is obtained by stripping the decreasing sets from 
$\widetilde{C_1}$.
Thus if $\widetilde{C_1}=D^1_1\diamond \Delta^1_1,\dots,D^1_n\diamond \Delta^1_n$ and 
$\widetilde{C_2}=D^2_1\diamond \Delta^2_1,\dots,D^2_m\diamond \Delta^2_m$, then
${\tt closure}_{\Omega,D}(\Delta)=\{\Delta^1_1,\dots,\Delta^1_n,\Delta^2_1,\dots,\Delta^2_m\}$
and
${\tt closure}_{\Omega,D'}(\Delta')=\{\Delta^1_1,\dots,\Delta^1_n\}$, as required.
  \qed

\medskip
Next, we establish subject reduction for single threads firing I/O actions.

\begin{lemma}[I/O subject reduction]\label{lem:sr}
    %[Subject Reduction of Single Threads]
    Consider a single thread $\pp p\triangleright P$ and a well-formed type $T$,
    and assume that the following hold:
    \begin{enumerate}
    \item $\Gamma\vdash P\colon T$
    \item $\pp p\triangleright P \lts\alpha \pp p\triangleright P'$ with 
    $\alpha\in\{\AIN \ppq l v,\AOUT \ppq l v\}$
    \item $T\lts\alpha T'$ 
    \end{enumerate}
    We have $\Gamma\vdash P'\colon T'$.
  \end{lemma}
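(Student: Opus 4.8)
The plan is to argue by induction on the derivation of hypothesis~(1), $\Gamma\vdash P\colon T$, with a case analysis on the last typing rule. I expect most cases to be vacuous, since the process form cannot fire an input or output action: \textsc{T-End} forces $P=\INACT$, \textsc{T-Var} forces $P=\chi$, \textsc{T-Rec} forces $P=\mu\chi.P_0$ (which fires only $\tau$ via \textsc{R-Rec}), and \textsc{T-If} forces an if-then-else (which fires only $\tau$ via \textsc{R-IfT}/\textsc{R-IfF}); in each, hypothesis~(2) with $\alpha\in\{\AIN\ppq l v,\AOUT\ppq l v\}$ is unsatisfiable, so the goal holds vacuously.

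For the base cases I would invert hypotheses~(2) and~(3). In \textsc{T-Inp}, with $P=\PIN\ppr l x P_0$ and $T=\ppr?l(S).T_0$ and premise $\Gamma,x\colon S\vdash P_0\colon T_0$, the only applicable rules are \textsc{R-Inp}, giving $\alpha=\AIN\ppr l v$ and $P'=P_0\{v/x\}$, and \textsc{E-In}, giving $T'=T_0$ and $\emptyset\vdash v\colon S$; since value typing does not depend on the context, $\Gamma\vdash v\colon S$, and Lemma~\ref{lem:substv} delivers $\Gamma\vdash P_0\{v/x\}\colon T_0$. In \textsc{T-Out}, with $P=\POUT\ppr l e P_0$, $T=\ppr!l(S).T_0$ and premise $\Gamma\vdash P_0\colon T_0$, inversion gives $P'=P_0$ and $T'=T_0$, so the goal is exactly the typing premise.

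The inductive cases \textsc{T-Sum}, \textsc{T-Sum-L}, \textsc{T-Sum-R} will be the heart of the proof, and I expect the alignment of branches to be the main obstacle. Take \textsc{T-Sum-L}, where $\Gamma\vdash P\colon T_1$ and $T=T_1+T_2$: hypothesis~(3) could fire either the left summand (\textsc{E-Sel-L}/\textsc{E-Bra-L}, with $T_1\lts\alpha T'$) or the right one (\textsc{E-Sel-R}/\textsc{E-Bra-R}, with $T_2\lts\alpha T'$), and I must exclude the latter to apply the induction hypothesis on $T_1$. Setting ${\cal L}(\alpha)=\{l\}$, Lemma~\ref{lem:lts-inv} on~(2) yields $l\in{\cal L}(P)$ and Lemma~\ref{lem:typ-inv} on $\Gamma\vdash P\colon T_1$ yields $l@S'\in{\cal L}(T_1)$; were the type to fire the right summand, Lemma~\ref{lem:tlts-inv} would place $l$ in ${\cal L}(T_2)$ as well, so $l$ would be duplicated in $T_1+T_2$, contradicting the no-duplicated-labels clause of $U(T_1+T_2)$ that follows from well-formedness of $T$. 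Hence $T_1\lts\alpha T'$; noting that $T_1$ is well-formed (well-behavedness by inverting \textsc{Wb-Sum}, with contractiveness and closedness inherited by subterms), the induction hypothesis closes the case. \textsc{T-Sum-R} is symmetric, and \textsc{T-Sum} is analogous: inverting~(2) picks a summand $P_i$ of $P$, the same label argument forces the type to fire the homologous summand $T_i$, and the induction hypothesis on $\Gamma\vdash P_i\colon T_i$ finishes.

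The crux, as flagged, is that rules \textsc{T-Sum-L}/\textsc{T-Sum-R} decouple the syntactic shape of $P$ from that of $T$, so process and type could a priori fire different branches of a choice; the key that unlocks this is that well-formed sums carry no duplicated labels, so the common label $l$ singles out one summand on the type side, with the three inversion lemmas turning the transition premises into the label-membership facts needed to run the argument.
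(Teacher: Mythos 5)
Your proof is correct and matches the paper's in all essential ingredients; the only real difference is the choice of induction measure. The paper inducts on the transition derivation~(2), obtaining cases \textsc{R-Inp}, \textsc{R-Out}, \textsc{R-Sum-L}, and then inside \textsc{R-Sum-L} performs the case split on whether the typing ended with \textsc{T-Sum}, \textsc{T-Sum-L} or \textsc{T-Sum-R}; you induct on the typing derivation~(1) and instead split on which summand the type transition~(3) fires. These are dual organisations of the same argument: both dispose of the input/output base cases via inversion of \textsc{E-In}/\textsc{E-Out} and Lemma~\ref{lem:substv}, and both resolve the branch-alignment problem in the sum cases by combining Lemma~\ref{lem:lts-inv}, Lemma~\ref{lem:typ-inv} and Lemma~\ref{lem:tlts-inv} to place the fired label $l$ in both $T_1$ and $T_2$, contradicting the no-duplicated-labels clause of uniformity (Definition~\ref{def:uniform}.1) guaranteed by well-formedness. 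Your version is arguably slightly cleaner in that it makes the vacuous cases (\textsc{T-End}, \textsc{T-Var}, \textsc{T-Rec}, \textsc{T-If}) explicit and records that well-formedness of $T_1+T_2$ must be pushed down to the summand before invoking the induction hypothesis, a point the paper leaves implicit; the paper's version keeps the transition derivation as the measure, which aligns more directly with how the lemma is consumed in the \textsc{R-Com} case of Theorem~\ref{th:sr}. Neither buys anything the other lacks.
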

\begin{proof}
By induction on (2), eventually relying on Lemma~\ref{lem:substv}. 
Three cases arise, corresponding to rules \textsc{R-Inp}, \textsc{R-Out}, and \textsc{R-Sum-L}.
We outline \textsc{R-Inp} and \textsc{R-Sum-L}; the case of rule \textsc{R-Out} is analogous to \textsc{R-Inp}.

In case  \textsc{R-Inp}, we have $P=\pp q?l(x).Q$, $\alpha=\AIN \ppq l v$,  
and $T=\pp q?l(S).T'$,  for some process $Q$ and  sort  $S$.
Moreover, $P' = Q\{v/x\}$.
From the inversion of (3), we obtain  $\emptyset\vdash v\colon S$. 
We use weakening and obtain  $\Gamma\vdash v\colon S$ (cf.~App.~\ref{sec:sorting_rules}).

By inversion of (1), we have $\Gamma,x\colon S\vdash Q\colon T'$.
We apply substitution (Lemma~\ref{lem:substv}) and infer the desired result,
$\Gamma\vdash P'\colon T'$.

In case \textsc{R-Sum-L}, we have $P=P_1 + P_2$ and 
$\pp p\triangleright P_1 \lts\alpha \pp p\triangleright P'$.
From the inversion of (1), we infer $T= T_1 + T_2$ and three sub-cases corresponding to rules \textsc{T-Sum}, 
rule \textsc{T-Sum-L}, and rule \textsc{T-Sum-R}.
From the inversion of (3), we infer that 
$$T_1\lts\alpha T'$$

In sub-case \textsc{T-Sum}, the hypothesis is $\Gamma\vdash P_1\colon T_1$ and 
$\Gamma\vdash P_2\colon T_2$. We apply the I.H. on $\pp p\triangleright P_1 \lts\alpha \pp p\triangleright P'$
and infer the desired result:
$\Gamma\vdash P'\colon T'$.

In sub-case \textsc{T-Sum-L}, the hypothesis is $\Gamma\vdash P\colon T_1$. 
We apply the I.H. on $\pp p\triangleright P \lts\alpha \pp p\triangleright P'$ 
and infer the desired result:
$\Gamma\vdash P'\colon T'$.

In sub-case \textsc{T-Sum-R}, the hypothesis is $\Gamma\vdash P\colon T_2$. 
This gives rise to a contradiction.

To see this, consider  the definition of labels of process, noted $\cal L$, at the 
beginning of the section. 
From Lemma~\ref{lem:typ-inv}, Lemma~\ref{lem:lts-inv} and 
Lemma~\ref{lem:tlts-inv} we obtain, respectively:

(1) 
$\exists S\,.\, l@S\in{\cal L}(T_2)$;

(2) 
$l\in{\cal L}(P_1)$;

(3) 
$\exists S\,.\, l@S\in{\cal L}(T_1)$.

\medskip
In particular, from (2) we infer $l\in{\cal L}(P)$.
Hence applying (1) to this conclusion, we have that (4) there is $S$ such that 
$l@S\in{\cal L}(T_2)$.

By gluing (3) and (4), we obtain that $l$ occurs both in $T_1$ and $T_2$, which contradicts 
well-formedness (cf.~App.~\ref{sec:wf}, Definition~\ref{def:uniform}.1).
\qed
\end{proof}    

In the remainder of the section, 
we use the \emph{notation} $(\Delta,\ppp\colon T)\less\ppp$ to indicate the environment~$\Delta$.

We use a Lemma to tackle the minimal partition and compliance assumptions in the proof of subject reduction.

\begin{lemma}\label{lem:sr-minimal}
Let $D\diamond\Delta$ be a configuration.
Assume that there are $\Delta_1,\dots,\Delta_k$ s.t. $\partition\Delta(\Delta_1,\dots,\Delta_k)$ and 
$\keyword{comp}(\Delta_j)$, for all $j\in \{1,\dots, k\}$.

If $D\diamond\Delta\lts\alpha D\less\Delta\diamond\Delta'$ then there is $h\in \{1,\dots,k\}$ and
$\Delta'_h$ s.t.:
\begin{enumerate}
\item $\Delta'=\Delta_1\cup\cdots\cup\Delta_{h-1}\cup\Delta'_h\cup \Delta_{h+1}\cup\cdots\Delta_k$;
\item $\{\Delta_1,\dots,\Delta_{h-1},\Delta'_h,\Delta_{h+1},\dots,\Delta_k\}$ is a partition of
$\Delta'$;
\item $\keyword{comp}(\Delta'_j)$, for all $j\in \{1,\dots, k\}$.
\end{enumerate}
\end{lemma}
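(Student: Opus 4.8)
The plan is to invert the configuration step and then \emph{localise} it to the unique block of the minimal partition that contains the active participant(s); this reduces item~3 to the already-established compliance preservation result (Corollary~\ref{cor:comp-preserve}), and makes items~1 and~2 pure bookkeeping on party sets. First I would invert rule \textsc{Se-Top}, which yields $\Delta\in D$ together with a bare environment step $\Delta\lts\alpha\Delta'$, and then case-split on the last rule used to derive the latter, namely \textsc{Se-Rec} or \textsc{Se-Com}.

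In the \textsc{Se-Rec} case the action is $\tau$ and exactly one participant $\ppp$ is active, with $\Delta(\ppp)=\mu X.T$ unfolding to $T\{\mu X.T/X\}$. Since $\ppp\in\dom(\Delta)$ and the blocks have pairwise-disjoint parties, $\ppp$ lies in a unique block $\Delta_h$; I set $\Delta'_h$ to be $\Delta_h$ with $\ppp$ retyped. Items~1 and~2 then follow because the step touches only $\ppp$'s type and because $\keyword{parties}(T\{\mu X.T/X\})=\keyword{parties}(\mu X.T)$ (substituting $\mu X.T$ for the party-free variable $X$ does not enlarge the party set), so the new collection still has pairwise-disjoint parties and nonempty blocks with the same domains.

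The \textsc{Se-Com} case carries the crux. Two participants $\ppp,\ppq$ synchronise, and I must show they share a block. Taking the block $\Delta_h\ni\ppp$, the premise $T_\ppp\lts{\AIN\ppq l v}T'_\ppp$ forces $T_\ppp$ to be an input prefix, or a sum of prefixes, from $\ppq$, whence $\ppq\in\keyword{parties}(T_\ppp)\subseteq\keyword{parties}(\Delta_h)$. If $\ppq$ belonged to a different block $\Delta_{h'}$ then $\ppq\in\keyword{parties}(\Delta_h)\cap\keyword{parties}(\Delta_{h'})$, contradicting the disjointness clause of Definition~\ref{def:partition}; hence $\ppq\in\dom(\Delta_h)$. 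I then obtain $\Delta'_h$ by retyping both $\ppp$ and $\ppq$ inside $\Delta_h$. Items~1 and~2 hold since only $\ppp$ and $\ppq$ change and the continuations $T'_\ppp,T'_\ppq$ have parties contained in those of $T_\ppp,T_\ppq$, so disjointness is preserved.

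For item~3, in both cases the blocks $\Delta_j$ with $j\ne h$ are untouched, so $\keyword{comp}(\Delta_j)$ is inherited directly from the hypothesis. For $\Delta'_h$ I exploit that the rules \textsc{Se-Rec} and \textsc{Se-Com} apply with an arbitrary frame environment: instantiating the frame to $\Delta_h$ with the active participants removed yields a localised step $\Delta_h\lts\alpha\Delta'_h$, which I lift, for any fixed point $D_h\ni\Delta_h$, to $D_h\diamond\Delta_h\lts\alpha D_h\less{\Delta_h}\diamond\Delta'_h$ via \textsc{Se-Top}. Since $\Delta_h$ is minimal (the partition is minimal) and $\keyword{comp}(\Delta_h)$ holds, Corollary~\ref{cor:comp-preserve} delivers $\keyword{comp}(\Delta'_h)$. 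I expect the only genuinely delicate points to be the colocation argument for synchronising participants and the verification that unfolding and continuation-taking never break party-disjointness; the rest is a direct reading of the rule shapes.
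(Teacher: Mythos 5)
Your proposal is correct and follows essentially the same route as the paper's proof: inversion into the \textsc{Se-Rec}/\textsc{Se-Com} cases, the colocation argument via party-disjointness of the partition for the synchronising pair, and lifting a localised step on the block $\Delta_h$ through \textsc{Se-Top} so that Corollary~\ref{cor:comp-preserve} yields $\keyword{comp}(\Delta'_h)$. The only difference is that you spell out the party-set bookkeeping (unfolding and continuations do not enlarge party sets) a bit more explicitly than the paper, which simply asserts items~1 and~2 from the definition of $\Delta'_h$.
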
  
\begin{proof}
By inversion of $D\diamond\Delta\lts\alpha D\less\Delta\diamond\Delta'$, we infer that two cases arise:
\begin{enumerate}[label=(\roman*)]
\item There is $\ppp\in\dom(\Delta)$ s.t. $\Delta(\ppp)=\mu X.T$ and 
$\Delta'=\Delta\less\ppp,\ppp\colon T\{\mu X.T/X\}$
\item There is $\{\ppp,\ppq\}\subseteq\dom(\Delta)$ s.t.
$\Delta(\ppp)\lts{\AIN \ppq l v} T'_\ppp$
and
$\Delta(\ppq)\lts{\AOUT \ppp l v}  T'_\ppq$ and \\
$\Delta'=\Delta\less\ppp\less \ppq,\ppp\colon T'_\ppp,\ppq\colon T'_\ppq$
\end{enumerate}

\medskip\noindent\emph{Subcase (i)}.
First note that by the partition assumption, there is exactly one $j\in\{1,\dots,k\}$ s.t.
$\ppp\in\dom(\Delta_j)$. 
Let $h\in\{1,\dots,k\}$ and $\ppp\in\dom(\Delta_h)$.
Define $\Delta'_h \eqdef \Delta_h\less\ppp,\ppp\colon T\{\mu X.T/X\}$.
Then (1) and (2) follow by the hypothesis  $\partition\Delta(\Delta_1,\dots,\Delta_k)$, and by definition 
of $\Delta'_h$.

To prove (3), we need to show that $\keyword{comp}(\Delta'_h )$, since all other minimal environments
are compliant by hypothesis.

Take a decreasing set $D'$ such that $\Delta_h\in D'$.
By applying \textsc{Se-Rec}, \textsc{Se-Top} of Figure~\ref{fig:lts-types}
we infer:
\begin{align}
  D'\diamond \Delta_h\lts\tau D'\less{\Delta_h}\diamond \Delta'_h 
\end{align}
By Corollary~\ref{cor:comp-preserve}, we infer $\keyword{comp}(\Delta'_h )$.

\medskip\noindent\emph{Subcase (ii)}.
To see (1) and (2), note that,
by the partition assumption, there is exactly one $j\in\{1,\dots,k\}$ s.t.
$\{\ppp,\ppq\}\subseteq\dom(\Delta_j)$. 
To see that this holds, assume $\ppp\in\dom(\Delta_i)$, 
$\ppq\in\dom(\Delta_j)$, $\{i,j\}\subseteq\{1,\dots,k\}$, $j\ne i$: we show that this leads to a contradiction.

In fact, from  inversion of $\Delta_i(\ppp)\lts{\AIN \ppq l v} T'_\ppp$, we deduce that
$\ppq\in\keyword{parties}(\Delta_i(\ppp))$, and in turn 
$\ppq\in\keyword{parties}(\Delta_i)$. From $\ppq\in\keyword{parties}(\Delta_j)$ we deduce 
that $\keyword{parties}(\Delta_i)$ and $\keyword{parties}(\Delta_j)$ are not disjoint: 
this contradicts the hypothesis $\partition\Delta(\Delta_1,$ $\dots,\Delta_k)$.

We proceed as in case (i) and close the result.
Let $h\in\{1,\dots,k\}$ and $\{\ppp,\ppq\}\subseteq\dom(\Delta_h)$.
Define $\Delta'_h \eqdef \Delta_h\less\ppp\less\ppq,\ppp\colon T'_\ppp,\ppq\colon T'_\ppq$.
Items (1) and (2) follow by the hypothesis  $\partition\Delta(\Delta_1,\dots,\Delta_k)$, and by definition 
of $\Delta'_h$.

To show (3), we need to prove that  
$\keyword{comp}(\Delta'_h )$ holds, since all other minimal environments in the partition are compliant by hypothesis.
Take a decreasing set $D'$ such that $\Delta_h\in D'$.
By applying \textsc{Se-Com}, \textsc{Se-Top} of Figure~\ref{fig:lts-types}
we infer:
\begin{align}
  D'\diamond \Delta_h\lts\tau D'\less{\Delta_h}\diamond \Delta'_h 
\end{align}
By Corollary~\ref{cor:comp-preserve}, we infer $\keyword{comp}(\Delta'_h )$.
\qed
\end{proof}

\begin{corollary}\label{cor:sr-minimal}
  Let $D\diamond\Delta$ be a configuration.
  Assume that there are $\Delta_1,\dots,\Delta_k$ s.t. $\partition\Delta(\Delta_1,\dots,\Delta_k)$ and 
  $\keyword{comp}(\Delta_j)$, for all $j\in \{1,\dots, k\}$.
  
  If $D\diamond\Delta\lts\alpha D\less\Delta\diamond\Delta'$ then there is $h\in \{1,\dots,k\}$ 
  and $\Delta'_h$ s.t.
  \begin{enumerate}
  \item $\partition{\Delta'}(\Delta_1,\dots,\Delta_{h-1},\Delta'_h,\Delta_{h+1},\dots,\Delta_k)$
  \item $\keyword{comp}(\Delta'_j)$, for all $j\in \{1,\dots, k\}$.
  \end{enumerate}
  \end{corollary}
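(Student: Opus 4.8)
The plan is to obtain the corollary almost immediately from Lemma~\ref{lem:sr-minimal}, the only genuine content being the unfolding of the minimal-partition predicate $\partition{\Delta'}(\cdots)$ according to Definition~\ref{def:partition}. First I would apply Lemma~\ref{lem:sr-minimal} to the hypotheses $\partition\Delta(\Delta_1,\dots,\Delta_k)$, $\keyword{comp}(\Delta_j)$ for all $j$, and $D\diamond\Delta\lts\alpha D\less\Delta\diamond\Delta'$. This yields an index $h\in\{1,\dots,k\}$ and an environment $\Delta'_h$ satisfying the three conclusions of the lemma: (i) $\Delta'=\Delta_1\cup\cdots\cup\Delta'_h\cup\cdots\cup\Delta_k$; (ii) the collection $\{\Delta_1,\dots,\Delta_{h-1},\Delta'_h,\Delta_{h+1},\dots,\Delta_k\}$ is a partition of $\Delta'$; and (iii) $\keyword{comp}(\Delta'_j)$ for all $j$, under the convention $\Delta'_j=\Delta_j$ for $j\ne h$.

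Conclusion (2) of the corollary is then literally conclusion (iii) of the lemma, so it requires nothing further. For conclusion (1), I would expand the predicate $\partition{\Delta'}(\Delta_1,\dots,\Delta_{h-1},\Delta'_h,\Delta_{h+1},\dots,\Delta_k)$ via Definition~\ref{def:partition}, which demands both that the listed environments form a partition of $\Delta'$ and that each of them is minimal. The partition requirement is exactly conclusion (ii) of the lemma, so it remains only to establish minimality of every component.

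For the unchanged components $\Delta_j$ with $j\ne h$, minimality is immediate from the hypothesis $\partition\Delta(\Delta_1,\dots,\Delta_k)$, since by Definition~\ref{def:partition} a minimal partition consists of minimal environments, and these components are untouched by the transition. For the updated component $\Delta'_h$, I would exploit the fact that compliance presupposes minimality: by Definition~\ref{def:compliance}, the predicate $\keyword{comp}$ is defined only on minimal well-formed environments, so the conclusion $\keyword{comp}(\Delta'_h)$ delivered by the lemma already carries the information that $\Delta'_h$ is minimal. Combining the partition property with the minimality of all $k$ components gives $\partition{\Delta'}(\Delta_1,\dots,\Delta_{h-1},\Delta'_h,\Delta_{h+1},\dots,\Delta_k)$, which is conclusion (1).

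I do not expect a serious obstacle, since the argument is a repackaging of Lemma~\ref{lem:sr-minimal} into the minimal-partition notation. The one point meriting care is the source of minimality for $\Delta'_h$: it is not a separate clause of the lemma and is instead silently guaranteed by the well-definedness of $\keyword{comp}(\Delta'_h)$. Tracing this back, the minimality ultimately originates from the chain Lemma~\ref{lem:lts-alts}, Lemma~\ref{sec:compliance-dlts} and Corollary~\ref{cor:comp-preserve} invoked inside the proof of Lemma~\ref{lem:sr-minimal}, whose deterministic transition steps preserve minimality of the reachable environment.
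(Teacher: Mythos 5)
Your proposal is correct and follows essentially the same route as the paper: conclusion (2) is read off directly from Lemma~\ref{lem:sr-minimal}, and conclusion (1) is obtained by combining the partition property of Lemma~\ref{lem:sr-minimal}.2 with minimality of the components, where minimality of the updated component is extracted from $\keyword{comp}(\Delta'_h)$ because compliance is only defined on minimal environments (Definition~\ref{def:compliance}). The only cosmetic difference is that the paper derives minimality of \emph{every} component from compliance, whereas you take the unchanged components' minimality from the original partition hypothesis; both are valid.
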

  \begin{proof}
  We only need to show item (1); item (2) is contained in Lemma~\ref{lem:sr-minimal}, and is re-stated
  here for convenience. 
   
  We need to show that 
  (\textbf{A})
  $\{i,j\}\subseteq\{1,\dots,k\}$ and $i\ne j$ imply 
  $\keyword{parties}(\Delta'_i)\cap\keyword{parties}(\Delta'_j)=\emptyset$, and
  (\textbf{B}) $i\in\{1,\dots,k\}$ implies $\keyword{minimal}(\Delta'_i)$.
  
  Item (A) follows from Lemma~\ref{lem:sr-minimal}.2, while item (B) follows from 
  Lemma~\ref{lem:sr-minimal}.3: if $\keyword{comp}(\Delta'_i)$
  then $\keyword{mininal}(\Delta'_i)$.
  \qed
  \end{proof}

We can now tackle the main result of this section: well-typed sessions reduce to well-typed sessions.

\myparagraph{Proof of Theorem~\ref{th:sr}}
By induction on (2).
Case \textsc{R-Rec} relies on Lemma~\ref{lem:substP} and Corollary~\ref{cor:sr-minimal}.
Case \textsc{R-Com} makes use of Lemma~\ref{lem:substv}, Lemma~\ref{lem:sr}, and 
 Corollary~\ref{cor:sr-minimal}.
Case \textsc{R-IfT} follows by definition (cf.~\textsc{T-If} of Figure~\ref{fig:typing}).
Case \textsc{R-Struct} uses the I.H. and Lemma~\ref{lem:types-struct}.

\begin{description}
\item[\textsc{R-Rec}]
We have
\begin{align}
&\Gamma\Vdash   {\pp p}\lhd  \mu\chi. P \PARI {\pp r}_i\lhd R_i\colon\Delta
\label{eq:sr1-fold-typ}
\\
&
{\pp p}\lhd \mu\chi.P \PARI {\pp r}_i\lhd R_i
\lts\tau
{\pp p}\lhd P\{\mu\chi. P/\chi\}\PARI {\pp r}_i\lhd R_i 
\end{align}
and find 
$$
D\diamond \Delta\lts\tau D\less\Delta\diamond \Delta'
$$ 
where
\begin{align}
\Delta &= {\pp p}\colon \mu X. T, 
{\pp r}_1\colon T_1,\dots, {\pp r}_n\colon T_n
\\
\Delta&\in D
\\
\Delta' &= {\pp p}\colon T\{\mu X.T/X\}, 
{\pp r}_1\colon T_1,\dots, {\pp r}_n\colon T_n
\end{align}   
We show that
$$
\textit{GOAL: }
\Gamma\Vdash {\pp p}\lhd P\{\mu\chi. P/\chi\}\PARI {\pp r}_i\lhd R_i \colon \Delta'
$$
By inversion of (\ref{eq:sr1-fold-typ}), we find $k$ s.t. 
\begin{align}
&\partition{\Delta}(\Delta_1,\dots,,\Delta_k)
\\
&\forall j\in \{1,\dots, k\}\,.\, \keyword{comp}(\Delta_j)
\end{align}

By Corollary~\ref{cor:sr-minimal}, we infer that there are $h\in\{1,\dots,k\}$ and $\Delta'_h$ s.t. 
\begin{align}
&\partition{\Delta'}(\Delta_1,\dots,\Delta_{h-1},\Delta'_h,\Delta_{h+1},\dots,\Delta_k)
\label{eq:sr1-fold-comp0}
\\
&\forall j\in \{1,\dots, k\}\,.\, \keyword{comp}(\Delta'_j)  
\label{eq:sr1-fold-comp1}
\end{align}

To prove the goal by means of \textsc{T-Ses}, we require
(\ref{eq:sr1-fold-comp0}) and (\ref{eq:sr1-fold-comp1}) and :
\begin{align}
&\Gamma\vdash P\{\mu \chi. P/\chi\}\colon T\{\mu X.T/X\}
\label{eq:sr1-fold-fold}
\\
&\Gamma_i \vdash R_i\colon T_i \quad (i=1, \dots, n)
\label{eq:sr1-fold-threads}
\end{align}    
To see (\ref{eq:sr1-fold-fold}), we invert (\ref{eq:sr1-fold-typ}) and obtain:
\begin{align}
\Gamma\vdash \mu \chi.P\colon\mu X.T
\label{eq:sr1-fold-typ-2}
\end{align}
By inverting (\ref{eq:sr1-fold-typ-2}), we have 
\begin{align}
  \Gamma,\chi\colon\mu X.T\vdash P\colon T\{\mu X.T/X\}
  \label{eq:sr1-fold-typ-3}
\end{align}
We apply Lemma~\ref{lem:substP} to (\ref{eq:sr1-fold-typ-2}) and 
(\ref{eq:sr1-fold-typ-3}),
and 
infer (\ref{eq:sr1-fold-fold}).

Item (\ref{eq:sr1-fold-threads})  follows straightforwardly from the inversion of 
(\ref{eq:sr1-fold-typ}).

\medskip
\item[\textsc{R-Comm}]
We have
\begin{align}
&\Gamma\Vdash  {\pp p}\lhd P\PAR {\pp q}\lhd Q\PARI {\pp r}_i\lhd R_i\colon\Delta
\label{eq:sr1-comm-typ}
\\
&
{\pp p}\lhd P\PAR {\pp q}\lhd Q\PARI {\pp r}_i\lhd R_i
    \lts{l@{\pp p}\bowtie {\pp q}}
    {\pp p}\lhd P'\PAR {\pp q}\lhd Q'\PARI {\pp r}_i\lhd R_i 
    \label{eq:sr-comm-Rec}
%&
\end{align}
and find
$$
D\diamond \Delta\lts{l@{\pp p}\bowtie {\pp q}} D\less\Delta\diamond \Delta'
$$
where
\begin{align}
    \Delta &= {\pp p}\colon T_{\pp p}, {\pp q}\colon T_{\pp q},
    {\pp r}_1\colon T_1,\dots, {\pp r}_n\colon T_n
    \\
    \Delta&\in D
    \\
    \exists l, v\,&.\, T_{\pp p}\lts{{\pp q}?l\langle v\rangle} T'_{\pp p}
    \wedge
    T_{\pp q}\lts{{\pp p}!l(v)}  T'_{\pp q}
    \label{eq:sr-comm-prefix}
    \\
    \Delta' &= {\pp p}\colon T'_{\pp p}, {\pp q}\colon T'_{\pp q},
    {\pp r}_1\colon T_1,\dots, {\pp r}_n\colon T_n
    \end{align}   
    We show that
    $$
    \textit{GOAL: }
    \Gamma\Vdash {\pp p}\lhd P' \PAR {\pp q}\lhd Q' \PARI {\pp r}_i\lhd R_i \colon \Delta'
    $$

    By inversion of (\ref{eq:sr1-comm-typ}), we find $k$ s.t. 
\begin{align}
&\partition{\Delta}(\Delta_1,\dots,,\Delta_k)
\\
&\forall j\in \{1,\dots, k\}\,.\, \keyword{comp}(\Delta_j)
\end{align}

By Corollary~\ref{cor:sr-minimal}, we infer that there are $h\in\{1,\dots,k\}$ and $\Delta'_h$ s.t. 
\begin{align}
&\partition{\Delta'}(\Delta_1,\dots,\Delta_{h-1},\Delta'_h,\Delta_{h+1},\dots,\Delta_k)
\label{eq:sr1-comm-comp0}
\\
&\forall j\in \{1,\dots, k\}\,.\, \keyword{comp}(\Delta'_j)  
\label{eq:sr1-comm-comp1}
\end{align}

To prove the goal by means of \textsc{T-Ses}, we require
(\ref{eq:sr1-comm-comp0}) and (\ref{eq:sr1-comm-comp1}) and :
\begin{align}
&\Gamma\vdash P'\colon T'_{\pp p} \qquad \Gamma\vdash Q'\colon T'_{\pp q}
\label{eq:sr1-comm-processes}
\\
&\Gamma_i \vdash R_i\colon T_i \quad (i=1, \dots, n)
\label{eq:sr1-comm-threads}
\end{align}    

To see (\ref{eq:sr1-comm-processes}), we note that from 
the inversion of (\ref{eq:sr-comm-Rec}) and 
(\ref{eq:sr1-comm-typ}) we infer
\begin{align}
    &{\pp p}\lhd P\lts{{\pp q}?l(v)} {\pp p}\lhd P' &
    &{\pp q}\lhd Q\lts{{\pp p}!l\langle v\rangle} {\pp q}\lhd Q' 
    \label{eq:sr-comm-prefixes}
    \\
    &\Gamma\vdash P\colon T_{\pp p} &
    &\Gamma\vdash Q \colon T_{\pp q}
    \label{eq:sr-comm-typ_of_prefixes}
\end{align}    
From (\ref{eq:sr-comm-prefix}), (\ref{eq:sr-comm-prefixes}), (\ref{eq:sr-comm-typ_of_prefixes}),
and applications of Lemma~\ref{lem:sr}, we infer the desired result, that is 
$\Gamma\vdash P'\colon T'_{\pp p}$ and $\Gamma\vdash Q'\colon T'_{\pp q}$ (cf. (\ref{eq:sr1-comm-processes})).

Item (\ref{eq:sr1-comm-threads})  follows straightforwardly from the inversion of 
(\ref{eq:sr1-comm-typ}).

\medskip
\item[\textsc{R-IfT}]
We have
\begin{align}
&\Gamma\Vdash  {\pp p}\lhd \ifk\; e \; \thenk\; P\; \elsek\; Q\PARI {\pp r}_i\lhd R_i
\colon\Delta
\label{eq:sr-if-1}
\\
&
{\pp p}\lhd \ifk\; e \; \thenk\; P\; \elsek\; Q\PARI {\pp r}_i\lhd R_i
  \lts\tau
  {\pp r}\lhd P\PARI {\pp r}_i\lhd R_i
\label{eq:sr-if-2}
\end{align}
where
\begin{align}
\Delta &= {\pp p}\colon T, 
{\pp r}_1\colon T_1,\dots, {\pp r}_n\colon T_n
\end{align}   
We show that
$$
\textit{GOAL: }
\Gamma\Vdash {\pp p}\lhd P\PARI {\pp r}_i\lhd R_i \colon \Delta
$$
Since the environment is the same of the one in (\ref{eq:sr-if-1}),
to prove the goal by means of \textsc{T-Ses} we only require:
\begin{align}
&\Gamma\vdash P\colon T
\label{eq:sr-if-3}
\\
&\Gamma_i \vdash R_i\colon T_i \quad (i=1, \dots, n)
\label{eq:sr-if-4}
\end{align}
Both results follow from the inversion of (\ref{eq:sr-if-1}). 
In particular, the inversion produces the typing 
\begin{align}
\Gamma\vdash \ifk\; e \; \thenk\; P\; \elsek\; Q\colon T
\label{eq:sr1-if-6}
\end{align} 
The inversion of the typing  in (\ref{eq:sr1-if-6}) produces the proof of 
(\ref{eq:sr-if-3}).

\medskip
\item[\textsc{R-Struct}]
We have
\begin{align}
&\Gamma\Vdash   {\cal M}_1\colon\Delta
\label{eq:sr-struct-1}
\\
&
{\cal M}_1\lts\alpha{\cal M}_2
\label{eq:sr-struct-2}
\end{align}

The goal is:
\begin{align*}
\textit{GOAL: }
&\Gamma\Vdash {\cal M}_2 \colon \Delta \text{ or }
\\
&\exists \Delta' \text{ s.t. }
D \diamond \Delta\lts\alpha D\less\Delta\diamond \Delta' \text{ and } 
\Gamma\Vdash {\cal M}_2\colon \Delta'
\end{align*}

By inversion of (\ref{eq:sr-struct-2}), we obtain:
\begin{align}
  &{\cal M}_1\pequiv {\cal M}'_1 
  \label{eq:sr-struct-3}
  \\
  &{\cal M}'_1 \lts\alpha {\cal M}'_2
  \label{eq:sr-struct-4}\\
  &{\cal M}_2 \pequiv {\cal M}'_2
  \label{eq:sr-struct-4b}
\end{align}

We apply Lemma~\ref{lem:types-struct} to (\ref{eq:sr-struct-1}) and 
(\ref{eq:sr-struct-3}) and obtain:
\begin{align}
\Gamma\Vdash {\cal M}'_1 \colon \Delta
\label{eq:sr-struct-5}
\end{align}  

The I.H. specialised with (\ref{eq:sr-struct-4}) and (\ref{eq:sr-struct-5}) is: 
\begin{align}
&\Gamma\Vdash {\cal M}'_2 \colon \Delta\text{ or }
\label{eq:sr-struct-6}
\\
&\exists\Delta' \,.\, 
D\diamond \Delta\lts\alpha D\less\Delta\diamond \Delta' \text{ and } 
\Gamma\Vdash {\cal M}'_2\colon \Delta'
\label{eq:sr-struct-7}
\end{align} 

In case (\ref{eq:sr-struct-6}) we apply Lemma~\ref{lem:types-struct} 
to (\ref{eq:sr-struct-6}) and (\ref{eq:sr-struct-4b}),
and infer 
$\Gamma\Vdash {\cal M}_2\colon \Delta$, which fulfils the first option of the goal.

In case (\ref{eq:sr-struct-7}) we apply Lemma~\ref{lem:types-struct} 
to $\Gamma\Vdash {\cal M}'_2\colon \Delta'$ in (\ref{eq:sr-struct-7})
and (\ref{eq:sr-struct-4b}),
and infer 
$\Gamma\Vdash {\cal M}_2\colon \Delta'$, 
which fulfils the second option of the goal.
\qed
\end{description}

\section{Proof of the Progress Theorem}
\label{sec:progress}
\begin{table}
  \begin{center}
  \begin{tabular}{||c l ||} 
   \hline
   \textbf{Step} & \textbf{Action} \\ [0.5ex] 
   \hline\hline
   \hline
   \textbf 1 & Define unblocked well-formed sessions $\cal M$: \\
   &if ${\cal M}$ does not reduce then 
   ${\cal M}$ only contains $\INACT$ processes 
   \\ 
   \hline
   \textbf 2 & Characterise environments of sessions of step \textbf{1}: \\
   &if $\Gamma\Vdash {\cal M}\colon \Delta$ with $\cal M$ closed and 
   ${\cal M}$ does not reduce then forall $\alpha$ and $\Delta'$:  
   \\
   &$\Delta\nlts\alpha\Delta'$
   \\
   \hline
   \textbf 3 & Progress Theorem: \\
   &if $\Gamma\Vdash {\cal M}\colon \Delta$ with $\cal M$ closed and ${\cal M}$ does not reduce then ${\cal M}$ 
   only contains 
   \\
   &$\INACT$ processes (cf. (1))
   \\
   \hline
   \textbf 4 &Proof of \textbf{3}
   \\
   i&From (\textbf{2}) and (\textbf{3}) we have $\partition\Delta(\Delta_1,\dots,\Delta_k)$ with
   $\Delta_j\nlts\alpha\Delta'_j$, for all 
   \\
   &$j\in\{1,\dots,k\}$. We deduce that 
  ${\tt stuck}_{\Omega, D_j}(\Delta_j)$, for all $\Omega$ and $D_j$.
   \\
   &By Def.~\ref{def:closure}: ${\tt closure}_{\Omega, D_j}(\Delta_j)=\Delta_j$.
   \\
   ii&By Def.~\ref{def:compliance} and (\textbf i) we have that 
   all $\Delta_j$ in the minimal 
   \\
   &partition of $\Delta$
   are not mismatches 
   and are not deadlocks (Def.~\ref{def:error}).
   \\
   iii &That is: (\textbf A) $\Delta_j$ does move or 
   (\textbf{B}) $\Delta_j(\ppp)=\End$, for all $\ppp\in\dom(\Delta_j)$ 
   \\
   iv &Since (\textbf A) contradicts (\textbf{i}), we have (\textbf B)
   \\
   v & We show that $\Gamma\Vdash {\cal M}\colon \ppp_1\colon\End,\dots,\ppp_n\colon\End$ 
   and $\ppp_i\lhd P_i$ occurs in ${\cal M}$, 
   \\
   &$0\leq i \leq n$, imply  (\textbf C) $P_i=\INACT$ or 
   (\textbf{D}) $P_i=\ifk\, e \, \thenk\, Q_1\, \elsek\,  Q_2$
   \\
   vi &Since (\textbf{D}) contradicts the hypothesis in (\textbf{3}) as $\cal M$ moves, 
   we have (\textbf C).
   \\
   \hline
  \end{tabular}
  \end{center}
  \caption{Outline of the proof of progress}
  \label{tab:progress-outline}
\end{table}
In this section we show that well-typed closed sessions do not get stuck (cf.~Theorem~\ref{th:progress}).
The outline of the proof is in Table~\ref{tab:progress-outline}.

We start by defining a notion of well-formedness of sessions that is specular to the one 
for types in App.~\ref{sec:wf}.

These are the closed sessions accepted by the typing system of Figure~\ref{fig:typing}, as we will show.

\subsection{Well-formed sessions}
We overload the functions for types defined  in App.~\ref{sec:wf}.

Let $\keyword{labels}$ map processes to \emph{multisets} of labels:
$\keyword{labels}(\PIN \ppr l x P) \eqdef \multiset l$, 
$\keyword{labels}(\POUT \ppr l e P)\eqdef \multiset l$, 
$\keyword{labels}(\ifk\, e \, \thenk\, P_1 \elsek\,  P_2) \eqdef$ 
$\keyword{labels}(P_1) \uplus \keyword{labels}(P_2)$,
and
$\keyword{labels}(P_1 + P_2) $ 
$\eqdef \keyword{labels}(P_1) \uplus \keyword{labels}(P_2)$,
and $\keyword{labels}(P)\eqdef\emptyset$ otherwise.

Let $\keyword{polarity}$ be a partial function mapping processes to polarities 
$p\in\{!, ?\}$, and let $\keyword{polarity}(P)\downarrow$ whenever 
$\keyword{polarity}$ is defined on process $P$.
Let
$\keyword{polarity}(\PIN \ppr l x P)$ $ \eqdef ?$,
$\keyword{polarity}(\POUT \ppr l e P)\eqdef !$, 
$\keyword{polarity}(\ifk\, e \, \thenk\, P_1\, \elsek\,  P_2) \eqdef p $ 
whenever $\keyword{polarity}(P_1)$ $= p$ and $\keyword{polarity}(P_2)= p$,
$\keyword{polarity}(P_1 + P_2) \eqdef p $ 
whenever $\keyword{polarity}(P_1) = p$ and $\keyword{polarity}(P_2)= p$,
and $\keyword{polarity}(P)$ be undefined otherwise.

Let $\keyword{participant}$ be a  partial function mapping processes to participants,
and let $\keyword{participant}(P)\downarrow$ when $\keyword{participant}$ is defined on process $P$.
Let 
$\keyword{participant}(\PIN \ppr l x P)$ $\eqdef \ppr$,
$\keyword{participant}(\POUT \ppr l e P)\eqdef \ppr$, 
$\keyword{participant}(\ifk\, e \,$ $ \thenk\, P_1\, \elsek\,  P_2
) \eqdef \ppr $
whenever $\keyword{participant}(P_1)=\ppr$ and 
$\keyword{participant}(P_2)=\ppr$,
$\keyword{participant}(P_1 + P_2) \eqdef \ppr $
whenever $\keyword{participant}(P_1)=\ppr$ and 
$\keyword{participant}(P_2)=\ppr$,
and $\keyword{participant}(P)$ be undefined otherwise.

\begin{definition}[Uniform process]\label{def:uniform-p}
A sum process $P_1+P_2$ is uniform, noted $U(P_1+P_2)$, if all these conditions hold:
\begin{enumerate} 
\item $\keyword{nodup}(\keyword{labels}(P_1+P_2))$
\item $\keyword{polarity}(P_1 + P_2)\downarrow$
\item $\keyword{participant}(P_1 + P_2)\downarrow$.
\end{enumerate}
\end{definition}

\begin{figure}[t]
    \begin{mathpar}
      \inference[\textsc{WbP-Inp}]{
      \keyword{wb}(P)
      }{
      \keyword{wb}( \PIN \ppr l x P)
      }
      \and
      \inference[\textsc{WbP-Out}]{
      \keyword{wb}(P)
      }{
      \keyword{wb}(\POUT \ppr l e P)
      }
      \and
      \inference[\textsc{WbP-Sum}]{
        U(P_1 + P_2) 
        \quad \keyword{wb}(P_1)
        \quad \keyword{wb}(P_2)
        }{
        \keyword{wb}(P_1 + P_2)
        }
        \and 
        \inference[\textsc{WbP-If}]{
        \keyword{wb}(P)\quad
        \keyword{wb}(Q)
        }{
        \keyword{wb}(\ifk\; e \; \thenk\; P\; \elsek\; Q)
        }
        \and
      \inference[\textsc{WbP-End}]{}{\keyword{wb}(\INACT)} 
      \and
      \inference[\textsc{WbP-Rec}]{
      \keyword{wb}(P)}{\keyword{wb}(\mu \chi.P)}  
      \and
      \inference[\textsc{WbP-Var}]{}{\keyword{wb}(\chi)}
    \end{mathpar}  
      \caption{Well-behaved processes}\label{fig:wb-p}  
      \end{figure} 

\begin{definition}[Well-behaved processes]
A process is well-behaved, noted $\wb(P)$, when one  
of the rules of Figure~\ref{fig:wb-p} applies to $P$.
\end{definition}    

\begin{definition}[Well-formed processes and sessions]
  A process $P$ is well-formed, noted $\wf(P)$,
  when is closed, and well-behaved. 
  A session $\PARI{\ppp_i\lhd P_i}$ is well-formed, noted $\wf(\PARI{\ppp_i\lhd P_i})$,
   whenever for all $i\in I$ we have $\wf(P_i)$. 
  \end{definition} 

  \begin{example}\label{ex:wfp}
  Consider process $P_0 = P + \ifk\, \falsek \, \thenk\, Q\, \elsek\, R$, where  
  $P\eqdef \PIN \ppr {l_1} x \INACT$, $Q \eqdef\PIN \ppr {l_2} x \INACT$, and 
$R \eqdef \PIN \ppr {l_3} x \INACT + \PIN \ppr {l_4} x \INACT$. We have 
$\wf(P_0)$.   
  Process $P_1\eqdef \mu\chi.R_1 + R_2$ is not well-formed, because is not uniform and in turn is not 
  well-behaved.  
  \qed
  \end{example}  

  The following lemmas are used to establish that type checking enforces well-formedness of closed sessions 
  (cf.~Corollary~\ref{cor:type-wfp}).

  \begin{lemma}\label{lem:uniform-corr}
  If $\Gamma\vdash P_1 + P_2: T_1 + T_2$ and $U(T_1 + T_2)$ then
  $U(P_1 + P_2)$.
  \end{lemma}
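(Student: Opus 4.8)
The plan is to isolate the work into three \emph{correspondence} properties relating any well-typed process to its type, proved simultaneously by induction on the derivation of $\Gamma\vdash P\colon T$, after which the lemma falls out by a single instantiation. The three properties are: (i) if $\keyword{polarity}(T)\downarrow$ then $\keyword{polarity}(P)\downarrow$ and $\keyword{polarity}(P)=\keyword{polarity}(T)$; (ii) if $\keyword{participant}(T)\downarrow$ then $\keyword{participant}(P)\downarrow$ and $\keyword{participant}(P)=\keyword{participant}(T)$; and (iii) the set of labels occurring in $P$ is contained in the set of labels occurring in $T$, and whenever $\keyword{nodup}(\keyword{labels}(T))$ holds $\keyword{labels}(P)$ is itself duplicate-free. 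The reason this suffices is that, by Definition~\ref{def:uniform} and Definition~\ref{def:uniform-p}, the predicate $U(\cdot)$ on a sum is exactly the conjunction $\keyword{nodup}\wedge(\keyword{polarity}\downarrow)\wedge(\keyword{participant}\downarrow)$ applied to the top-level sum; so instantiating (i)–(iii) at $P=P_1+P_2$ and $T=T_1+T_2$ transfers each conjunct of $U(T_1+T_2)$ to $U(P_1+P_2)$, which is the claim.

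The heart of the argument is therefore the induction establishing (i)–(iii). The base cases are the prefixes: for \textsc{T-Inp} the subject $\PIN\ppr l x P'$ is typed by $\ppr?l(S).T'$, so $\keyword{polarity}$, $\keyword{participant}$ and $\keyword{labels}$ of process and type agree by definition, and \textsc{T-Out} is symmetric. The cases \textsc{T-End}, \textsc{T-Var} and \textsc{T-Rec} are vacuous for (i) and (ii), since $\keyword{polarity}$ and $\keyword{participant}$ are undefined on $\End$, $X$ and $\mu X.T$, and trivial for (iii) as the label sets are empty. In the \textsc{T-Sum} case the premises are $\Gamma\vdash P_1\colon T_1$ and $\Gamma\vdash P_2\colon T_2$, and the sum clauses of the three functions let me push the induction hypotheses through: definedness and equality of the two summands' polarities (participants) propagate, while duplicate-freedom of $\keyword{labels}(T_1)\uplus\keyword{labels}(T_2)$ forces the label sets of $T_1$ and $T_2$ to be disjoint, so those of $P_1$ and $P_2$ are disjoint subsets and $\keyword{labels}(P_1)\uplus\keyword{labels}(P_2)$ stays duplicate-free. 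The weakening rules \textsc{T-Sum-L} and \textsc{T-Sum-R} only enlarge the type with an extra summand: the hypothesis gives the properties against the active summand, polarity and participant are pinned down because the enclosing sum has them defined, and label containment and duplicate-freedom are preserved because adding a summand only enlarges the target set.

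The remaining case, \textsc{T-If}, is the crux and the main obstacle. Here the subject is $\ifk\,e\,\thenk\,P'\,\elsek\,Q'$ with both branches typed by the \emph{same} $T$, while $\keyword{labels}$, $\keyword{polarity}$ and $\keyword{participant}$ of an if-then-else aggregate the two branches. Properties (i) and (ii) go through smoothly: the induction hypotheses give $\keyword{polarity}(P')=\keyword{polarity}(Q')=\keyword{polarity}(T)$, and likewise for participants, so the aggregated values are defined and equal to those of $T$. The delicate part is (iii): each branch contributes labels lying in the label set of $T$, but the two contributions may overlap, so establishing that the aggregated $\keyword{labels}$ of the if-then-else remains duplicate-free hinges on reading the if-then-else clause of $\keyword{labels}$ as the union of the two branches' label \emph{sets}—both contained in the label set of $T$—rather than as a genuine multiset sum that would double-count a label shared by the two branches. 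Pinning down this invariant, and keeping it stable under the nested \textsc{T-Sum-L}/\textsc{T-Sum-R} weakenings that may occur inside the branches, is where the proof must be most careful; the rest is a routine propagation of the inductive hypotheses through the defining clauses of the three functions.
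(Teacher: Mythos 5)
The paper states Lemma~\ref{lem:uniform-corr} without any proof, so there is nothing to compare your argument against; your decomposition into three transfer properties (polarity, participant, labels) proved by induction on the typing derivation is the natural and almost certainly intended route, and the cases for \textsc{T-Inp}, \textsc{T-Out}, \textsc{T-Sum}, \textsc{T-Sum-L}/\textsc{T-Sum-R} and the polarity/participant half of \textsc{T-If} go through exactly as you describe.

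The genuine gap is the one you yourself flag and then do not close: the $\keyword{nodup}$ half of property (iii) in the \textsc{T-If} case. With the paper's definition $\keyword{labels}(\ifk\, e \,\thenk\, Q_1\, \elsek\, Q_2)\eqdef \keyword{labels}(Q_1)\uplus\keyword{labels}(Q_2)$ as a genuine multiset sum, the property is false, and in fact the lemma as literally stated has a counterexample: take $P_1 = \ifk\, e \,\thenk\, \PIN\ppr{l_1}x\INACT\,\elsek\,\PIN\ppr{l_1}x\INACT$ and $P_2=\PIN\ppr{l_2}x\INACT$; rule \textsc{T-If} types $P_1$ at $T_1=\ppr?l_1(S).\End$ (both branches have the same type, so nothing forbids them sharing a prefix), \textsc{T-Sum} types $P_1+P_2$ at $T_1+T_2$ with $T_2=\ppr?l_2(S).\End$, and $U(T_1+T_2)$ holds, yet $\keyword{labels}(P_1+P_2)=\multiset{l_1,l_1,l_2}$ fails $\keyword{nodup}$. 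Saying that the step ``hinges on reading'' $\uplus$ as a set union in the conditional clause is not a proof step but an amendment of Definition~\ref{def:uniform-p}'s underlying $\keyword{labels}$ function (or, equivalently, an extra side condition on \textsc{T-If}). You should either state that repaired definition explicitly and redo the \textsc{T-If} case against it --- with set union the case does close, since both branches' label sets are contained in $\keyword{labels}(T)$ and inherit its duplicate-freedom --- or report the counterexample; as written, the induction cannot be completed for the definitions the paper actually gives.
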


  \begin{lemma}
  If $\Gamma\vdash P\colon T$ and $\wb(T)$ then $\wb(P)$.
  \end{lemma}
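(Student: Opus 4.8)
The plan is to proceed by induction on the derivation of $\Gamma\vdash P\colon T$, using the typing rules of Figure~\ref{fig:typing} together with the rules for well-behaved types (Figure~\ref{fig:wb}) and well-behaved processes (Figure~\ref{fig:wb-p}). Since $\wb(P)$ does not depend on $\Gamma$, the induction hypothesis reads: for every subderivation concluding $\Gamma'\vdash P'\colon T'$, if $\wb(T')$ then $\wb(P')$. Taking the derivation (rather than the structure of $P$) as the induction measure is essential, because rules \textsc{T-Sum-L}, \textsc{T-Sum-R} are not syntax-directed and may assign a sum type to an arbitrary process, so a structural induction would not line up the cases cleanly.

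First I would dispatch the base and prefix cases. In \textsc{T-End} we have $P=\INACT$ and $\wb(\INACT)$ holds by \textsc{WbP-End}; in \textsc{T-Var} we have $P=\chi$ and $\wb(\chi)$ holds by \textsc{WbP-Var}; neither needs the hypothesis $\wb(T)$. For \textsc{T-Inp} and \textsc{T-Out}, inverting $\wb(\ppr?l(S).T')$ (resp. $\wb(\ppr!l(S).T')$) against \textsc{Wb-Inp} (resp. \textsc{Wb-Out}) yields $\wb(T')$ for the continuation type; the induction hypothesis then gives $\wb$ of the process continuation, and \textsc{WbP-Inp} (resp. \textsc{WbP-Out}) closes the case. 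For \textsc{T-If}, both branches are typed with the same $T$, so the induction hypothesis applied twice with $\wb(T)$ gives well-behavedness of both branches, and \textsc{WbP-If} concludes. The subsumption cases \textsc{T-Sum-L} and \textsc{T-Sum-R} are equally direct: inverting $\wb(T_1+T_2)$ against \textsc{Wb-Sum} produces $\wb(T_1)$ and $\wb(T_2)$, so the induction hypothesis applied to the subderivation $\Gamma\vdash P\colon T_1$ (resp. $\Gamma\vdash P\colon T_2$) yields $\wb(P)$ immediately.

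The two cases that require the auxiliary lemmas are \textsc{T-Rec} and \textsc{T-Sum}. In \textsc{T-Rec} the subderivation types the body against the unfolded type $T'\{\mu X.T'/X\}$, so to invoke the induction hypothesis I must exhibit $\wb(T'\{\mu X.T'/X\})$: inverting $\wb(\mu X.T')$ against \textsc{Wb-Rec} gives $\wb(T')$, and Lemma~\ref{lem:wb-rec} (preservation of well-behavedness under unfolding) upgrades this to $\wb(T'\{\mu X.T'/X\})$; the induction hypothesis then yields well-behavedness of the body and \textsc{WbP-Rec} concludes. The case \textsc{T-Sum} is where the genuine work lies and is the main obstacle: besides $\wb(P_1)$ and $\wb(P_2)$ obtained from the induction hypothesis, rule \textsc{WbP-Sum} also demands the uniformity condition $U(P_1+P_2)$, which is not implied by well-behavedness of the summands alone, since it encodes that both carry a single common participant and polarity and have no duplicate labels. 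This structural agreement must be transported from the type side, where inversion against \textsc{Wb-Sum} supplies $U(T_1+T_2)$, to the process side; precisely this transfer is provided by Lemma~\ref{lem:uniform-corr}, whose hypotheses $\Gamma\vdash P_1+P_2\colon T_1+T_2$ and $U(T_1+T_2)$ are exactly what the current case offers. Applying it gives $U(P_1+P_2)$, and \textsc{WbP-Sum} then establishes $\wb(P_1+P_2)$, completing the induction.
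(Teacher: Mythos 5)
Your proposal is correct and follows essentially the same route as the paper: induction on the typing derivation, with \textsc{T-Sum} discharged via Lemma~\ref{lem:uniform-corr} to transfer uniformity from $T_1+T_2$ to $P_1+P_2$, and \textsc{T-Rec} discharged by pushing well-behavedness through the type unfolding before invoking the induction hypothesis. If anything, your citation of Lemma~\ref{lem:wb-rec} in the \textsc{T-Rec} case is the more precise pointer (the paper invokes Lemma~\ref{lem:wf-preserve}, which reduces to it), and your statement of the goal in that case is cleaner than the paper's.
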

  \begin{proof}
  We proceed by induction on $\Gamma\vdash P\colon T$.

  \begin{description}
    \item[\textsc{T-End}] We apply \textsc{WbP-End} and obtain $\wb(\INACT)$.
    \item[\textsc{T-Rec}] We apply Lemma~\ref{lem:wf-preserve} and obtain the I.H.: $\wb(P)$. 
    We need to prove that:
    \begin{align}
    \textbf{GOAL:} \qquad \wb(P\{\mu \chi.P/\chi\}) \label{eq:eq-wb-pres-1}
    \end{align}
  The result $\wb(P)$ implies $\wb(P\{\mu \chi.P/\chi\})$ can be easily shown 
  by structural induction on $P$.
  In case  $P=\chi_1$ we need to show that 
  $\wb(\chi_1\{\mu \chi.\chi_1/\chi\})$. There are two cases corresponding to 
  (1) $\chi_1=\chi$ and (2) $\chi_1\ne\chi$.
  In case (1) we infer $\wb(\mu \chi_1.\chi_1)$ by applying \textsc{WbP-Var},\textsc{WbP-Rec}.
  In case (2) we have $\wb(\chi_1)$ by applying \textsc{WbP-Var}.
  \item[\textsc{T-Var}] We apply \textsc{WbP-Var} and obtain $\wb(\chi)$.
  \item[\textsc{T-InP}] By $\wb(\pp r ?l(S).T)$ and inversion of \textsc{Wb-Inp} in Figure~\ref{fig:wb} 
  we infer $\wb(T)$.
  The I.H. is $\wb(P)$. We apply  \textsc{WbP-Inp}  to the I.H. and obtain  
  $\wb(\pp r ?l(x).P)$.
  \item[\textsc{T-Out}] Analogous to \textsc{T-InP}.
  \item[\textsc{T-Sum}] By $\wb(T_1+T_2)$ and inversion of \textsc{Wb-Sum} in Figure~\ref{fig:wb} 
  we infer $U(T_1+T_2)$ and $\wb(T_1)$ and $\wb(T_2)$.
  The I.H. is $\wb(P_1)$ \emph{and} $\wb(P_2)$. In order to infer $\wb(P_1 + P_2)$ by means of
  \textsc{WbP-Sum}, we need to establish  $U(P_1 + P_2)$.
  We use Lemma~\ref{lem:uniform-corr} and we conclude.
  \item[\textsc{T-Sum-L}] By $\wb(T_1+T_2)$ and inversion of \textsc{Wb-Sum} in Figure~\ref{fig:wb} 
  we infer $\wb(T_1)$.
  By I.H. we have $\wb(P)$.
  \item[\textsc{T-Sum-R}] By $\wb(T_1+T_2)$ and inversion of \textsc{Wb-Sum} in Figure~\ref{fig:wb} 
  we infer $\wb(T_2)$.
  By I.H. we have $\wb(P)$.
  \item[\textsc{T-If}] By I.H. we have $\wb(P)$ and $\wb(Q)$.
  We apply \textsc{WbP-If} and infer $\wb(\ifk\; e$  $\thenk\; P\; \elsek\; Q )$.\qed
\end{description} 
  \end{proof}

  \begin{corollary}[Typing well-formed sessions]\label{cor:type-wfp}
  Let $\cal M$ be closed. 
  If $\Gamma\Vdash{\cal M}\colon \Delta$ then $\wf(\cal M)$.
  \end{corollary}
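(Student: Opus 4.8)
The plan is to prove the statement by combining inversion of the session typing judgement with the preceding Lemma that relates well-behaved types to well-behaved processes. Recall that $\wf(\cal M)$ unfolds to the requirement that, for every thread $\ppp_i\lhd P_i$ occurring in $\cal M$, the process $P_i$ is closed and well-behaved. Hence I need to establish, for each such $P_i$, both closedness and $\wb(P_i)$, and then repackage these as $\wf(P_i)$ for all $i$.

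First I would invert the judgement $\Gamma\Vdash {\cal M}\colon\Delta$. Since $\cal M$ is a composition $\PARI \ppp_i\lhd P_i$, the applicable rule is \textsc{T-Ses} (collapsing to \textsc{T-Thr} in the single-thread case). Its premises supply, for every $i\in I$, a sub-derivation $\Gamma\Vdash \ppp_i\lhd P_i\colon \ppp_i\colon T_i$, and inversion of \textsc{T-Thr} on each yields $\Gamma\vdash P_i\colon T_i$ together with $\wf(T_i)$.

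Next I would transfer well-behavedness from the types to the processes. By the definition of well-formedness in App.~\ref{sec:wf}, $\wf(T_i)$ entails in particular $\wb(T_i)$. Applying the preceding Lemma to $\Gamma\vdash P_i\colon T_i$ and $\wb(T_i)$ then gives $\wb(P_i)$ for every $i\in I$. For closedness, I would observe that the hypothesis that $\cal M$ is closed means the session has no free variables, and since the free variables of a composition are the union of those of its threads, each $P_i$ is closed. Combining $\wb(P_i)$ with closedness yields $\wf(P_i)$, and as this holds for all $i\in I$ we conclude $\wf(\cal M)$.

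The only genuinely delicate point is the inversion step: \textsc{T-Ses} is stated for a parallel composition indexed by a set, so I would take care that the single-thread case (${\cal M} = \ppp\lhd P$, typed directly by \textsc{T-Thr}) and the general composite case are both covered, and that the shared environment $\Gamma$ is irrelevant to closedness precisely because closed processes have no free variable or process-variable occurrences for $\Gamma$ to bind. Everything else reduces to a direct application of the preceding Lemma and a routine bookkeeping argument over the components of $\cal M$.
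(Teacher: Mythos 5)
Your proposal is correct and follows exactly the route the paper intends: the corollary is left without an explicit proof precisely because it is the direct combination of inversion of \textsc{T-Ses}/\textsc{T-Thr} (yielding $\Gamma\vdash P_i\colon T_i$ and $\wf(T_i)$, hence $\wb(T_i)$), the immediately preceding lemma transferring $\wb(T_i)$ to $\wb(P_i)$, and the closedness of each thread inherited from the closedness of $\cal M$. Your handling of the single-thread case and the remark that $\Gamma$ is irrelevant to closedness are both consistent with the paper's definitions.
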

 
\subsection{Progress}

\begin{definition}[Ended session]
A well-formed session $\PARI{\ppp_i\lhd P_i}$ is ended, noted $\keyword{Ended}(\PARI{\ppp_i\lhd P_i})$,
 when for all $i\in I$ we have $P_i =\INACT$.
\end{definition}    

\begin{lemma}[Endness typing]\label{lem:end-typing}
Let $\Gamma\Vdash{\cal M}\colon \Delta$ with $\cal M$ closed. If $\keyword{consumed}(\Delta)$ 
then $\keyword{Ended}(\cal M)$ or there is ${\cal M}'$ s.t. ${\cal M} \lts\tau {\cal M'}$. 
\end{lemma}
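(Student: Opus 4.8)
The plan is to proceed by a syntactic inversion argument on the typing derivation. First I would invert the session judgement $\Gamma\Vdash{\cal M}\colon\Delta$ via rules \textsc{T-Ses} and \textsc{T-Thr}, writing ${\cal M}=\PARI{\ppp_i\lhd P_i}$ with $\Delta=\ppp_1\colon T_1,\dots,\ppp_n\colon T_n$ and $\Gamma\vdash P_i\colon T_i$ for each $i$ (the single-thread case ${\cal M}=\ppp\lhd P$ being handled analogously). Since $\keyword{consumed}(\Delta)$ holds, every $T_i$ is $\End$ by Leibniz equality, so each thread satisfies $\Gamma\vdash P_i\colon\End$.

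The key step is an inversion lemma characterising the processes typable with $\End$: if $\Gamma\vdash P\colon\End$ then either $P=\INACT$ or $P=\ifk\; e\; \thenk\; Q\; \elsek\; R$ for some $e,Q,R$. I would establish this by inspecting the conclusion type of each rule of Figure~\ref{fig:typing}: rules \textsc{T-Rec} and \textsc{T-Var} yield a type of shape $\mu X.T$, rule \textsc{T-Inp} yields $\ppr?l(S).T$, rule \textsc{T-Out} yields $\ppr!l(S).T$, and rules \textsc{T-Sum}, \textsc{T-Sum-L}, \textsc{T-Sum-R} yield $T_1+T_2$; none of these is syntactically $\End$, so none can conclude $\Gamma\vdash P\colon\End$. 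Only \textsc{T-End} (forcing $P=\INACT$) and \textsc{T-If} (forcing $P=\ifk\; e\; \thenk\; Q\; \elsek\; R$) survive, which gives the claim.

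With this lemma in hand, I would split on whether every $P_i$ equals $\INACT$. If so, then $\keyword{Ended}(\cal M)$ holds and the first disjunct is proved. Otherwise some thread $\ppp_j\lhd P_j$ has $P_j=\ifk\; e_j\; \thenk\; Q_j\; \elsek\; R_j$, and inversion of \textsc{T-If} yields $\Gamma\vdash e_j\colon\bool$. Since $\cal M$ is closed, $P_j$ and hence $e_j$ contains no free variables, so evaluating the closed boolean expression gives $e_j\downarrow\truek$ or $e_j\downarrow\falsek$. Using \textsc{R-Str} together with the permutation axiom of $\pequiv$ to bring thread $j$ to the front, I would then apply \textsc{R-IfT} or \textsc{R-IfF} to obtain ${\cal M}\lts\tau{\cal M}'$, establishing the second disjunct.

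The main obstacle is the inversion lemma of the second paragraph: it relies crucially on $=$ being Leibniz equality, so that $\End$ is genuinely distinct from $\mu X.\End$ and from every prefix or sum type—this is precisely the iso-recursive discipline, and it is what rules out recursive or prefixed threads under the consumed hypothesis (in an equi-recursive setting a recursive type could unfold to $\End$, and the lemma would fail). A secondary point requiring care is the totality of $\downarrow$ on closed boolean expressions, which I would invoke as a property of the assumed evaluation function to guarantee that a consumed-but-not-ended session can always fire the silent if-reduction.
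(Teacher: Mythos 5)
Your proposal is correct and matches the paper's proof in all essentials: both hinge on the same inversion of $\Gamma\vdash P_i\colon\End$ showing each thread is either $\INACT$ or an if-then-else (justified exactly by the iso-recursive/Leibniz reading of $\End$), followed by firing \textsc{R-IfT}/\textsc{R-IfF} on any non-$\INACT$ thread. The only difference is organisational --- the paper iterates over threads by induction on $n$ whereas you pick the offending thread directly and reposition it with \textsc{R-Str} and the permutation axiom --- which does not change the substance of the argument.
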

\begin{proof}
Let ${\cal M}=\PARI{\ppp_i\lhd P_i}$ for some $I={1,\dots, n}$, $n\geq 1$.
First note that $\cal M$ is well-formed, by Corollary~\ref{cor:type-wfp}.

We need to show that for all $i\in I$ 
we have $P_i=\INACT$, 
or there is ${\cal M}'$ s.t. $\PARI{\ppp_i\lhd P_i}\lts\tau{\cal M}'$.  

We proceed by induction on $n$.

\begin{description}
\item[(n=1)] 
By inversion of $\Gamma\Vdash{\ppp_1\lhd P_1}\colon \Delta$ we obtain
\begin{align}
  &\Gamma\Vdash {\pp p}_1\lhd P_1\colon {\pp p}_1\colon \End
  &&\Delta = {\pp p}_1\colon \End 
  \label{eq:end-1}
  \end{align}

  By inversion of (\ref{eq:end-1}), we obtain 
  $\Gamma\vdash P_1\colon \End$. 
  A case analysis on the rules in Figure~\ref{fig:typing} show that (i) $P_1=\INACT$ or 
  (ii) $P_1 = \ifk\, e \, \thenk\, Q_1\, \elsek\, Q_2$, for some $e$, $Q_1$ and $Q_2$.
  
  Case (i) is clear.
  
  Case (ii) allow us to deduce that there is ${\cal M}'$ s.t. ${\ppp_1\lhd P_1} \lts\tau {\cal M'}$.
  There are two sub-cases corresponding to $e\downarrow \truek$ and $e\downarrow \falsek$.
  In the former case, we have ${\cal M}'={\pp p}_1\lhd Q_1$;
  in the latter case, we have ${\cal M}'={\pp p}_1\lhd Q_2$.

\item[(n$>$1)]

  The I.H. is 
  \begin{align}    
  &\forall 1\leq k\leq n-1\,.\, P_k =\INACT &\text{ or }
  \label{eq:end-2}
  \\
  &\exists {\cal M}'\,.\,\parallel_{i\in I\less n}\hspace{.125em}{\ppp_i\lhd P_i}\lts\tau{\cal M}'
  \label{eq:end-3}
\end{align}
  
  Consider $\Gamma\vdash P_n\colon \End$. 
  A case analysis on the rules in Figure~\ref{fig:typing} show that (i) $P_n=\INACT$ or 
  (ii) $P_n = \ifk\, e \, \thenk\, Q_1\, \elsek\, Q_2$, for some $e$, $Q_1$ and $Q_2$.

  \medskip
  In case (i) we have two sub-cases corresponding to (\ref{eq:end-2}) and (\ref{eq:end-3}).
  In sub-case (\ref{eq:end-2}) we have $\keyword{Ended}(\PARI{\ppp_i\lhd \INACT})$, and we have done.
  In sub-case (\ref{eq:end-3}) we invert the reduction and find that a rule 
  $\textsc r\in \{ 
  \textsc{R-Rec}, \textsc{R-IfT},\textsc{R-IfF}\}$ has been applied. 
  We then build a derivation ending with rule \textsc r:  
  $$
  \parallel_{i\in I\less n}\hspace{.125em}{\ppp_i\lhd P_i}\parallel \ppp_n\lhd \INACT\lts\tau
  {\cal M}'\parallel \ppp_n\lhd \INACT
  $$ 

  \medskip
  In case (ii) we find ${\cal M}'$ s.t. 
  $$
  \parallel_{i\in I\less n}\hspace{.125em}{\ppp_i\lhd P_i}\parallel \ppp_n\lhd P_n\lts\tau
  {\cal M}'
  $$

  There are two sub-cases corresponding to $e\downarrow \truek$ and $e\downarrow \falsek$.
  In the former case, we have ${\cal M}'= 
  \parallel_{i\in I\less n}\hspace{.125em}{\ppp_i\lhd P_i}\parallel{\pp p}_n\lhd Q_1$.
  In the latter case, we have ${\cal M}'= 
  \parallel_{i\in I\less n}\hspace{.125em}{\ppp_i\lhd P_i}\parallel{\pp p}_n\lhd Q_2$.\qed
\end{description}
\end{proof}

\begin{definition}[Unblocked session]
A well-formed session $\cal M$ is unblocked, noted $\keyword{Unblocked}(\cal M)$, whenever:
$$
(\forall {\cal M}'\,.\, {\cal M}\nlts\tau {\cal M}') \land
(\forall l,\ppp,\ppq, 
{\cal M}'\,.\, {\cal M}\nlts{\ASYN l \ppp \ppq} {\cal M}')    
\Longrightarrow \keyword{Ended}(\cal M) 
$$
\end{definition}    

\begin{definition}[Silent session]
A well formed process $P$ is silent if 
there is $Q$ s.t. $P= \mu \chi.Q$,
or there are $e, P_1, P_2$ s.t. $P = \ifk\, e \, \thenk\, P_1\, \elsek\, P_2$,
or there are $P_1$, $P_2$ s.t. $P = P_1 + P_2$ and exists $i\in{1,2}$ s.t. 
$P_i$ is silent.
A well formed session $\PARI{\ppp_i\lhd P_i}$ is silent   if there exists $i\in I$ s.t. 
$P_i$ is silent.
\end{definition}  

\begin{example}
Consider $P + \ifk\, \falsek \, \thenk\, Q\, \elsek\, R$:
if the process is well-formed, then  it is silent.
For instance take $P_0$ in Example~\ref{ex:wfp}:
$$
\PIN \ppr {l_1} x \INACT + \ifk\, \falsek \, \thenk\, \PIN \ppr {l_2} x \INACT\, \elsek\, 
(\PIN \ppr {l_3} x \INACT + \PIN \ppr {l_4} x \INACT)
$$
Process $P_0$ is well-formed and in turn it is silent. We note that, by applying rules 
\textsc{R-Sum-R,R-IfF}, we have
$$
\ppp\lhd P_0\lts\tau 
\ppp\lhd \PIN \ppr {l_3} x \INACT + \PIN \ppr {l_4} x \INACT
$$
\qed
\end{example}  

\begin{definition}[Prompt session]
  A well formed session $\PARI{\ppp_i\lhd P_i}$ is prompt if 
  there is $\{j,k\}\subseteq I$, $l,v, P', P''$ s.t. 
  $$
  \ppp_j\lhd P_j\lts{\AIN {\ppp_k} l v} \ppp_j\lhd P' \text{ and } 
  \ppp_k\lhd P_k\lts{\AOUT {\ppp_j} l v} \ppp_k\lhd P''
  $$
  \end{definition}

\begin{lemma}[Stuckness inversion]\label{lem:stuck-inv}
Let $\wf(\PARI{\ppp_i\lhd P_i})$.
The following hold.
\begin{enumerate}
    \item $\forall {\cal M}'\,.\, {\PARI{\ppp_i\lhd P_i}}\nlts\tau {\cal M}'$ implies that 
    $\PARI{\ppp_i\lhd P_i}$ is not silent
    \item $\forall l,\ppp,\ppq, 
{\cal M}'\,.\, \PARI{\ppp_i\lhd P_i}\nlts{\ASYN l \ppp \ppq} {\cal M}'$ implies that 
$\PARI{\ppp_i\lhd P_i}$ is not prompt
\end{enumerate}
\end{lemma}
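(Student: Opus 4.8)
The plan is to prove both items by contraposition, since in each case the natural direction is to exhibit a transition rather than to rule one out. For item~(2) the contrapositive reads: if $\PARI{\ppp_i\lhd P_i}$ is prompt then it can fire a synchronisation. This is immediate. Unfolding promptness we obtain $\{j,k\}\subseteq I$, a label $l$, a value $v$ and processes $P',P''$ with $\ppp_j\lhd P_j\lts{\AIN {\ppp_k} l v}\ppp_j\lhd P'$ and $\ppp_k\lhd P_k\lts{\AOUT {\ppp_j} l v}\ppp_k\lhd P''$; these are exactly the two premises of rule \textsc{R-Com}. Using \textsc{R-Str} to move the threads $\ppp_j\lhd P_j$ and $\ppp_k\lhd P_k$ to the front of the composition, applying \textsc{R-Com}, and reordering back with \textsc{R-Str}, I obtain $\PARI{\ppp_i\lhd P_i}\lts{\ASYN l {\ppp_j}{\ppp_k}}{\cal M}'$, contradicting the hypothesis of~(2).

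For item~(1) the contrapositive reads: if $\PARI{\ppp_i\lhd P_i}$ is silent then it can fire a $\tau$ action. By the definition of a silent session there is $i\in I$ with $P_i$ silent, and I would argue by induction on the derivation that $P_i$ is silent. If $P_i=\mu\chi.Q$, rule \textsc{R-Rec}, whose parallel context is instantiated with the remaining threads, together with \textsc{R-Str} to position $\ppp_i$, gives the $\tau$ step. If $P_i=\ifk\; e\; \thenk\; P_a\; \elsek\; P_b$, well-formedness makes $e$ closed, so $e\downarrow v$; in the intended setting the guard is boolean and either \textsc{R-IfT} or \textsc{R-IfF} applies, again after positioning $\ppp_i$ with \textsc{R-Str}. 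If $P_i=P_1+P_2$ with $P_m$ silent for some $m\in\{1,2\}$, I would apply the induction hypothesis to $\ppp_i\lhd P_m$ and propagate the resulting $\tau$ step through the sum with \textsc{R-Sum-L} or \textsc{R-Sum-R}.

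The main obstacle is the sum case, and it is exactly here that well-formedness is indispensable. A bare recursion can never occur as an immediate summand of a well-formed process, because $\keyword{polarity}(\mu\chi.Q)$ is undefined while \textsc{WbP-Sum} demands a uniform summand (cf.~Definition~\ref{def:uniform-p}); hence the silent branch singled out by the induction must ultimately be an if-then-else both of whose branches carry a common polarity, as in Example~\ref{ex:wfp}. This is what makes the selected branch reducible by \textsc{R-Sum-L}/\textsc{R-Sum-R} followed by \textsc{R-IfT}/\textsc{R-IfF}, and what guarantees the propagated label is genuinely $\tau$. I expect the two remaining delicate points to be (i) the correct interplay between the thread-level rules \textsc{R-Sum-L}/\textsc{R-Sum-R} and the parallel context, which I would handle uniformly with \textsc{R-Str} and the context slots of the $\tau$-introducing rules; and (ii) the reduction of the conditional's guard to a boolean, which I would isolate as the single use of the evaluation discipline on closed guards, mirroring its role in the progress argument (cf.~Table~\ref{tab:progress-outline}, step~v).
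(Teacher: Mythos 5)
First, for calibration: the paper states Lemma~\ref{lem:stuck-inv} without any proof, so there is nothing in-paper to compare your argument against; what follows assesses your plan on its own terms. Your item~(2) is correct and essentially forced: promptness hands you exactly the two premises of \textsc{R-Com}, and \textsc{R-Str} (thread permutation) puts $\ppp_j\lhd P_j$ and $\ppp_k\lhd P_k$ where the rule expects them.

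The genuine gap is in item~(1), in the sum case, and it is not the one you flagged. The $\tau$-producing rules \textsc{R-Rec}, \textsc{R-IfT}, \textsc{R-IfF} of Figure~\ref{fig:session-semantics} fire only when the reducing thread occurs in the composition with its top-level constructor literally $\mu\chi.P$ or $\ifk\;e\;\thenk\;P\;\elsek\;Q$; the rules \textsc{R-Sum-L}/\textsc{R-Sum-R} conclude a transition of the \emph{single thread} $\ppr\lhd P+Q$ only; and $\pequiv$ merely permutes threads, so it never relates a one-thread session to a multi-thread one. Hence after you use the induction hypothesis plus \textsc{R-Sum-L} to derive $\ppp_i\lhd(P_1+P_2)\lts\tau\ppp_i\lhd P'$, no rule lifts this thread-level $\tau$ into the composition with the remaining threads, and your final ``reorder with \textsc{R-Str}'' step has nothing to apply to. Concretely, $\ppp\lhd\big((\ifk\;\truek\;\thenk\;\PIN\ppq{l_1}x\INACT\;\elsek\;\PIN\ppq{l_2}x\INACT)+\PIN\ppq{l_3}x\INACT\big)\PAR\ppq\lhd\POUT\ppp{l_1}{}\INACT$ is well-formed (both summands have polarity $?$ towards $\ppq$ and distinct labels) and silent, yet no $\tau$ is derivable for it; the paper's own illustration after the definition of silent sessions exercises \textsc{R-Sum-R}+\textsc{R-IfF} only on a single-thread session, consistent with this limitation. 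To close the case you would need an explicit congruence rule lifting thread-level $\tau$'s into parallel contexts, or a separate argument that such configurations cannot occur where the lemma is invoked. A second, smaller issue: well-formedness gives closedness of the guard $e$ but not $e\colon\bool$, so $e\downarrow v$ need not enable \textsc{R-IfT}/\textsc{R-IfF}; that premise comes from typing (\textsc{T-If}), which is not among the lemma's hypotheses. You acknowledge both points as ``delicate'' but neither is actually discharged.
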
   

\begin{lemma}[Stuckness typing]\label{lem:stuck-typing}
  Let $\Gamma\Vdash{\cal M}\colon \Delta$ with $\cal M$ closed. If ${\cal M}$ is both not silent and not prompt 
  then for all  $\alpha$ and $\Delta'$ we have $\Delta\nlts\alpha\Delta'$.
  \end{lemma}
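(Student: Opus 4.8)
The plan is to argue by contraposition: assuming $\Delta\lts\alpha\Delta'$ for some $\alpha$ and $\Delta'$, I will show that ${\cal M}$ is silent or prompt, so that a session which is neither can never induce a moving environment. The starting point is the inversion of $\Gamma\Vdash{\cal M}\colon\Delta$ through \textsc{T-Ses} and \textsc{T-Thr}: writing ${\cal M}=\PARI\ppp_i\lhd P_i$, for every participant $\ppp_i$ we obtain a closed process $P_i$ with $\Gamma\vdash P_i\colon T_i$, $\wf(T_i)$, and $\Delta(\ppp_i)=T_i$. I then invert $\Delta\lts\alpha\Delta'$, which by Figure~\ref{fig:lts-types} can only be an instance of \textsc{Se-Rec} or of \textsc{Se-Com}, and treat the two cases separately.

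In the \textsc{Se-Rec} case $\alpha=\tau$ and there is a participant $\ppp$ with $\Delta(\ppp)=\mu X.T$. Since $\Gamma\vdash P_\ppp\colon\mu X.T$ with $P_\ppp$ closed, I inspect which typing rules can have a conclusion carrying the type $\mu X.T$: these are only \textsc{T-Rec}, \textsc{T-Var} and \textsc{T-If}, because every other rule produces a prefixed type, an $\End$, or a sum $T_1+T_2$. Rule \textsc{T-Var} is excluded since ${\cal M}$ is closed, so $P_\ppp$ is either a recursion $\mu\chi.Q$ or a conditional $\ifk\,e\,\thenk\,Q_1\,\elsek\,Q_2$; in both cases $P_\ppp$ is silent, hence ${\cal M}$ is silent, contradicting the hypothesis. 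This case is routine.

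The \textsc{Se-Com} case is the substantive one. Here $\alpha=\sync l p q$ and there are participants $\ppp,\ppq$ with $T_\ppp\lts{\AIN{\ppq}{l}{v}}T'_\ppp$ and $T_\ppq\lts{\AOUT{\ppp}{l}{v}}T'_\ppq$; inverting these type transitions and using well-formedness, $T_\ppp$ is a branching $\&_{i\in I}\ppq?l_i(S_i).U_i$ and $T_\ppq$ a selection $\oplus_{j\in J}\ppp!l_j(S_j).V_j$, with $l$ a common label of compatible sort. I then use the non-silence hypothesis to pin down the shapes of $P_\ppp$ and $P_\ppq$: inverting \textsc{T-Inp}/\textsc{T-Sum}/\textsc{T-Sum-L}/\textsc{T-Sum-R} (dually \textsc{T-Out}/\dots), and noting that $\INACT$ cannot receive a well-formed branching or selection type (since $\End+T$ is not uniform, cf.~App.~\ref{sec:wf}), a closed non-silent process typed by $T_\ppp$ (resp.~$T_\ppq$) must be a nonempty sum of input prefixes towards $\ppq$ (resp.~output prefixes towards $\ppp$), each able to fire its own I/O action. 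What remains is a converse-reflection step: to produce a common label $l''$ and value $w$ so that $\ppp\lhd P_\ppp\lts{\AIN{\ppq}{l''}{w}}\ppp\lhd P'$ and $\ppq\lhd P_\ppq\lts{\AOUT{\ppp}{l''}{w}}\ppq\lhd P''$, which is exactly promptness.

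The hard part will be this label agreement, and it is where I expect the real difficulty. Lemmas~\ref{lem:typ-inv},~\ref{lem:lts-inv} and~\ref{lem:tlts-inv} only give the inclusion of process labels into type labels (the direction process-action $\Rightarrow$ type-action), whereas I need the opposite direction; this converse is delicate precisely because \textsc{T-Sum-L} and \textsc{T-Sum-R} allow a process to be typed with strictly more top-level branches than it actually offers, so the shared type label $l$ need not be offered by both processes. My plan to close this is to strengthen the shape analysis into an invariant stating that, at the \emph{outermost} sum of a closed non-silent well-typed process, the typing is the tight rule \textsc{T-Sum} (one-sided weakenings being confined under prefixes or inside conditional branches), so that the unguarded prefixes of $P_\ppp$ and $P_\ppq$ coincide with the top-level prefixes of $T_\ppp$ and $T_\ppq$; the common type label $l$ then descends to a common process label and \textsc{R-Com} yields the required synchronisation, contradicting non-promptness. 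Making this tight correspondence precise, and justifying that it transfers $l$ downward, is the crux on which the whole \textsc{Se-Com} case rests.
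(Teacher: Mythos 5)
Your overall strategy coincides with the paper's: argue by contradiction (equivalently, contraposition), invert $\Delta\lts\alpha\Delta'$ into the \textsc{Se-Rec} and \textsc{Se-Com} cases of Figure~\ref{fig:lts-types}, and use inversion of the typing to conclude that $\cal M$ is silent in the first case and silent or prompt in the second. Your treatment of the $\tau$ case matches the paper's and is fine, as is the shape analysis ruling out $\INACT$, $\chi$, $\mu\chi.Q$ and conditionals on the communicating threads.

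The gap is in the \textsc{Se-Com} case, at exactly the point you single out as the crux. The invariant you propose to close it --- that at the outermost sum of a closed, non-silent, well-typed process the last rule must be the tight \textsc{T-Sum}, with one-sided weakenings confined under prefixes or inside conditional branches --- is refutable. Take $P=\PIN\ppq{l_1}x\INACT$: it is closed and not silent, and \textsc{T-Inp}, \textsc{T-End} followed by a \emph{top-level} application of \textsc{T-Sum-L} yield $\vdash P\colon \ppq?l_1(S).\End+\ppq?l_2(S).\End$, a well-formed branching type; the weakening is not under any prefix nor inside any conditional, and $P$ offers strictly fewer unguarded labels than its type. Hence the tight correspondence between the unguarded prefixes of $P_\ppp,P_\ppq$ and the top-level prefixes of $T_\ppp,T_\ppq$ cannot be established this way, and the descent of the shared type label $l$ to a shared process label --- which is what promptness needs --- does not follow. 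For comparison, the paper does not take this detour: its case analysis on $\Gamma\vdash P_j\colon T_\ppr$ directly asserts that $P_j$ either carries the specific label $l$ unguarded (possibly inside a sum) or has an unguarded conditional, and then combines the input and output sides to conclude prompt-or-silent; your Lemmas~\ref{lem:typ-inv}--\ref{lem:tlts-inv} observation is correct that this is the converse direction to what those lemmas provide, but your proposed bridge is not sound as stated, so the \textsc{Se-Com} case remains open in your write-up.
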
 
\begin{proof}
We show that the hypothesis $\exists \alpha,\Delta_1$ s.t. 
\begin{align} 
  \Delta\lts\alpha \Delta_1
  \label{eq:stuck-typ-1}
\end{align} gives rise
to a contradiction.

Let ${\cal M}=\PARI{\ppp_i\lhd P_i}$ for some $I={1,\dots, n}$, $n\geq 1$. 

By inversion of (\ref{eq:stuck-typ-1}), we have two sub-cases corresponding to \textsc{Se-Rec} (\ref{eq:stuck-typ-2})
and 
\textsc{Se-Com} (\ref{eq:stuck-typ-3}), respectively (cf.~Figure~\ref{fig:lts-types}):
\begin{align}
  &\Delta\lts{\tau}\Delta_1  
  \label{eq:stuck-typ-2}
  \\
      &\Delta\lts\alpha \Delta_1  
      &\alpha\in\{\sync l \ppr \ppq, \sync l \ppq \ppr \}
      \label{eq:stuck-typ-3}
\end{align}

\begin{description}

\item[(\ref{eq:stuck-typ-2})]
We invert the transition and obtain that there are $T,T'$ s.t.
$\Delta(\ppr)=T$ and $T\lts\tau T'$.
 By inverting $T\lts\tau T'$, we obtain that 
$T =\mu X.R$, for some $R$. 

By inversion of $\Gamma\Vdash\PARI{\ppp_i\lhd P_i}\colon \Delta$, we obtain that 
there is there is $j\in I$ s.t. $\ppr = \ppp_j$, and in turn 
$\Gamma\vdash P_j\colon T$. 
A case analysis on the typing rules in Figure~\ref{fig:typing} shows that $P_j = \mu \chi.P'$, for some $P'$,
or $P_j=\ifk\, e \, \thenk\, P\, \elsek\, Q$, for some $P, Q$.
Note indeed that the case $P_j=\chi$ is excluded, because 
$\cal M$ is well-formed.
Thus $\PARI{\ppp_i\lhd P_i}$ is silent, contradiction.

\item[(\ref{eq:stuck-typ-3})]
We show the case $\alpha=\sync l \ppr \ppq$; the case $\alpha=\sync l \ppq \ppr$ is analogous.
We invert the transition and obtain that there are participants $\ppr,\ppq$
and types $T_{\ppr}, T_{\ppq},T'_{\ppr}, T'_{\ppq}$, and  a value $v$
such that $\Delta(\ppr)=T_{\ppr}$ and $\Delta(\ppq)=T_{\ppq}$ and 
\begin{align}
T_{\ppr}\lts{{\ppq}?l(v)} T'_{\ppr}
      \label{eq:stuck-typ-4}
      \\
T_{\ppq}\lts{{\ppr}!l\langle v\rangle}  T'_{\ppq}
      \label{eq:stuck-typ-5}
\end{align}     
By inverting (\ref{eq:stuck-typ-4}), we obtain that there is $S$ s.t. 
$\emptyset\vdash v\colon S$ and 
\\
$T_{\ppr} = \&_{i\in I} \ppq?l_i(S_i).T'_{\ppr}$
(cf.~Definition~\ref{def:notation})
and 
$i\in I$ s.t. $l_i = l$ and $S_i = S'$. 

Similarly, 
by inverting (\ref{eq:stuck-typ-5}) we obtain that there is $S'$ s.t. 
$\emptyset\vdash v\colon S'$ and 
$T_{\ppq}=\oplus_{j\in J} \ppr!l_j(S_j).T'_{\ppq}$ 
and $j\in J$ s.t. $l_j = l$ and $S_j = S'$. 

We note that $\emptyset\vdash v\colon S$ and $\emptyset\vdash v\colon S'$ imply $S = S'$ 
(cf.~Figure~\ref{fig:sorting}).

\medskip
By inversion of $\Gamma\Vdash\PARI{\ppp_i\lhd P_i}\colon \Delta$, we obtain that 
there is there is $\{j,k\}\subseteq I$ s.t. $\ppr = \ppp_j$,$\ppq =\ppp_k$, and in turn 
\begin{align}
\Gamma\vdash P_j\colon T_{\ppr} 
\label{eq:stuck-typ-inp}
\\
\Gamma\vdash P_k\colon T_{\ppq}
\label{eq:stuck-typ-out}
\end{align}
 
 Consider (\textbf{\ref{eq:stuck-typ-inp}}).
 A case analysis on the typing rules in Figure~\ref{fig:typing} shows that 
 there is a continuation $P'$ s.t.
 $P_j = \PIN \ppq l x P'$,  or   
 $P_j = Q_1 + \cdots + \PIN \ppq l x P' + \cdots + Q_n$, for some $Q_1,\dots, Q_n$, or 
 there are $P'', Q''$ s.t. 
 $P_j = \ifk\, e \, \thenk\, P''\, \elsek\, Q''$
 or 
 $P_j = Q_1 + \cdots + \ifk\, e \, \thenk\, P''\, \elsek\, Q'' + \cdots + Q_n$,  
 for some $Q_1,\dots, Q_n$.

\medskip
Consider (\textbf{\ref{eq:stuck-typ-out}}).
Similarly, we infer that 
 there are an expression $e$ and a continuation $P'$ s.t. 
 $P_k = \POUT \ppr l e P'$,  or   
 $P_j = Q_1 + \cdots + \POUT \ppr l e P' + \cdots + Q_n$, for some $Q_1,\dots, Q_n$, or
 there are $P'', Q''$ s.t. 
 $P_k = \ifk\, e \, \thenk\, P''\, \elsek\, Q''$
 or 
 $P_k = Q_1 + \cdots + \ifk\, e \, \thenk\, P''\, \elsek\, Q'' + \cdots + Q_n$,  
 for some $Q_1,\dots, Q_n$.
 
\medskip
The analysis of all possible combinations of the sub-cases highlighted above 
show that two possibilities arise.

Case (1) below corresponds to the sub-cases whether the label is not guarded by 
$\ifk \text{-} \thenk \text{-} \elsek$, both in input and output.

Case (2) describes all the remaining sub-cases: 

\begin{enumerate}
\item $\exists {\cal M'}\,.\, {\cal M}\lts{\ASYN l \ppr \ppq}{\cal M}'$
\item $\exists {\cal M'}\,.\, {\cal M}\lts{\tau}{\cal M}'$
\end{enumerate}  

Thus one of the following holds: $\cal M$ is prompt, or $\cal M$ is silent, contradiction.
\qed
\end{description}
\end{proof}  

\paragraph{Proof of Theorem~\ref{th:progress}.}
Let ${\cal M}=\PARI{\ppp_i\lhd P_i}$ for some $I={1,\dots, n}$, $n\geq 1$.
From ${\cal M}$ closed, $\Gamma\Vdash {\cal M}\colon\Delta$, and Corollary~\ref{cor:type-wfp} we infer
$\wf({\cal M})$.

The structure of the proof is the following.
We use Lemma~\ref{lem:end-typing}, Lemma~\ref{lem:stuck-inv} and  Lemma~\ref{lem:stuck-typing} and show that 
$\Gamma\Vdash{\cal M}\colon\Delta$ and the hypotheses
\begin{align}
&\forall {\cal M}'\,.\, \PARI{\ppp_i\lhd P_i}\nlts\tau {\cal M}' 
\label{eq:prog-1}
\\
&\forall l,\ppp,\ppq, 
{\cal M}'\,.\, \PARI{\ppp_i\lhd P_i}\nlts{\ASYN l \ppp \ppq} {\cal M}'
\label{eq:prog-2}
\end{align}
imply:
$$
\textbf{GOAL}:\quad 
\keyword{Ended}(\PARI{\ppp_i\lhd P_i})
$$

To show that the implication holds, we start by applying Lemma~\ref{lem:stuck-inv} 
to the hypothesis (\ref{eq:prog-1})
and obtain that $\PARI{\ppp_i\lhd P_i}$ is not silent.
By applying Lemma~\ref{lem:stuck-inv} to the hypothesis (\ref{eq:prog-2})
we obtain that $\PARI{\ppp_i\lhd P_i}$ is not prompt.

We apply Lemma~\ref{lem:stuck-typing} to these results and infer that 
forall $\alpha$ and $\Delta'$ 
\begin{align} 
\Delta\nlts\alpha\Delta'\label{eq:progr-stuck}
\end{align}

By inversion of $\Gamma\Vdash{\PARI{\ppp_i\lhd P_i}}\colon\Delta$ we have:
\begin{align}
&\forall i\in I\,.\,\Gamma\Vdash {\pp p}_i\lhd P_i\colon {\pp p}_i\colon T_i
&&\Delta = \bigcup_{i\in I}{\pp p}_i\colon T_i 
\label{eq:progr-delta}
  \\
&\partition\Delta(\Delta_1,\dots,\Delta_k)
&&\forall j\in \{1,\dots, k\}\,.\, \keyword{comp}(\Delta_j) \label{eq:progr-4}
\end{align}

Note that (\ref{eq:progr-stuck}) implies that forall $j\in \{1,\dots, k\}$, $\alpha$, and $\Delta'_j$ we have:
\begin{align} 
  \Delta_j\nlts\alpha\Delta'_j\label{eq:progr-stuck2}
\end{align}

By (\ref{eq:progr-stuck2}) and Definition~\ref{def:closure}, we obtain that for all $j\in \{1,\dots, k\}$, 
$\Omega$, and $D_j$:
\begin{align}
  {\tt stuck}_{\Omega, D_j}(\Delta_j)\label{eq:progr-stuck3}
\end{align}

From this we easily infer that for all $j\in \{1,\dots, k\}$, 
$\Omega$, and $D_j$: 
\begin{align}
  {\tt closure}_{\Omega, D_j}(\Delta_j)=\Delta_j\label{eq:progr-3}
  \end{align}

It turns out that, by Definition~\ref{def:compliance} and by (\ref{eq:progr-3}),
we can rewrite (\ref{eq:progr-4}) as 
\begin{align}
\forall j\in \{1,\dots, k\}\,.\, \Delta_j \text{ is not an error}
\label{eq:progr-5}
\end{align}  

By Definition~\ref{def:error}, $\Delta$ is \textbf{an error} whenever $\Delta$ 
is a communication mismatch \textbf{or} $\Delta$ is a deadlock.
Thus $\Delta$ is \textbf{not an error} when $\Delta$ is not a communication mismatch 
\textbf{and} $\Delta$ is not a deadlock.

By Definition~\ref{def:deadlock}, $\Delta$ is \textbf{a deadlock} whenever 
there not exists $\alpha, \Delta'$ such that 
$\Delta\lts\alpha\Delta'$ 
\textbf{and} $\neg\keyword{consumed}(\Delta)$.
Thus $\Delta$ is \textbf{not a deadlock} when there exists $\alpha, \Delta'$ such that 
$\Delta\lts\alpha\Delta'$  
\textbf{or} $\keyword{consumed}(\Delta)$.

We can weaken (\ref{eq:progr-5}) as follows, where we ignore the information that
$\Delta_k$ is not a communication mismatch, since we do not need it.

\begin{align}
  \forall j\in \{1,\dots, k\}\,.\, 
  (\exists \alpha, \Delta'\,.\, \Delta_j\lts\alpha\Delta') \vee  
  \keyword{consumed}(\Delta_j)
  \label{eq:progr-6}
  \end{align}  
  
Next, we show that  $\forall j\in\{1,\dots, k\}\,.\,  
\neg(\exists \alpha, \Delta'\,.\, \Delta_j\lts\alpha\Delta')$.
In fact, for all $\alpha,\Delta_j'$, if 
$\Delta_j\lts\alpha\Delta_j'$ then 
there is $\Delta'$ such that $\Delta\lts\alpha\Delta'$:
this contradicts (\ref{eq:progr-stuck}).

We can thus further simplify (\ref{eq:progr-6}):
\begin{align}
  \forall j\in \{1,\dots, k\}\,.\, \keyword{consumed}(\Delta_j)
  \label{eq:progr-7}
  \end{align} 
From  (\ref{eq:progr-7}) and (\ref{eq:progr-delta}) we deduce that 
\begin{align}
  \Delta = \bigcup_{i\in I}{\pp p}_i\colon \End\label{eq:progr-8}
\end{align}

An application of Lemma~\ref{lem:end-typing} to (\ref{eq:progr-8}) let us infer:

$$
\keyword{Ended}(\PARI{\ppp_i\lhd P_i}) \vee
\exists {\cal M}'\,.\, \PARI{\ppp_i\lhd P_i} \lts\tau{\cal M}'
$$
Since $\exists {\cal M}'\,.\, \PARI{\ppp_i\lhd P_i} \lts\tau{\cal M}'$ contradicts 
(\ref{eq:prog-1}), we deduce $\keyword{Ended}(\PARI{\ppp_i\lhd P_i})$, as required.
\qed

\section{Implementation}
\label{sec:why3}
This section contains material 
used in the deductive verification of the implementation of 
\emph{compliance} in OCaml/Cameleer\cite{PereiraR20} discussed in~\S~\ref{sec:implementation}.
In addition, in \S~\ref{sec:decidability} we show that \emph{type checking is decidable}.

For what concerns \emph{compliance}, in particular, 
we first present meta-code of the tasks implementing function \verb|cstep|,
or \emph{closure} of environments, and then the source code of fixed points. 
\begin{ocamlsc}
  let[@ghost] rec cstep (o : oracle) (d : typEnv list) 
    ((w : typEnvRedexes)[@ghost]) (delta: typEnv)
    ((history : typEnv list)[@ghost]):typEnv = $\cdots$
  \end{ocamlsc}

In the remainder of this section, we use the notation ${\cal H},\Delta$ to indicate that $\Delta$ 
is the last environment visited in the history ${\cal H}\cup\{\Delta\}$.
This is consistent with the implementation, where the history $\cal H$ is deployed
as a list of environments: that is, ${\cal H},\Delta$ is implemented as ${\cal H}@ [\Delta]$.

\subsection{Meta-code of the Implementation of Closure in~Definition~\ref{def:closure}}
\label{sec:cstep}
Function \verb|cstep| performs the \emph{tasks} of Figure~\ref{fig:tasks} by relying on the exception 
\verb|WrongBranch|, which  is raised when two branches have different labels.
Let $*\in\{!, ?\}$, $\dual !=?$, and $\dual ?=!$;
let $\Delta[\pp p\mapsto T]= \Delta\less{{\pp p}\colon\Delta({\pp p})}, \pp p\colon T$.

  Task 0 checks that the environment received in input is minimal, thus implementing 
  the minimal partition requirements in Figure~\ref{fig:typing} (Rule \textsc{T-Ses}).
  Tasks (1) and (2), when they succeed, implement the first two pre-conditions of 
rules \textsc{Se-Com-D}  and \textsc{Se-Rec-D} of Figure~\ref{fig:lts-types-det}.
In particular, rule \textsc{Se-Com-D} is matched when the oracle selects the communicating participants 
$\pp p, \pp q$, and we move to task (3), 
and rule \textsc{Se-Rec-D} is matched when the oracle selects a single participant 
doing an internal transition, and we move to task (4).
The cases whether (0), or (1), or (2) \emph{fail} are tackled in the next paragraph.

Task (3) does pattern matching on the types $T_{\pp p}$ and $T_{\pp q}$ of \ppp\ and  \ppq\ in $\Delta$, 
respectively, and 
implement the remaining pre-conditions of rule \textsc{Se-Com-D}, 
assuming a syntactic order of labels (i.e. the order in which they occur in types).

Task (3.a), when succeeds, matches the step  in the premises of rule \textsc{Se-Com-D},
that is:
\begin{align*}
   D\diamond\Delta, {\pp p}: T_{\pp p},  {\pp q}: T_{\pp q} &\alts{l@{\pp p}\bowtie{\pp q}} 
   D\less\Delta\diamond\Delta, {\pp p}: T'_{\pp p},  {\pp q}: T'_{\pp q}
   \blacktriangleright   \nabla^\circ \text{ or }
   \\
   D\diamond\Delta, {\pp p}: T_{\pp p},  {\pp q}: T_{\pp q} &\alts{l@{\pp q}\bowtie{\pp p}} 
D\less\Delta\diamond\Delta, {\pp p}: T'_{\pp p},  {\pp q}: T'_{\pp q}
\blacktriangleright   \nabla^\circ
\end{align*}   
The step is inferred by means of rule \textsc{Se-Com-A} of Figure~\ref{fig:alg-lts-types}:
the sum continuation is the environment placeholder $\nabla^\circ$ because
both types are singletons.

Task (3.b) first verifies that the types of \ppp\ and \ppq\ in $\Delta$ do not mismatch
and second uses exception handling to implement the sum continuation mechanism of  
Figure~\ref{fig:alg-lts-types}.
The predicate $\keyword{mismatch}_2(T_1, T_2)$ is asserted when 
the instantiation of the existential parameters in Definition~\ref{def:mismatch} 
with~\ppp\ and \ppq\ s.t. $\Delta(\ppp)=T_1$ and $\Delta(\ppq)=T_2$ is a tautology.
When the test fails, we raise \verb|WrongBranch| by passing the history ${\cal H},\Delta$:
the exception can be caught by the caller, as we explain below.

\begin{figure}[t]
  \begin{enumerate}
   \setcounter{enumi}{-1}
  \item If $\Delta$ is minimal then go to 1 else \emph{exit 0}
  \item If $\Delta\in D$  then go to 2 else \emph{exit 1}
  \item If $\Omega(\Delta)=(\ppp, \ppq)$ \\
  then go to 3 \\
  else if $\Omega(\Delta)=\ppp$ then go to 4 else \emph{exit 2}
  \item Match $\Delta(\ppp), \Delta(\ppq)$ with
  
  \begin{enumerate}
  \item $\ppq{*}l_1(S_1).T_1$, $\ppp{\dual *} l_2(S_2).T_2$:\\
  if 
  $l_1= l_2$ and $S_1 = S_2$ 
  \\
  then 
  ${\tt cstep}(\Omega,D\less\Delta, {\cal W}, \Delta[\ppq\mapsto T_1,\ppp\mapsto T_2], ({\cal H},\Delta))$ \\
  else 
  raise \verb|WrongBranch|$(({\cal H},\Delta))$
  
  \item $T_1 + T_2$, $T_3$: \\
  if not $\keyword {mismatch}_{2}(\Delta(\ppp), \Delta(\ppq))$
  \\
  then \emph{handle sum}\\
  else raise \verb|WrongBranch|$(({\cal H},\Delta))$
  
  \item $T_1$, $T_2 + T_3$: specular to (b)
  
    \item $\ppq{*}l_1(S_1).T_1$, $\ppp{*} l_2(S_2).T_2$: raise \verb|IncompatiblePair|
    \item $T_1, T_2$: raise \verb|OracleNotFair|
  \end{enumerate}
  \item Match $\Delta(\pp p)$ with
  \begin{enumerate}
  \item $\mu X.T$: 
  ${\tt cstep}(\Omega, D\less\Delta, {\cal W}, \Delta[\pp p\mapsto T\{\mu X.T/X\}], ({\cal H},\Delta))$
  \item $T$: raise \verb|OracleNotFair|
  \end{enumerate} 
  \end{enumerate}
  \caption{Implementation of closure}\label{fig:tasks}  
  \end{figure}

\begin{figure}[!h]
  try
     ${\tt cstep}(\Omega, D\less\Delta, {\cal W}, \Delta[\pp p\mapsto T_1], ({\cal H},\Delta))$
    with
    \\
\verb|WrongBranch|$(\_)$, \verb|Fixpoint|$(\_)$: 
      ${\tt cstep}(\Omega, D\less\Delta, {\cal W}, \Delta[\pp p\mapsto T_2], 
 ({\cal H},\Delta))$
  \caption{Exception handling}\label{fig:exceptions}  
  \end{figure}
The code of \emph{handle sum} is in Figure~\ref{fig:exceptions}.
To illustrate the exception handling mechanism, let 
$$
\Delta\eqdef {\ppp}: \ppq!l_1(S).T_1 + (\ppq!l_2(S).T_2+\ppq!l_3(S).T_3),  
{\ppq}: \ppp?l_2(S).T'_2
$$ 
and consider the synchronisation 
$$
\Omega\triangleright D\diamond \Delta\dlts{l_2@{\pp p}\bowtie{\pp q}} 
D\less\Delta\diamond{\pp p}: T_2,  {\pp q}: T'_2
\blacktriangleright   \Delta_1
$$
inferred by using rules \textsc{Se-Com-D} of Figure~\ref{fig:lts-types-det} and
\textsc{Se-Com-A} of Figure~\ref{fig:alg-lts-types}.

Invoking \verb|cstep| on the environment~$\Delta$ leads to  Task (3.b) 
of Figure~\ref{fig:tasks}, and in turn to the check  
$\keyword {mismatch}_{2}(\Delta(\ppp), \Delta(\ppq))$, which fails:
the code in Figure~\ref{fig:exceptions} try to execute \verb|cstep| by passing the environment
${\ppp}: \ppq!l_1(S).T_1,  
{\ppq}: \ppp?l_2(S).T'_2$.

This time, the check  
$\keyword {mismatch}_{2}(\ppq!l_1(S).T_1, \ppp?l_2(S).T'_2)$ goes through:
the code raises the exception \verb|Wrongbranch|, which is catched 
(cf.~Figure~\ref{fig:exceptions}) and leads to the call
of \verb|cstep| on the environment~$\ppp:\ppq!l_2(S).T_2+\ppq!l_3(S).T_3, \ppq:\ppp?l_2(S).T'_2$.

After verifying $\neg(\keyword {mismatch}_{2}(\ppp:\ppq!l_2(S).T_2+\ppq!l_3(S).T_3, \ppp?l_2(S).T'_2))$,
\verb|cstep| is invoked on $\ppp:\ppq!l_2(S).T_2, \ppq:\ppp?l_2(S).T'_2$ 
(cf. Figure~\ref{fig:exceptions}):
in this case Task (3.a) in Figure~\ref{fig:tasks} is performed and in turn 
we call \verb|cstep| by 
passing the environment $\ppp:T_2, \ppq:T'_2$, thus mimicking the labeled transition above.

Task (3.d) rejects immediately the environment by raising an uncaught exception,
because the participants selected by the oracle
have types with the same polarity.

Task (3.e)
is listed for completeness, but is never matched by 
the types of two participants returned by the oracle received in input, because 
in the function specification in Figure~\ref{fig:contract} we require that the oracle
 is fair (cf.~Definition~\ref{def:fair}).

Task (4) verifies that the selected participant \pp p has a recursive type:
if this is the case, then it recursively calls \verb|cstep| by unfolding the 
type of \pp p, otherwise it exits raising an uncaught exception (cf. Task (3.e)).

\myparagraph{Ghost parameters and exceptions}
Definition~\ref{def:exits} lists the positive exits 
corresponding to (1) termination and (2) raising the \emph{exception} \verb|Fixpoint|:
(1) returns a \emph{consumed} environment (cf.~Definition~\ref{def:deadlock}); 
(2) is raised when a fixed point is encountered and carries the history ending with the environment
occurring twice in the history.

The ghost parameter $\cal W$ represents the \emph{fixpoint} and is used to verify that the environments passed 
to the recursive calls of the function are contained in the fixed point of the initial 
environment: that is, if 
${\tt cstep}(\Omega, D, {\cal W}, \Delta, {\cal H})$ spawns a call
${\tt cstep}(\Omega, D', {\cal W}, \Delta', {\cal H}')$ then
$\Omega\triangleright D\diamond\Delta\lts{\alpha_1}\cdots \lts{\alpha_n}D'\diamond \Delta'$.
The ghost parameter $\cal H$ represents the \emph{history} and is used in conjunction with the parameter $D$ 
to partition the set of the combinations spawned from $\cal W$ (cf.~specification of \verb|cstep| in Figure~\ref{fig:contract}).

We can now analyse the exits of function \verb|cstep| referred in Figure~\ref{fig:tasks}:

\begin{description}
   \item[Exit  0]
   raise \verb|NotMinimal|$(({\cal H},\Delta))$
   
   \item[Exit  1]
If  $\Delta\in\cal H$ then raise \verb|Fixpoint|$(({\cal H},\Delta))$
 else raise \verb|DecrNotFixpoint|$(({\cal H},\Delta))$

 \item[Exit 2]
  If not $\keyword{consumed}(\Delta)$
  then raise \verb|DeadLock|
  else return $\Delta$
\end{description}

\emph{Exit 0} deals with the notion of minimal environments (Definition~\ref{def:partition}) used 
in rule \textsc{T-Ses} of the type system in Figure~\ref{fig:typing}:
in fact, the inclusion of the predicate in the body of the function allows for simplifying
the presentation of  \textsc{T-Ses} by removing the minimal partition assumption.
We will present the simplified system in a forthcoming journal publication.

\emph{Exit 1} deals with the fixed point mechanism and is used to verify 
that the environment $\Delta$ was actually previously visited in the history $\cal H$.
This case corresponds to a \emph{positive exit} (cf.~Definition~\ref{def:exits}), meaning 
that compliance holds. 
The else branch should never be taken when the parameter~$\cal W$ is a fixed point containing~$\Delta$.

\emph{Exit 2} occurs when 
the oracle receiving $\Delta$ does not find one or two participants: 
in this case the computation stops.
Still, it could stop (1) \emph{positively} or (2) \emph{negatively}:
\begin{enumerate}
\item $\Delta$ is consumed, that is all participants of $\Delta$ have type $\End$, and we return $\Delta$
and reach a \emph{positive exit} of Definition~\ref{def:exits} leading to asserting compliance
\item $\Delta$ is not consumed, and we raise the uncaught exception \verb|Deadlock|$(({\cal H},\Delta))$
\end{enumerate}

\subsection{Fixed Points in OCaml}\label{sec:fixpoints}
\begin{figure}[!h]
\ocamlfile{mfixpoint.ml} 
\caption{Fixed points in OCaml}\label{fig:fixpoint}
\end{figure}
   
The fixed point mechanism is deployed in Figure~\ref{fig:fixpoint};
the code is accepted by Cameleer~\cite{PereiraR20}, 
which in turn relies on~\cite{ChargueraudFLP19,FilliatreP13}.
The syntax includes  a logical part and a computational part.
We use the convention to use the {\color{dkblue}blue} color  to indicate the parts of the code
that concerns the {\color{dkblue}logical} specification of the program's behaviour.
The tool automatically verifies {\color{dkblue}behavioural specifications} using constraint solvers\cite{alt-ergo,MouraB08,cvc4},
while supporting imperative features, ghost parameters~\cite{FilliatreGP16},
and interactive proofs.

Specifically, the code in Figure~\ref{fig:fixpoint} implements the predicate \keyword{isFix}
used in the pre-condition 
$\keyword{isFix}({\cal W},\Delta,|D|*2)$ 
of Figure~\ref{fig:contract}.
We proceed by analysing the presented code.

The fixed point $\cal W$ received by \verb|cstep| (cf.~\S~\ref{sec:cstep}) has type \verb|typEnvRedexes|, which maps participants to type redexes, 
which in turn are collections of types.

Function \verb|subsT| find the occurrences of $X$ in $T$ and substitutes them with $\mu X.T$:
that is, \verb|subsT|$(t, x, t)$ is $T\{\mu X.T/X\}$.
We use function $\mathsterling$ to perform the key-operation $\mathsterling \ x\ T = T\{\mu X.T/X\}$.

Function \verb|produceRedexes| returns a fixed point of a type $T$ up-to a depth~$n$.
When $n$ is non-positive, the fixed point is empty.
Otherwise, when $n>0$, we include $T$ and the result of the call of 
\verb|produceRedexes| by passing the continuation of $T$ and $n-1$.
In particular, the continuation mechanism launches calls for both components $T_1$, $T_2$
  of a sum type $T_1 + T_2$,
and launches a call for $T\{\mu X.T/X\}$ when it encounters the type $\mu X.T$.

Predicate \verb|fixpoint| holds for redexes ${\cal T}\eqdef \{T_1,\dots, T_m\}$  and a type $T$ and an integer 
$n$ when \verb|produceRedexes|$(T, n)$ contains closed types included in $\cal T$.

Predicate \verb|mfixpoint| corresponds to \keyword{isFix} 
of Figure~\ref{fig:contract}: 
\verb|mfixpoint|$({\cal W},\Delta,n)$ holds whenever $\dom(\Delta)\subseteq\dom(\cal W)$
and for all $\ppp\in\dom(\Delta)$
s.t. $\Delta(\ppp)= T$ and ${\cal W}(\ppp)={\cal T}$ we have \verb|fixpoint|$({\cal T}, T, n)$.

\subsection{Decidability of Type Checking}\label{sec:decidability}
We verified that type checking is decidable by using the fixed point technique 
introduced in~\S~\ref{sec:fixpoints}.
\begin{ocamlsc}
let[@ghost] rec typesP (decr : typRedexes) ((w_t:  typRedexes)[@ghost])
  ((history : typRedexes)[@ghost]) (g : gEnv) (k : pEnv) (p : proc) 
  (t : typ) : bool = $\cdots$
(*@ m = typesP decr w history g k p t
    requires nodup decr
    requires List.mem t w
    requires fixpoint w t (List.length decr + 1)
    requires tdisjoint decr history
    requires tunion decr history w
    variant List.length decr
    raises DecrNotFixpointT -> true
    raises FixpointT t1 -> true
    raises WrongBranchT -> true *)
 \end{ocamlsc}
 Function \verb|typesP| implements the type system $\Gamma\vdash P\colon T$ of Figure~\ref{fig:typing}
 in the tool of~\cite{PereiraR20}.
 The verification of the behavioural specification in~\cite{PereiraR20} ensures 
 that the function terminates by means of the \keyword{variant} clause.
 The parameters $\cal G$ (denoted as \verb|g|) and 
 $\cal K$ (denoted as \verb|k|) are maps from variables to sorts, and from 
 process variables to types, respectively:
 their union corresponds to the type environment~$\Gamma$.

 The ghost parameters ${\cal W}_T$ (denoted as \verb|w_t|) and $\cal H$ 
 (denoted as \verb|history|)
 do not appear in Figure~\ref{fig:typing}
 and are used to reason on the fixed point mechanism;
 all code referring to ghost parameters should be erased after the proof effort~\cite{FilliatreGP16}.
 While the fixed point mechanism is similar to the one used in the function \verb|cstep|,
 we note that the function  \verb|typesP| considers fixed points and histories that are \emph{type redexes},
while the  function \verb|cstep|  consider environment redexes 
(cf.~\S~\ref{sec:code} and Figure~\ref{fig:fixpoint}).

 Type checking uses the set of types $D$ (denoted as \verb|decr|) as decreasing measure.
 Intuitively, at the beginning of the execution of $\Gamma\vdash P\colon T$ we have $D={\cal W}_T$, where
 ${\cal W}_T$ is a fixed point including all redexes of type $T$ up-to a depth $n$,
 and ${\cal H}=\emptyset$.
 
 For instance, to infer the judgement $\emptyset\vdash P^*_\ppa\colon T^*_\ppa$ discussed 
 in~\S~\ref{ex:auth}, we invoke 
 \begin{center}
\verb|typesP|$({\cal W}_{T^*_\ppa},{\cal W}_{T^*_\ppa},\emptyset,\emptyset,\emptyset,P^*_\ppa, T^*_\ppa)$
 \end{center}
 While we require $T^*_\ppa\in {\cal W}_{T^*_\ppa}$,
 depending on~$n$ we may have 
 $T^{**}_\ppa\in {\cal W}_{T^*_\ppa}$, $\cdots$, $T^{**,\cdots,*}_\ppa\in {\cal W}_{T^*_\ppa}$:
 that is, the natural number~$n$ indicates how many type redexes corresponding to the steps 
 $T_\ppa\lts{\alpha_1}\cdots \lts{\alpha_n}T'$ 
 of Figure~\ref{fig:lts-types}
 are included in 
 the fixed point.
 Among these, there are the unfolding of iso-recursive types $\mu X.T$ into $T\{\mu X.T/X\}$,
 which corresponds to the step $\mu X.T\lts\tau T\{\mu X.T/X\}$.

 If a type redex cannot be found in $D$, nor in the history $\cal H$, then the function raises 
 \verb|DecrNotFixpointT|, which indicates that the depth of the fixed point is too small.
 Otherwise, if  the type redex is not in $D$ but has already visited in the history $\cal H$,
 then we have encountered a fixed point: the function raises \verb|FixpointT| 
 providing a positive answer to type checking.

 The branching mechanism is similar to the one in~\S~\ref{sec:cstep}:
 in order to establish $\Gamma\vdash P\colon T_1+ T_2$  
  (cf. rules~\textsc{T-Sum-L}, \textsc{T-Sum-R}
 of Figure~\ref{fig:typing}), we try the typing $\Gamma\vdash P\colon T_1$ and  
 catch the exception \verb|WrongbranchT| by invoking $\Gamma\vdash P\colon T_2$. 

 \paragraph{Type Checking Sessions}
 The type system $\Gamma\Vdash\PARI{\pp p}_i\lhd 
 P_i\colon {\pp p}_1\colon T_1,\cdots, {\pp p}_n\colon T_n$ 
 of Figure~\ref{fig:typing} is deployed by the call of the boolean function
 \verb|types|:
 \begin{center}
 \verb|types|
 $ (\widetilde\Delta, {\cal W}, \Gamma, \PARI{\pp p}_i\lhd 
 P_i,
 ({\pp p}_1\colon T_1,\cdots, {\pp p}_n\colon T_n))$
 \end{center}
Assume ${\cal W}=\ppp_1\colon \widetilde T_1, \cdots, \ppp_n\colon \widetilde T_n$,
and let $\Gamma_x,\Gamma_\chi$ be the projections of $\Gamma$ to variables and process variables,
respectively.
Let $\widetilde \Delta$ be the set of environments containing all combinations spawned from $\cal W$ 
(cf. function \keyword{comb} in ~Figure~\ref{fig:contract}).

The function  \verb|types| is implemented as follows:
\begin{enumerate} 
\item use the call \verb|typesP|$(\widetilde{T}_i, \widetilde{T}_i,\Gamma_x,\Gamma_\chi,
P_i, T_i)$ to infer the judgments $\Gamma\vdash P_i\colon T_i$,
 for all $i\in\{1,\dots,n\}$;
 \item require $\wf(T_i)$, for all $i\in\{1,\dots,n\}$;
 \item calculate the minimal partition $\{\Delta_1,\dots,\Delta_k\}$ of 
 ${\pp p}_1\colon T_1,\cdots {\pp p}_n\colon T_n$;
\item use the call 
\verb|compliance|$(\Omega, \widetilde\Delta, {\cal W}, \Delta_j)$ 
of~\S~\ref{sec:code}
to
infer $\keyword{comp}(\Delta_j)$, forall $\Delta_j\in \{\Delta_1,\dots,$ $\Delta_k\}$.
\end{enumerate}
The calculation of the minimal partition  is obtained by 
computing
all partitions and by taking the one where all elements are minimal, 
according to Definition~\ref{def:partition}.
Termination is obtained by relying on the termination of steps 1--4.
Hence, function \verb|types| is decidable.
\qed
    
\section{Results Mechanised in Coq}
\label{sec:coq}
\begin{figure}[t]
  \begin{coq}
  Check typ_lts.
    typ_lts : typ -> action -> typ -> Prop
  
  Check typ_close.
    typ_close : typ -> typ -> Prop
  
  Definition contractive (t : typ) :=
    forall t',
      typ_close t t' ->
      (forall a t'', ~typ_lts t' a t'') ->
      t' = typ_end.
          
  Lemma contractive_mu  X:
    contractive (typ_mu X (typ_var X)) -> False.    
  
  Check types.
    types : typ_env -> proc_env -> process -> typ -> Prop   
    
  Lemma types_struct G H M1 M2 U:
    types_session G H M1 U ->
    proc_equiv M1 M2 ->
    types_session G H M2 U.  
      
  Lemma types_substP P Q T U G K X:
    ~In X (bvP P) ->
    closedP Q ->
    K X = Some T ->
    types G K P U ->
    types G (removeE X K) Q T -> 
    types G (removeE X K) (substP P Q X) U.
  \end{coq}
  \caption{Definitions and Lemmas mechanised in Coq}
  \label{fig:coq-results}
  \end{figure}

Figure~\ref{fig:coq-results} presents the mechanisation in Coq~\cite{coq}
of contractive types, of Lemma~\ref{lem:types-struct},
 and of Lemma~\ref{lem:substP}.
All elements are used in the proof of subject reduction in~App.~\ref{sec:sr}.

Lemma~\ref{lem:types-struct} is presented in~\S~\ref{sec:sr-sketch} and says that structural congruence is preserved by typing.

Lemma~\ref{lem:substP} is presented in~App.~\ref{sec:sr} and 
allows for substituting a process variable $\chi$ with a process $\mu \chi.P$, when the types agree.

We outline below a correspondence among the Coq types and the definitions in the paper.

\begin{itemize}
  \item The relation \verb|typ_lts| corresponds to the LTS of types of Figure~\ref{fig:lts-types}

  \item The relation \verb|typ_close| is the reflexive-transitive closure of \verb|typ_lts| 
  
  \item The predicate \verb|contractive| formalises  the one defined in~\S~\ref{sec:syntax}  

\item The predicate \verb|types G K P U| corresponds to the judgement 
$\Gamma,{\cal K}\vdash P\colon U$ in~\S~\ref{sec:typing},
where $\Gamma$ is used for variables and 
$\cal K$ is used for process variables

\item The relation \verb|proc_equiv| corresponds to $\equiv$ in~\S~\ref{sec:lts-process} 

\item Function \verb|substP : proc -> proc -> proc -> proc| corresponds to process substitution 
(cf.~\S~\ref{sec:syntax}) 

\item Function \verb|bvP| returns the bound process variables of process $P$ (cf.~\S~\ref{sec:syntax})

\item The predicate \verb|closedP| corresponds to notion of closed processes (cf.~\S~\ref{sec:syntax}).

\end{itemize}

\end{document}

\typeout{get arXiv to do 4 passes: Label(s) may have changed. Rerun}